\documentclass[runningheads]{llncs}

%% SQUISH
%\usepackage[margin=1in]{geometry}
%\usepackage{times}

\usepackage{hyperref}
\urlstyle{same}

\usepackage{amsmath}
\usepackage{graphicx}
\usepackage{subfigure}

\newcommand{\CCS}{cubie cluster}
\newcommand{\CCSs}{cubie clusters}
\newcommand{\CCSconfig}{cluster configuration}
\newcommand{\CCSconfigs}{cluster configurations}

\newcommand{\floor}[1]{\ensuremath{\lfloor #1 \rfloor}}
\newcommand{\group}[1]{\ensuremath{\{0, 1, \ldots, #1 - 1\}}}
\newcommand{\concept}[1]{\textit{#1}}
\newcommand{\degree}{\ensuremath{^\circ}}

\newcommand{\COMP}{\ensuremath{\circ}}

\newcommand{\SEQ}[1]{\textsc{Seq}(#1)}
\newcommand{\LS}[1]{\textsc{LSeq}(#1)}

\newcommand{\shortmoves}{\ensuremath{2^{2c_1c_2 + 8(c_1 + c_2)} - 1}}
\newcommand{\longmoves}{\ensuremath{(c_1c_2)! \cdot 2^{1 + 3c_1c_2 + 8(c_1 + c_2)}}}
\newcommand{\tourmoves}{\ensuremath{(c_1c_2)! \cdot 2^{1 + c_1c_2}}}

% Complex \xxx for making notes of things to do.  Use \xxx{...} for general
% notes, and \xxx[who]{...} if you want to blame someone in particular.
% Puts text in brackets and in bold font, and normally adds a marginpar
% with the text ``xxx'' so that it is easy to find.  On the other hand, if
% the comment is in a minipage, figure, or caption, the xxx goes in the text,
% because marginpars are not possible in these situations.
{\makeatletter
 \gdef\xxxmark{%
   \expandafter\ifx\csname @mpargs\endcsname\relax % in minipage?
     \expandafter\ifx\csname @captype\endcsname\relax % in figure/caption?
       \marginpar{xxx}% not in a caption or minipage, can use marginpar
     \else
       xxx % notice trailing space
     \fi
   \else
     xxx % notice trailing space
   \fi}
 \gdef\xxx{\@ifnextchar[\xxx@lab\xxx@nolab}
 \long\gdef\xxx@lab[#1]#2{\textbf{[\xxxmark #2 ---{\sc #1}]}}
 \long\gdef\xxx@nolab#1{\textbf{[\xxxmark #1]}}
 % This turns them off:
 %\long\gdef\xxx@lab[#1]#2{}\long\gdef\xxx@nolab#1{}%
}

%% Put figures and text together
%\def\textfraction{0.01}
%\def\topfraction{0.99}
%\def\dbltopfraction{0.99}
%\def\bottomfraction{0.99}
%\def\floatpagefraction{0.99}
%\def\dblfloatpagefraction{0.99}
%\def\dbltopnumber{100}

% For maintaining extended abstract and full versions of a paper.
% Usage: \ifabstract short text [\else long  text] \fi
%        \iffull     long  text [\else short text] \fi
% Uncomment the line ``\abstractfalse'' to enable the full version.
\newif\ifabstract
\abstracttrue
\abstractfalse
\newif\iffull
\ifabstract \fullfalse \else \fulltrue \fi

% Force \boldmath in \section etc. titles without upsetting hyperref.
% Warning: Redefines all instances of \bfseries to turn on \boldmath too.
\let\realbfseries=\bfseries
\def\bfseries{\realbfseries\boldmath}

% Magic appendices!
% \later{...} delays ... to be included later (e.g. use for proofs)
% \both{...} puts ... here and also delays it for later
%   (e.g. use for theorems you want restated before proofs)
% \magicappendix spits out all the delayed text in sequence.
% \both automatically saves the theorem counter in case of repeated theorems,
% but you need to modify it to save any more counters you want (see below).
% You can use \iflater ... \else ... \fi to do different things in the
% two uses of a \both argument.
\newcounter{theorem-preserve}
\newcounter{lemma-preserve}
\newcounter{definition-preserve}
\newcounter{corollary-preserve}
\newtoks\magicAppendix
\magicAppendix={}
\newtoks\magictoks
\newif\iflater
\laterfalse
\ifabstract
\long\def\later#1{\magictoks={#1}%
  \edef\magictodo{\noexpand\magicAppendix={\the\magicAppendix \par
    \the\magictoks}}%
  \magictodo}
\long\def\both#1{\magictoks={#1}%
  \edef\magictodo{\noexpand\magicAppendix={\the\magicAppendix \par
    \noexpand\setcounter{theorem-preserve}{\noexpand\arabic{theorem}}%
    \noexpand\setcounter{lemma-preserve}{\noexpand\arabic{lemma}}%
    \noexpand\setcounter{definition-preserve}{\noexpand\arabic{definition}}%
    \noexpand\setcounter{corollary-preserve}{\noexpand\arabic{corollary}}%
    \noexpand\setcounter{theorem}{\arabic{theorem}}%
    \noexpand\setcounter{lemma}{\arabic{lemma}}%
    \noexpand\setcounter{definition}{\arabic{definition}}%
    \noexpand\setcounter{corollary}{\arabic{corollary}}%
    % repeat the above line for all counters you want to preserve (lemma, etc.)
    \the\magictoks%
    \noexpand\setcounter{theorem}{\noexpand\arabic{theorem-preserve}}%
    \noexpand\setcounter{lemma}{\noexpand\arabic{lemma-preserve}}%
    \noexpand\setcounter{definition}{\noexpand\arabic{definition-preserve}}%
    \noexpand\setcounter{corollary}{\noexpand\arabic{corollary-preserve}}}}%
  \magictodo
  \the\magictoks}
\else
\long\def\later#1{#1}
\long\def\both#1{#1}
\fi
\def\magicappendix{\latertrue \the\magicAppendix}

% Macro to force a line break between authors in \author.
% Note that this definition has to come before \maketitle, not necc. \author.

\begin{document}

\title{Algorithms for Solving Rubik's Cubes}
% and Variants}

\author{%
  Erik D. Demaine\inst{1}%
\and
  Martin L. Demaine\inst{1}%
\and
  Sarah Eisenstat\inst{1}%
\and \\
  Anna Lubiw\inst{2}%
\and
  Andrew Winslow\inst{3}%
}

\authorrunning{E. D. Demaine \and M. L. Demaine \and S. Eisenstat \and A. Lubiw \and A. Winslow}

\institute{
            MIT Computer Science and Artificial Intelligence Laboratory, \\
            Cambridge, MA 02139, USA,
            \protect\url{{edemaine,mdemaine,seisenst}@mit.edu}
            \and
            David R. Cheriton School of Computer Science, \\
            University of Waterloo, Waterloo, Ontario N2L 3G1, Canada,
            \protect\url{alubiw@uwaterloo.ca}
            \and
            Department of Computer Science, Tufts University, \\
            Medford, MA 02155, USA,
            \protect\url{awinslow@cs.tufts.edu}
}
 
%%%%%%%%%%%%%%%%%%%%%%%%%%%%%%%%%%%%%%%%%%%%%%%%%%%%%%%%%%%%%%%%%%%%%%%%%%%

\maketitle

\begin{abstract}
  The Rubik's Cube is perhaps the world's most famous and iconic puzzle,
  well-known to have a rich underlying mathematical structure (group theory).
  In this paper, we show that the Rubik's Cube also has a rich underlying
  algorithmic structure.  Specifically, we show that the $n \times n \times n$
  Rubik's Cube, as well as the $n \times n \times 1$ variant, has a ``God's
  Number'' (diameter of the configuration space) of $\Theta(n^2/\log n)$.
  The upper bound comes from effectively parallelizing standard
  $\Theta(n^2)$ solution algorithms, while the lower bound follows from
  a counting argument.  The upper bound gives an asymptotically optimal
  algorithm for solving a general Rubik's Cube in the worst case.
  Given a specific starting state, we show how to find the shortest solution
  in an $n \times O(1) \times O(1)$ Rubik's Cube.
  Finally, we show that finding this optimal solution becomes NP-hard
  in an $n \times n \times 1$ Rubik's Cube when the positions and colors
  of some of the cubies are ignored (not used in determining whether the
  cube is solved).

  \keywords{
    combinatorial puzzles,
    diameter,
    God's number,
    combinatorial optimization
  }
\end{abstract}

%\setcounter{page}0
%\thispagestyle{empty}
%\clearpage

%%%%%%%%%%%%%%%%%%%%%%%%%%%%%%%%%%%%%%%%%%%%%%%%%%%%%%%%%%%%%%%%%%%%%%%%

\section{Introduction}

A little over thirty years ago, Hungarian architecture professor
Ern\H{o} Rubik released his ``Magic Cube'' to the world.\footnote{Similar puzzles were invented around the same time in the United States~\cite{Gustafson-1963}\cite{Nichols-1972}, the United Kingdom~\cite{Fox-1974}, and Japan~\cite{Ishige-1976} but did not reach the same level of success.}
What we now all know as the \emph{Rubik's Cube} quickly became a sensation
\cite{Slocum-2009}.
It is the best-selling puzzle ever, at over 350 million units
\cite{Rubiks2010}.
It is a tribute to elegant design, being part of the permanent collection
of the Museum of Modern Art in New York \cite{MoMA}.
It is an icon for difficult puzzles---an intellectual Mount Everest.
It is the heart of World Cube Association's speed-cubing competitions,
whose current record holders can solve a cube in under 7 seconds
(or 31 seconds blindfold) \cite{WCA}.
It is the basis for cube art, a form of pop art made from many carefully
unsolved Rubik's Cubes.
(For example, the recent movie \emph{Exit Through the Gift Shop}
features the street cube artist known as Space Invader.)
It is the bane of many computers, which spent about 35 CPU years determining
in 2010 that the best algorithm to solve the worst configuration
requires exactly 20 moves---referred to as \emph{God's Number} \cite{cube20}.
%The Rubik's Cube is known to just about everyone in all countries and cultures.

To a mathematician, or a student taking abstract algebra,
the Rubik's Cube is a shining example of group theory.
The configurations of the Rubik's Cube, or equivalently the transformations
from one configuration to another, form a subgroup of a permutation group,
generated by the basic twist moves.
This perspective makes it easier to prove (and compute) that the configuration
space falls into two connected components, according to the parity of the
permutation on the cubies (the individual subcubes that make up the puzzle).
See \cite{Furst-Hopcroft-Luks-1980}
%\cite{Seress-2003}
for how to compute the number of elements in the group generated
by the basic Rubik's Cube moves (or any set of permutations)
in polynomial time.

To a theoretical computer scientist, the Rubik's Cube and its many
generalizations suggest several natural open problems.
What are good algorithms for solving a given Rubik's Cube puzzle?
What is an optimal worst-case bound on the number of moves?
What is the complexity of optimizing the number of moves required for a
given starting configuration?
Although God's Number is known to be 20 for the $3 \times 3 \times 3$,
the optimal solution of each configuration in this constant-size puzzle
still has not been computed \cite{cube20}; even writing down the first move
in each solution would take about 8 exabytes (after factoring out symmetries).
% cube20 says they reduce 43252003274489856000 = 2217093120*19508428800
% down to 55882296*19508428800 = 1090175792696524800 using symmetry
% and a solution to some set-cover problem...
% lg 43252003274489856000 = 59.919266498756798, call it 60.
% 1090175792696524800*60/8 = 8,176,318,445,223,936,000
While computing the exact behavior for larger cubes is out of the question,
how does the worst-case number of moves and complexity scale with the
side lengths of the cube?
In parallel with our work, these questions were recently posed by Andy Drucker
and Jeff Erickson \cite{cstheory}.
Scalability is important given the commercially available
$4 \times 4 \times 4$ Rubik's Revenge \cite{Sebesteny-1982};
$5 \times 5 \times 5$ Professor's Cube \cite{Krell-1986};
the $6 \times 6 \times 6$ and $7 \times 7 \times 7$ V-CUBEs \cite{VCUBE}
whose design enables cubes as large as $11
\times 11 \times 11$ according to Verdes's design patent \cite{Verdes-2007};
Leslie Le's 3D-printed $12 \times 12 \times 12$ \cite{12x12x12}; and
Oskar van Deventer's
$17 \times 17 \times 17$ Over the Top
and his $2 \times 2 \times 20$ Overlap Cube,
both available from 3D printer \emph{shapeways}~\cite{2x2x20}.

\ifabstract\vspace{-2ex}\fi

\paragraph{Diameter / God's Number.}
The diameter of the configuration space of a Rubik's Cube seems difficult to
capture using just group theory.  In general, a set of permutations (moves)
can generate a group with superpolynomial diameter \cite{Driscoll-Furst-1983}.
If we restrict each generator (move) to manipulate only $k$ elements,
then the diameter is $O(n^k)$ \cite{McKenzie-1984}, but this gives very
weak (superexponential) upper bounds for $n \times n \times n$
and $n \times n \times 1$ Rubik's Cubes.

Fortunately, we show that the general approach taken by folk algorithms
for solving Rubik's Cubes of various fixed sizes can be generalized to
perform a constant number of moves per cubie, for an upper bound of $O(n^2)$.
This result is essentially standard, but we take care to ensure that all
cases can indeed be handled.

Surprisingly, this bound is not optimal.  Each twist move in the
$n \times n \times n$ and $n \times n \times 1$ Rubik's Cubes
simultaneously transforms $n^{\Theta(1)}$ cubies (with the exponent depending
on the dimensions and whether a move transforms a plane or a half-space).
This property offers a form of parallelism for solving multiple cubies at once,
to the extent that multiple cubies want the same move to be applied at a
particular time.  We show that this parallelism can be exploited to reduce
the number of moves by a logarithmic factor, to $O(n^2 / \log n)$.
Furthermore, an easy counting argument shows an average-case lower bound of
$\Omega(n^2 / \log n)$, because the number of configurations is
$2^{\Theta(n^2)}$ and there are $O(n)$ possible moves from each
configuration.

Thus we settle the diameter of the $n \times n \times n$ and
$n \times n \times 1$ Rubik's Cubes, up to constant factors.
These results are described in Sections~\ref{nxnxn} and~\ref{nxn},
respectively.

\ifabstract\vspace{-1ex}\fi

\paragraph{$n^2-1$ puzzle.}
Another puzzle that can be described as a permutation group given by
generators corresponding to valid moves is the $n \times n$ generalization of
the classic Fifteen Puzzle.
This \emph{$n^2-1$ puzzle} also has polynomial diameter, though lacking any
form of parallelism, the diameter is simply $\Theta(n^3)$ \cite{Parberry-1995}.
Interestingly, however, computing the shortest solution
from a given configuration of the puzzle is NP-complete
%though there is a polynomial-time constant-factor approximation algorithm
\cite{Ratner-Warmuth-1990}.
%Both the generalized Fifteen Puzzle and generalized Rubik's Cubes
%are examples of groups generated by the moves in the puzzle.
More generally, given a set of generator permutations, it is PSPACE-complete
to find the shortest sequence of generators whose product is a given target
permutation \cite{Even-Goldreich-1981,Jerrum-1985}.
These papers mention the Rubik's Cube as motivation, but neither
address the natural question: is it NP-complete to solve a given
$n \times n \times n$ or $n \times n \times 1$ Rubik's Cube
using the fewest possible moves?
Although the $n \times n \times n$ problem was posed as early as
1984~\cite{Cook-1984,Ratner-Warmuth-1990}, 
both questions remain open~\cite{Kendall-2008}.
%Stephen Cook notes that solving a Rubik's cube is a prime example of a  
We give partial progress toward hardness, as well as a polynomial-time
exact algorithm for a particular generalization of the Rubik's Cube.

%%%% WADS referee says:
%But even earlier, Steve Cook used the Rubik's cube as his first example of a problem that is easier to verify than to solve.  [Stephen A. Cook.  Can Computers Routinely Discover Mathematical Proofs?  Proceedings of the American Philosophical Society 128(1):40–43, 1984.]

\ifabstract\vspace{-1ex}\fi

\paragraph{Optimization algorithms.}
We give one positive and one negative result about finding the shortest
solution from a given configuration of a generalized Rubik's Cube puzzle.
On the positive side, we show in Section~\ref{DP}
how to compute the exact optimum for
$n \times O(1) \times O(1)$ Rubik's Cubes.
Essentially, we prove structural results about how an optimal solution
decomposes into moves in the long dimension and the two short dimensions,
and use this structure to obtain a dynamic program.
This result may prove useful for optimally solving configurations of
Oskar van Deventer's $2 \times 2 \times 20$ Overlap Cube \cite{2x2x20},
but it does not apply to the $3 \times 3 \times 3$ Rubik's Cube
because we need $n$ to be distinct from the other two side lengths.
On the negative side, we prove in Section~\ref{hardness}
that it is NP-hard to find an optimal solution to a subset of cubies
in an $n \times n \times 1$ Rubik's Cube.  Phrased differently, optimally
solving a given $n \times n \times 1$ Rubik's Cube configuration is
NP-hard when the colors and positions of some cubies are ignored
(i.e., they are not considered in determining whether the cube is solved).

%best selling/famous puzzle ever
%30th anniversary
%diameter of 3x3x3 just found
%2x2x2 an exercise
%
%well known to have rich underlying mathematics
%
%we show it has rich underlying algorithms
%
%results

%%%%%%%%%%%%%%%%%%%%%%%%%%%%%%%%%%%%%%%%%%%%%%%%%%%%%%%%%%%%%%%%%%%%%%%%

\section{Common Definitions}

We begin with some terminology.
An $\ell \times m \times n$ Rubik's Cube
is composed of $\ell m n$ \concept{cubies},
each of which has some position $(x, y, z)$,
where $x \in \group{\ell}$,
$y \in \group{m}$,
and $z \in \group{n}$.
Each cubie also has an orientation.
%indicating the up and left vectors
%for the cubie;
%this is not useful if all faces are identical,
%but becomes much more useful
%when we consider color.
Each cubie in a Rubik's Cube
has a color on each visible face.
There are six colors in total.
We say that a Rubik's Cube is \concept{solved}
when each face of the cube
is the same color,
unique for each face.

\ifabstract
An \concept{edge cubie}
is any cubie which has at least two visible faces
which point in perpendicular directions.
A \concept{corner cubie}
is any cubie which has at least three visible faces
which all point in perpendicular directions.
\fi

A \concept{slice} of a Rubik's Cube
is a set of cubies that match in one coordinate
(e.g. all of the cubies such that $y = 1$).
A legal move on a Rubik's Cube
involves rotating one slice
around its perpendicular\footnote{While other reasonable definitions of a legal move exist (e.g. rotating a set of contiguous parallel slices), this move definition most closely matches the definition used in popular move notations.}.
In order to preserve the shape of the cube,
there are restrictions on how much
the slice can be rotated.
If the slice to be rotated is a square,
then the slice can be rotated
$90 \degree$ in either direction.
Otherwise,
the slice can only be rotated
by $180 \degree$.
%For example, an $n \times n \times n$ Rubik's Cube
%allows all slices to be rotated any multiple of $90\degree$.
%However, a $c_1 \times c_2 \times n$ Rubik's Cube
%where $c_1$, $c_2$, and $n$ are all distinct
%only allows $180 \degree$ rotations.
Finally,
note that if one dimension of the cube
has length $1$,
we disallow rotations of the only slice in that dimension.
For example, we cannot rotate the slice $z = 0$
in the $n \times n \times 1$ cube.

A \concept{configuration} of a Rubik's Cube
is a mapping from each visible face of each cubie
to a color.
A \concept{reachable configuration} of a Rubik's Cube
is a configuration which can be reached
from a solved Rubik's Cube
via a sequence of legal moves.

For each of the Rubik's Cube variants we consider,
we will define the contents of a \concept{\CCS{}}.
The cubies which belong in this \CCS{}
depend on the problem we are working on;
however, they do share some key properties:
\begin{enumerate}

\item Each \CCS{} consists of a constant number of cubies.

\item No sequence of legal moves
can cause any cubie
to move from one \CCS{}
into another.

\end{enumerate}
Each \CCS{} has a \concept{\CCSconfig{}}
mapping from each visible face of the \CCS{}
to its color.
Because the number of cubies in a \CCS{} is constant,
the number of possible \CCSconfigs{}
is also constant.
%In addition,
%the \CCSs{} are defined
%to be relatively independent,
%so that each one can be solved
%without affecting other \CCSs{}.

We say that a move \concept{affects} a \CCS{}
if the move causes at least one cubie in the \CCS{}
to change places.
Similarly,
we say that a sequence of moves
affects a \CCS{}
if  at least one cubie in the \CCS{}
will change position or orientation
after the sequence of moves
has been performed.

%%%%%%%%%%%%%%%%%%%%%%%%%%%%%%%%%%%%%%%%%%%%%%%%%%%%%%%%%%%%%%%%%%%%%%%%

\section{Diameter of $n \times n \times 1$ Rubik's Cube}
\label{nxn}

\ifabstract
\later{\section{Diameter Details --- $n \times n \times 1$}}
\fi

\newcommand{\up}{red}
\newcommand{\down}{blue}
\newcommand{\Up}{Red}
\newcommand{\Down}{Blue}

When considering
an $n \times n \times 1$ Rubik's Cube
we omit the third coordinate of a cubie,
which by necessity must be $0$.
For simplicity,
we restrict the set of solutions
to those configurations where
the top of the cube is \up{}
and the bottom of the cube is \down{}.
\ifabstract
In addition, we assume that $n$ is even,
and ignore the edge and corner cubies.
A more rigorous proof,
which handles these details,
is available in the appendix.
\fi

Consider the set of locations that
a cubie at position $(x, y)$ can reach.
If we flip column $x$, the cubie will move
to position $(x, n - y - 1)$.
If we instead flip row $y$, the cubie will move
to position $(n - x - 1, y)$.
Consequently, there are at most four reachable
locations for a cubie that starts at $(x, y)$:
$(x, y)$, $(x, n - y - 1)$, $(n - x - 1, y)$, and $(n - x - 1, n - y - 1)$.
We call this set of locations
the \CCS{} $(x, y)$.

\later{
If a cubie is in the first or last row or column, we call it an \emph{edge cubie}.
If a cluster contains an edge cubie,
then we call it an \emph{edge cluster},
because all of its cubies are edge cubies.
The special edge cluster
which contains the cubie in the first row and first column
is called the \emph{corner cluster},
and its cubies are \emph{corner cubies}.
If $n$ is odd, then there is a cluster with one cubie
which is in both the median row and the median column.
We will call this cubie
the \emph{center cubie} or \emph{center cluster}.
In addition, if $n$ is odd then
there are also clusters with two cubies
found in the median row or column.
We call the clusters \emph{cross clusters}
and the cubies in them \emph{cross cubies}.
}

We begin by showing that
for any reachable cluster configuration,
there exists a sequence of moves
of constant length
which can be used to solve that cluster
without affecting any other clusters.
\ifabstract
Figure~\ref{figure:moves} gives just such a sequence
for each potential cluster configuration.
\fi

In the remainder of Section~\ref{nxn}, we use the notation $H_1, H_2$ and
$V_1, V_2$ to denote the two rows and columns containing cubies from a single
cubie cluster.
We also use the same symbols to denote single moves affecting these rows and columns.
Recall that for a $n \times n \times 1$ cube there is only
one valid non-identity operation on a row or column: rotation by
$180^{\circ}$.  In the special cases of cross and center cubie clusters,
we denote the single row or column containing the cluster by $H$ or $V$,
respectively. 

\iffull
We begin by proving the following lemma
about cubie cluster configurations.
\fi
\later{
\begin{lemma}
\label{lemma:nxnx1-group-config}
In a solvable $n \times n \times 1$ Rubik's Cube,
the colors on the top faces of the cubies in a 
cubie cluster can only be in the six configurations
in Fig.~\ref{figure:moves}.
\end{lemma}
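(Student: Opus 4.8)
\smallskip\noindent\emph{Proof plan.}
The plan is to reduce the statement to a finite reachability check on a single \CCS{}, starting with the generic case in which the \CCS{} has four cubies, at $(x,y)$, $(x,n-y-1)$, $(n-x-1,y)$, and $(n-x-1,n-y-1)$. First I would determine which moves can affect this \CCS{}: only the flips $V_1,V_2$ of the two columns $x$ and $n-x-1$, and the flips $H_1,H_2$ of the two rows $y$ and $n-y-1$; every other move fixes all four of its cubies. The key observation is that each legal move here is a $180\degree$ rotation of a row or column, i.e.\ a half-turn about an axis parallel to the two $n\times n$ faces, so it turns each moved cubie over: the cubie's top face becomes its bottom face and vice versa, and no color ever passes between the top/bottom faces and the side faces. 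Hence the top face of every cubie is always \up{} or \down{}, the top-face part of a \CCSconfig{} is one bit per \CCS{} position, and each of $V_1,V_2,H_1,H_2$ acts on this four-bit vector by transposing the two \CCS{} positions sharing the flipped row or column and complementing both of their bits.

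Next I would carry out the reachability computation. Record a top-face \CCSconfig{} by the set $S$ of \CCS{} positions whose top face is \down{}; the solved cube gives $S=\emptyset$. Let $\mathcal F$ be the family of six sets consisting of $\emptyset$, the set of all four positions, and the four sets each of which is a full row or a full column of the \CCS{}. A direct check shows $\mathcal F$ is closed under each of $V_1,V_2,H_1,H_2$: for instance $V_1$ fixes the two sets that are full rows of the \CCS{} and induces the two transpositions $\emptyset\leftrightarrow(\text{its own column})$ and $(\text{the other column})\leftrightarrow(\text{all four positions})$, and $V_2,H_1,H_2$ behave the same way after interchanging the roles of rows and columns. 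Since the solved \CCSconfig{} $\emptyset$ lies in $\mathcal F$, and any reachable \CCSconfig{} arises from it by a sequence of moves (a move not affecting the \CCS{} leaving its \CCSconfig{} unchanged), every reachable top-face \CCSconfig{} lies in $\mathcal F$; the six members of $\mathcal F$ are exactly the six configurations shown in Fig.~\ref{figure:moves}. (Conversely each of the six does occur, as witnessed by the short solving sequences in that figure, though this direction is not needed for the lemma.)

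Finally I would dispatch the remaining \CCS{} types. Edge \CCSs{} and the corner \CCS{} again consist of four cubies, have the same four affecting moves, and have identical top-face dynamics---only their side colors behave differently, and those play no role here---so the argument above transfers verbatim. A cross \CCS{} has two cubies in the median row or column; its affecting moves are the single flip $H$ (which transposes its two cubies and complements both bits) together with the two perpendicular flips, each of which turns one cubie over in place, and an analogous two-bit closure check handles it. The center \CCS{} is a single cubie whose only affecting moves turn it over in place. I expect the one error-prone step to be the geometric bookkeeping in the first paragraph---verifying slice by slice which rotations are legal and exactly which face of a moved cubie ends up on top---after which everything is a routine finite verification with no real obstacle.
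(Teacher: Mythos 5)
Your proposal is correct and follows essentially the same route as the paper's proof: you isolate the four moves that can affect a cluster, observe that each such move swaps two cubies in the flipped row/column while complementing both top-face colors, and then do a finite reachability check showing the six configurations of Fig.~\ref{figure:moves} are closed under these moves. The only cosmetic difference is that you verify invariance of the six-element family directly (plus some extra remarks on the degenerate edge/corner/cross/center clusters), whereas the paper enumerates the transitions outward from the solved configuration; the substance is identical.
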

}

\later{
\begin{proof}
Consider what happens to a \CCS{}
when a move is performed.
If the move does not affect the \CCS{},
then its cubie configuration will not change.
Otherwise,
the move will swap two cubies
in the same row or column
while reversing the color of each cubie.
If both cubies are the same color,
then both cubies will become the other color.
In other words, if one cubie is \up{}
and the other is \down{},
then the color configuration will not change.

Figure \ref{figure:all-red} shows the solved configuration,
which is quite obviously reachable.
The four moves which affect this \CCS{}
result in configurations
\ref{figure:blue-left}, \ref{figure:blue-right},
\ref{figure:blue-up}, and \ref{figure:blue-down}.
Consider how the four possible moves affect
each of configurations
\ref{figure:blue-left}, \ref{figure:blue-right},
\ref{figure:blue-up}, and \ref{figure:blue-down}.
For each configuration,
two of the four possible moves
involve one \up{} cubie and one \down{} cubie,
and therefore do not affect the colors.
In addition,
one move for each configuration
is the inverse of the move used to reach that configuration,
and therefore leads back to configuration \ref{figure:all-red}.
The final move for each configuration
results in configuration \ref{figure:all-blue},
thus completing the set of reachable configurations.
\qed
\end{proof}
}

\ifabstract
\setlength{\unitlength}{0.008in}
\newcommand{\figuremoves}[1]{
	\hspace{0.11in}
	\begin{picture}(120, 100)
	\put(10, -10){\includegraphics[width=100\unitlength]{#1}}
	\put(0, 50){$H_1$}
	\put(0, 27){$H_2$}
	\put(33, 84){$V_1$}
	\put(76, 84){$V_2$}
	\end{picture}	
	\hspace{0.11in}
}
\else
\setlength{\unitlength}{0.01in}
\newcommand{\figuremoves}[1]{
	\hspace{0.05in}
	\begin{picture}(120, 100)
	\put(10, -10){\includegraphics[width=100\unitlength]{#1}}
	\put(3, 50){$H_1$}
	\put(3, 27){$H_2$}
	\put(33, 85){$V_1$}
	\put(76, 85){$V_2$}
	\end{picture}	
	\hspace{0.05in}
}
\fi

\begin{figure}
\centering
\subfigure[Solved.]{
	\figuremoves{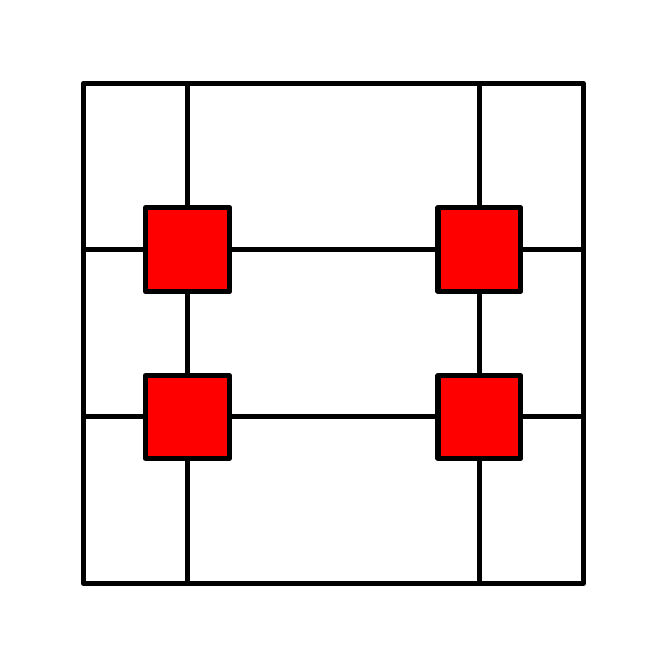}
	\label{figure:all-red}
}
\subfigure[$V_1, H_1, V_1, H_1.$]{
	\figuremoves{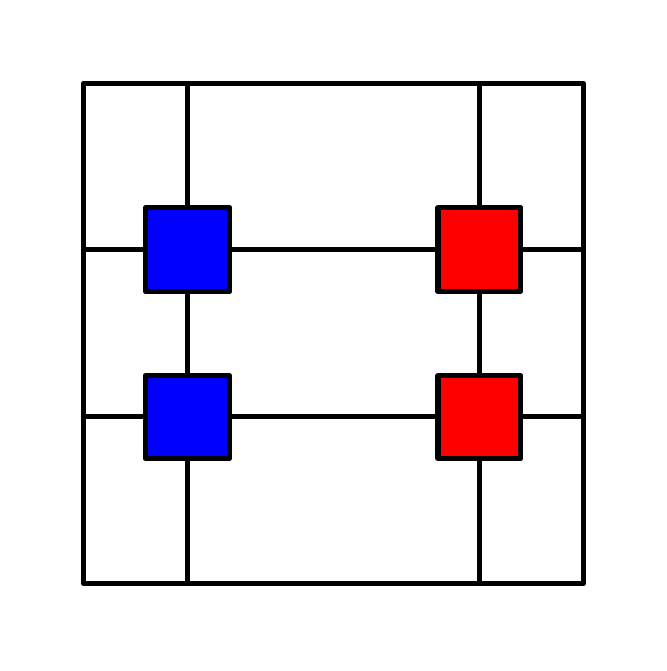}
	\label{figure:blue-left}
}
\subfigure[$V_2, H_1, V_2, H_1.$]{
	\figuremoves{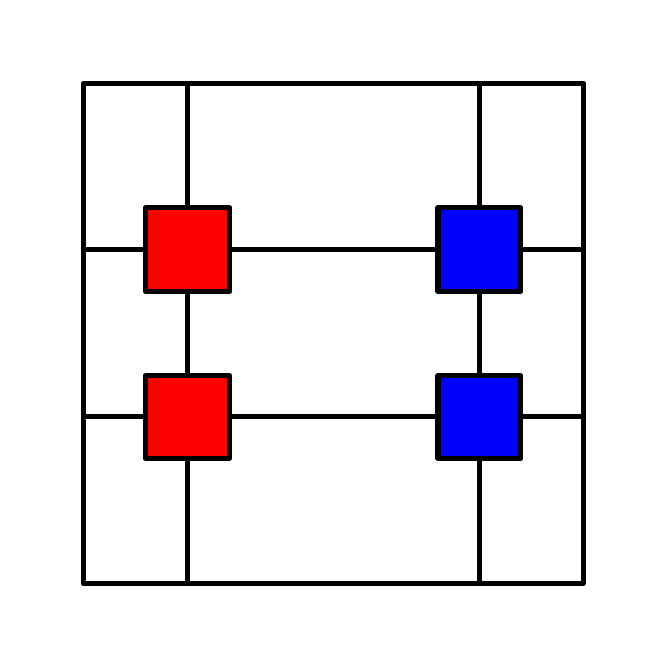}
	\label{figure:blue-right}
} \\
\subfigure[$H_1, V_1, H_1, V_1.$]{
	\figuremoves{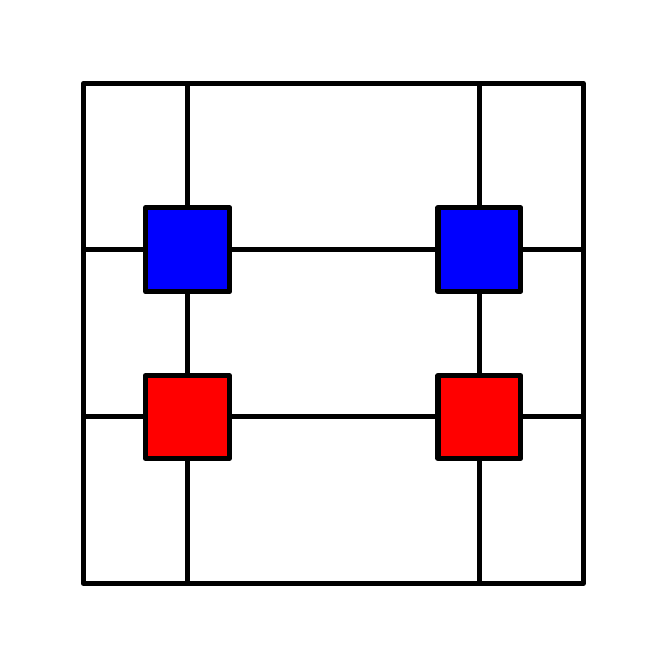}
	\label{figure:blue-up}
}
\subfigure[$H_2, V_1, H_2, V_1.$]{
	\figuremoves{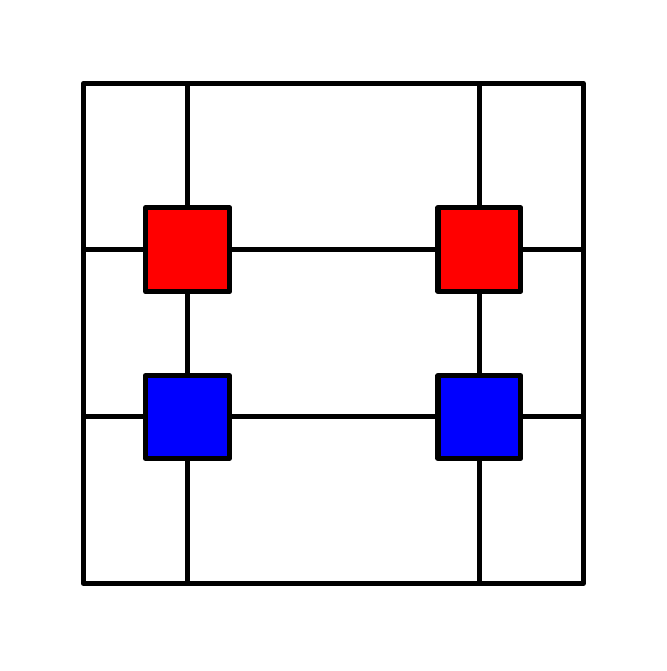}
	\label{figure:blue-down}
}
\subfigure[$H_1, H_2, V_1, H_1, H_2, V_1.$]{
	\figuremoves{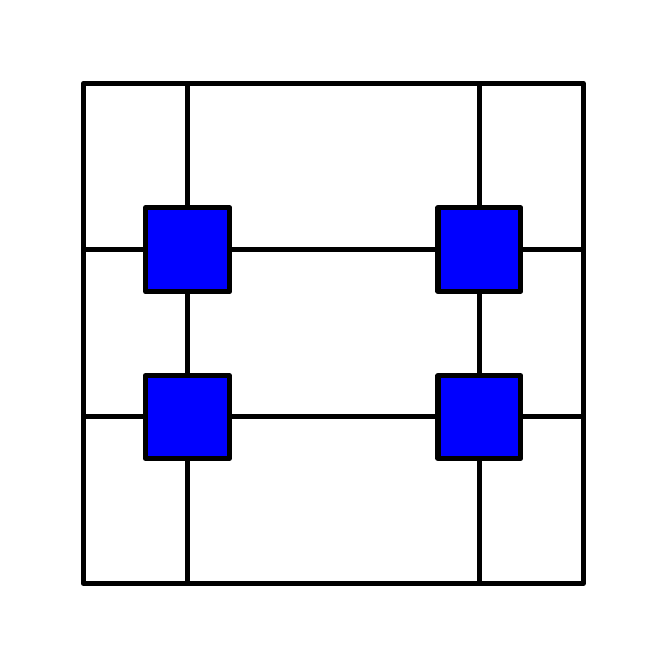}
	\label{figure:all-blue}
}
\caption{The reachable \CCSconfigs{}
and the move sequences to solve them.}
\label{figure:moves}
\end{figure}

\iffull
Because of these limitations on cubie cluster configurations,
we can prove the following lemma.
\fi

\later{
\begin{lemma}
\label{lemma:nxnx1-group-reachable}
All six \CCSconfigs{} from Lemma \ref{lemma:nxnx1-group-config}
can be solved using a sequence of at most six moves
that does not affect the position
of any cubies not in the \CCS{}.  (See Fig.~\ref{figure:moves}.)
\end{lemma}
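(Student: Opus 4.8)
The plan is to prove the lemma constructively, by verifying the six move sequences listed in Figure~\ref{figure:moves}. Each is a word $w$ in the symbols $H_1, H_2, V_1, V_2$, and inspection of the figure shows that every such word has length at most six (only the all-blue configuration needs six moves; the others need at most four). For the word attached to a given configuration I must check two things: (i) applying $w$ to that configuration yields the solved configuration; and (ii) $w$ returns every cubie outside the cluster to its starting position.

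Part (i) is a routine finite check. Lemma~\ref{lemma:nxnx1-group-config} leaves only six configurations to consider, and the effect of a single move on a cluster is simple: a $180^{\circ}$ rotation $H_i$ (resp.\ $V_j$) swaps the two cluster cubies lying in that row (resp.\ column) and flips the top-face color of each. Tracking the four top-face colors step by step along each of the six words confirms that each ends at the solved configuration; I would record this as a small table rather than in prose.

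Part (ii) carries the real content and rests on one structural observation: the rows $H_1, H_2$ and the columns $V_1, V_2$ are exactly the two rows and two columns meeting the given cluster, and any \emph{other} cluster --- being determined by a pair of rows and a pair of columns --- can share at most one of these two pairs with the given cluster, since sharing both would make it the same cluster and two pairs of the form $\{k, n-k-1\}$ are either equal or disjoint. Fix such an other cluster, say one meeting rows $H_1$ and $H_2$ but neither column $V_1$ nor $V_2$. Then each $V_j$ in $w$ acts as the identity on it, so only the $H_i$ moves can displace its cubies; and from the six words one sees that each symbol occurs an \emph{even} number of times, while $H_1$ and $H_2$ act on disjoint sets of cubies and hence commute. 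Therefore the restriction of $w$ to this cluster is a product of commuting involutions, each used an even number of times --- the identity. The symmetric argument treats a cluster meeting $V_1$ and $V_2$, and a cluster meeting none of the four lines is untouched outright. This gives (ii), and in fact shows $w$ acts as the identity group element on each outside cluster, so orientations are restored along with positions.

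I expect the main obstacle to be not the generic four-cubie clusters (the argument above applies to the ordinary interior clusters and, essentially unchanged, to the boundary edge and corner clusters, since it uses only the combinatorics of the row/column pairs) but the degenerate clusters that occur when $n$ is odd: the two-cubie cross clusters lying in the median row or column, and the single-cubie center cluster. For these a row-pair or column-pair collapses to a single line, the set of reachable configurations is smaller and different, and the counting of even occurrences must be redone with the appropriate shorter words ($H$ in place of $H_1, H_2$, etc.); the center cubie in particular requires a separate determination of which of its configurations are reachable, since any word fixing everything else flips its one top face an even number of times. Once these cases are handled in the same constructive style, the lemma follows.
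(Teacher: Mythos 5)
Your proposal is correct and takes essentially the same approach as the paper's own (much terser) proof: the paper likewise just exhibits the sequences of Fig.~\ref{figure:moves}, leaves the per-configuration check to the reader, and notes that using each move exactly twice leaves all other clusters unaffected --- your commuting-disjoint-involutions argument is simply that observation spelled out. The degenerate cross and center clusters you flag are outside the scope of this lemma (the paper solves them in a separate lemma), and as \emph{other} clusters they are already covered by your evenness argument.
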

}

\later{
\begin{proof}
The correct move sequence for each configuration
from Lemma \ref{lemma:nxnx1-group-config} is given in Fig.~\ref{figure:moves}.
The fact that we always use each move twice
ensures that all other clusters
will not be affected by the move sequence.
The correctness of these sequences can be verified by the reader.
\qed
\end{proof}
}

\iffull
In order to handle
the edge, corner, and cross clusters,
we need a more complicated sequence of moves.
These clusters cannot always be solved
without affecting any other cubes.
So rather than show that we can solve each cluster individually,
we show that we can solve all such clusters together.
\fi

\later{
\begin{lemma}
Given a solvable configuration
of an $n \times n \times 1$ Rubik's Cube,
there exists a sequence of moves of length $O(n)$
which can be used to solve the edge clusters
and cross clusters.
\end{lemma}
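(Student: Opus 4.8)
The plan is to solve the edge clusters by working around the boundary of the cube one cluster at a time, using the fact that an edge cluster lives in exactly two rows and two columns, each of which contains only boundary cubies, so the "damage" caused by a row or column flip stays confined to other boundary clusters. First I would fix a cyclic order on the edge clusters (say, walking around the perimeter) and observe that flipping a single row $H$ or column $V$ that passes through an edge cluster affects only cubies in that row or column — all of which are edge cubies belonging to edge clusters — so a sequence of such moves never disturbs any interior (non-edge, non-cross) cluster, and in particular never disturbs the already-solved clusters from Lemma~\ref{lemma:nxnx1-group-reachable}'s regime once we have moved on. The key subroutine is: given one unsolved edge cluster, apply a constant-length sequence of the four moves $H_1, H_2, V_1, V_2$ associated to it that (i) fixes the colors/positions of its own cubies and (ii) returns every other row and column it touched to its prior state. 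As in Figure~\ref{figure:moves}, the trick is to use each move an even number of times; the subtlety for edge clusters is that $H_1,H_2$ and $V_1,V_2$ are no longer independent of the rest of the boundary, so I would instead use commutator-style sequences of the form $M\,N\,M\,N$ or $M\,N\,M^{-1}\,N^{-1}$ (here $M^{-1}=M$ since every move is an involution) that act as a $3$-cycle or double transposition localized to the target cluster.

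Next I would handle the corner cluster as a special edge cluster: its four cubies sit at the four corners, and the two rows/columns through it are the first and last rows and columns. The orientation/parity constraints here are the genuinely delicate point — a corner cubie carries more visible faces, so I must check that the permutation I can realize on the corner cluster, together with the induced effect on the rest of the boundary, is rich enough to reach the solved corner configuration given that the whole cube is solvable. I expect to dispatch this by a parity/counting argument: solvability of the cube constrains the joint configuration of all boundary clusters, and since I can independently fix all non-corner edge clusters, whatever residual discrepancy remains on the corner cluster must be one that is reachable by a constant-length boundary sequence — otherwise the starting configuration was not solvable, contradicting the hypothesis. The cross clusters (present only when $n$ is odd) are handled analogously but are easier: a cross cluster lies in the single median row or single median column, so only the moves $H$ or $V$ through it plus the perpendicular boundary moves interact with it, and a short $H\,V\,H\,V$-type sequence suffices.

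Assembling the pieces: there are $\Theta(n)$ edge clusters and $O(1)$ corner/cross clusters; each is solved by a constant-length sequence that leaves the interior untouched and leaves previously-solved boundary clusters untouched, so concatenating them gives a sequence of length $O(n)$ that solves all edge and cross clusters simultaneously, as claimed. The main obstacle I anticipate is not the bookkeeping of "which rows and columns get touched" — that is routine once the cyclic ordering is fixed — but verifying that the constant-size set of localized boundary operations (commutators of row/column flips) generates a large enough subgroup on each edge/corner cluster to reach any configuration that solvability permits; this requires a careful case analysis of the corner cluster's orientation states together with the global parity constraint, and I would present it as an explicit (finite) check backed by the solvability hypothesis, in the same spirit as Lemmas~\ref{lemma:nxnx1-group-config} and~\ref{lemma:nxnx1-group-reachable}.
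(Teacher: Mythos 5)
There is a genuine gap, and it sits exactly where the paper's proof has to work hardest. First, your localization claim is false as stated: for a non-corner edge cluster $(x,0)$ the two horizontal moves are the full first and last rows (which are indeed all edge cubies), but the two vertical moves are the full columns $x$ and $n-1-x$, which pass through $\Theta(n)$ interior clusters. That by itself is survivable, since the lemma does not require preserving interior clusters. The fatal step is your core subroutine: a constant-length sequence that solves one edge cluster while ``returning every other row and column it touched to its prior state.'' Such a sequence does not exist for all reachable edge-cluster configurations. Every move acts on the four \emph{distinguishable} cubies of an edge cluster as a single transposition, and your commutator-style sequences ($MNMN$ with involutions) use each move an even number of times, hence realize only even permutations of the target cluster; but odd configurations are reachable (a single column flip applied to a solved cube already puts the edge cluster $(x,0)$ an odd permutation away from solved). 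This is precisely why the paper's edge sequences (Fig.~\ref{fig:nxnx1-edge-configs}, e.g.\ $H_1,V_1,H_1$) use vertical moves an odd number of times---deliberately sacrificing interior clusters, which the lemma permits---while keeping each horizontal move even so that the other edge clusters and the corner cluster are not permuted.

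Second, the cross clusters are not ``easier'' and an $H\,V\,H\,V$-type sequence does not suffice: of the four reachable cross-cluster states (Fig.~\ref{figure:cross-states}), two cannot be solved at all without an odd number of horizontal moves through that cluster, and any such odd count unsolves the adjacent (already solved) edge cluster unless compensated by an odd number of boundary column moves, which in turn disturbs the solved edge cross cluster. The paper's proof therefore does not solve these states; it proves they \emph{cannot occur} once the edge clusters are solved, via exactly this parity-propagation argument. Your proposal is missing that argument entirely, and your treatment of the corner cluster (fixed late, justified by a vague solvability/counting appeal) also conflicts with the fact that its $O(1)$ fix may disturb other clusters---which is why the paper solves the corner, center, and edge cross clusters \emph{first}, before any other edge cluster, and only then exploits the even-horizontal-move discipline for the rest.
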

}

\later{
\begin{proof}
We begin by solving the corner cluster,
and, if $n$ is odd,
the center cluster and the two edge cross clusters.
These four clusters combined have only $O(1)$
reachable configurations,
and so all four can be obviously be fixed in $O(1)$ moves,
disregarding the effect that these moves might have
on any other clusters.
Next we solve the other edge clusters.
Our goal is to solve each cluster
without affecting any of the clusters
we have previously solved.

Without loss of generality,
say that we are trying to solve
an edge cluster with coordinates $(x, 0)$.
We begin by using the move sequences
from Lemma \ref{lemma:nxnx1-group-reachable}
to make sure that the cluster
has all four red stickers visible.
Then it will be in one of the states
depicted in Fig.~\ref{fig:nxnx1-edge-configs}.
To solve the cluster,
we can use the move sequences given
in Fig.~\ref{fig:nxnx1-edge-configs}.
Although the given move sequences
are not guaranteed to apply the identity permutation
to all other clusters,
they do have the property
that any horizontal move
will be performed an even number of times.
Hence, this move sequence will apply
the identity permutation to all other edge clusters.
In addition, none of the move sequences affect the center cubie.

\setlength{\unitlength}{0.008in}
\newcommand{\figureedges}[1]{
	\begin{picture}(120, 115)
	\put(10, 0){\includegraphics[width=100\unitlength]{#1}}
	\put(0, 78){$H_1$}
	\put(0, 17){$H_2$}
	\put(35, 103){$V_1$}
	\put(72, 103){$V_2$}
	\end{picture}
}

\begin{figure}
\centering
\subfigure[Solved.]{
	\figureedges{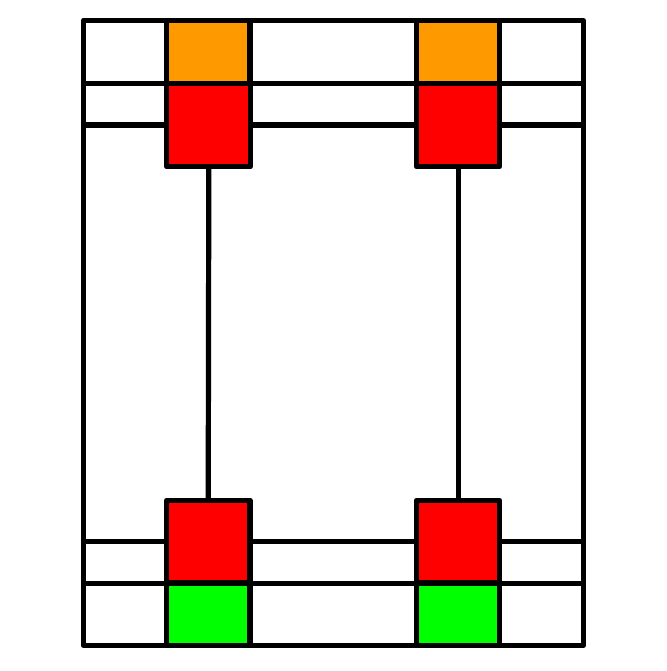}
}
\subfigure[$H_1, V_1, H_1.$]{
	\figureedges{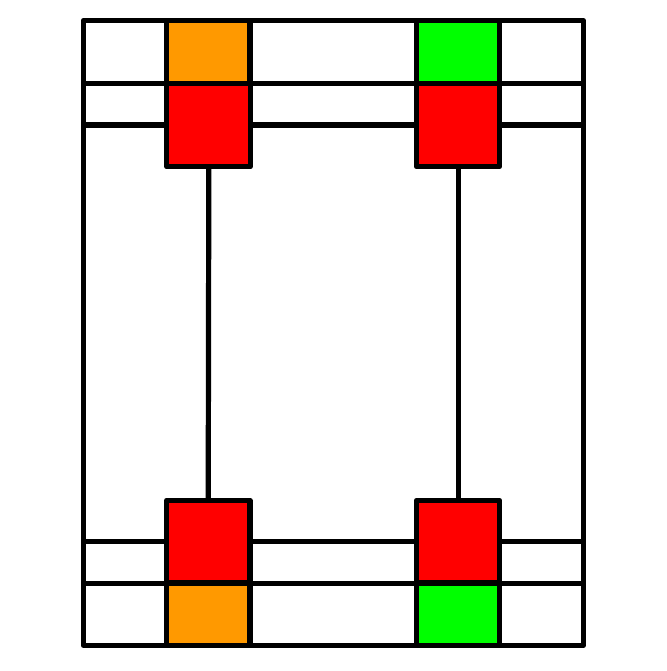}
} 
\subfigure[$H_1, V_2, H_1.$]{
	\figureedges{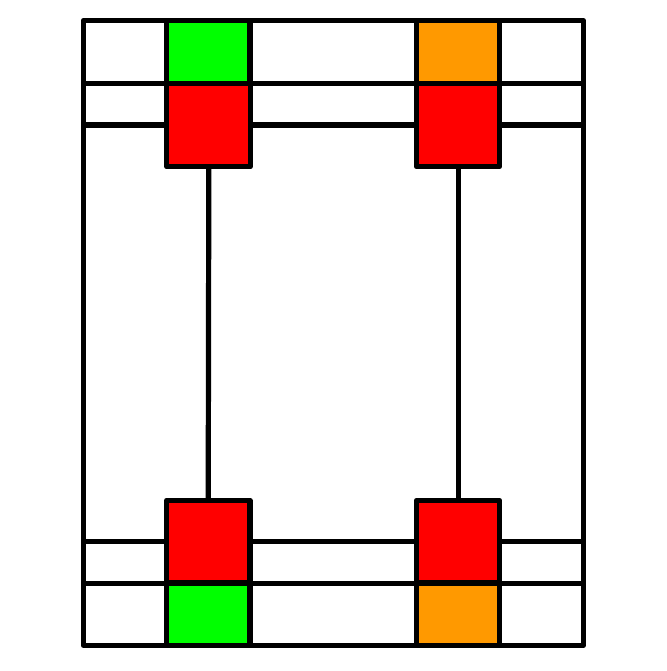}
}
\subfigure[$H_1, V_1, V_2, H_1.$]{
	\figureedges{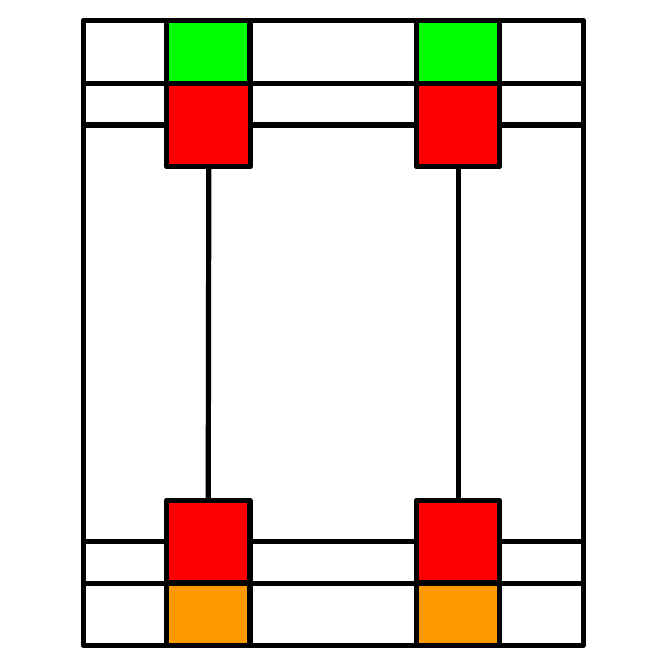}
}
\caption{
When all four red stickers are facing forwards,
these are the possible configurations for an edge cluster,
and the move sequences used to solve them.
}
\label{fig:nxnx1-edge-configs}
\end{figure}

Once all the edge clusters have been solved,
we want to solve the cross clusters.
We know that the center cluster has already been solved.
We also know that there are $O(n)$ cross clusters,
so if we can solve each cross cluster in $O(1)$ moves
without affecting the rest of the clusters,
then we will have solved the edge and cross clusters
in a total of $O(n)$ moves.
Without loss of generality,
say that we are trying to solve the cross cluster
$((n - 1) / 2, y)$.
The four possible states for a cross cluster
are depicted in Fig.~\ref{figure:cross-states}.

\setlength{\unitlength}{0.008in}
\newcommand{\figurecross}[1]{
	\begin{picture}(120, 100)
	\put(10, -10){\includegraphics[width=100\unitlength]{#1}}
	\put(0, 56){$H_1$}
	\put(0, 18){$H_2$}
	\put(55, 85){$V$}
	\end{picture}
}
\begin{figure}
\centering
\subfigure[Solved.]{
	\figurecross{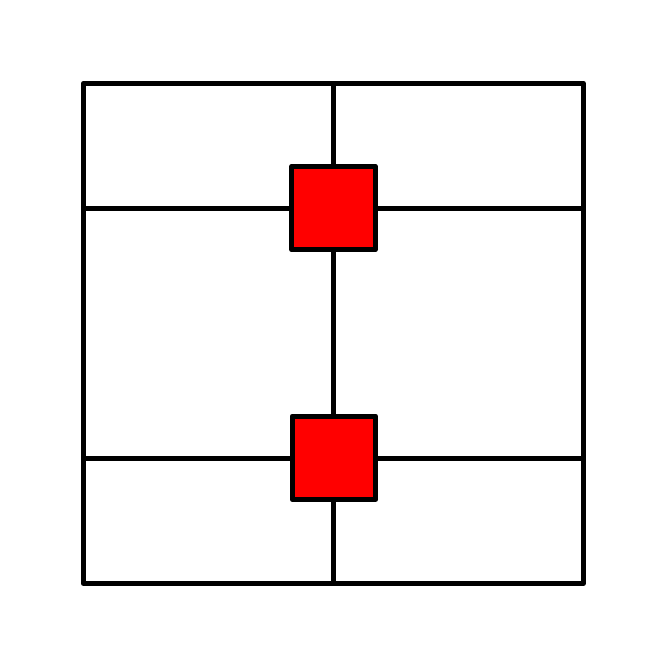}
	\label{figure:cross-states-a}
}
\subfigure[n/a.]{
	\figurecross{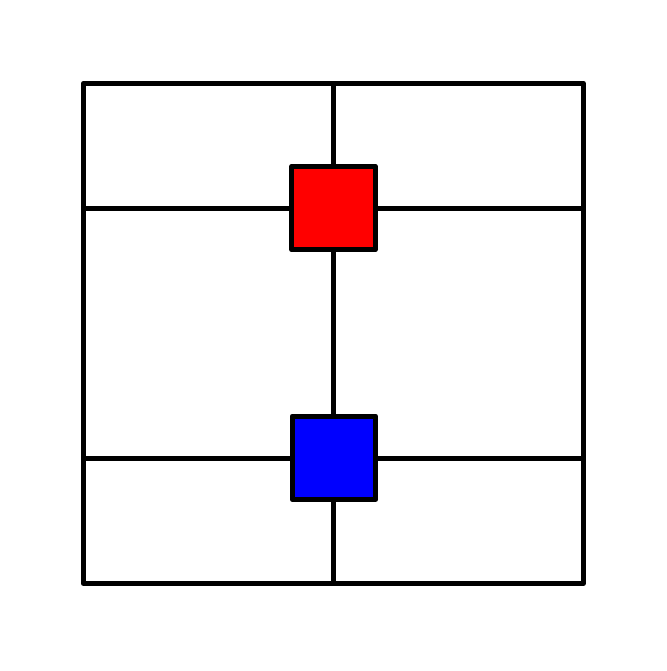}
	\label{figure:cross-states-b}
}
\subfigure[n/a.]{
	\figurecross{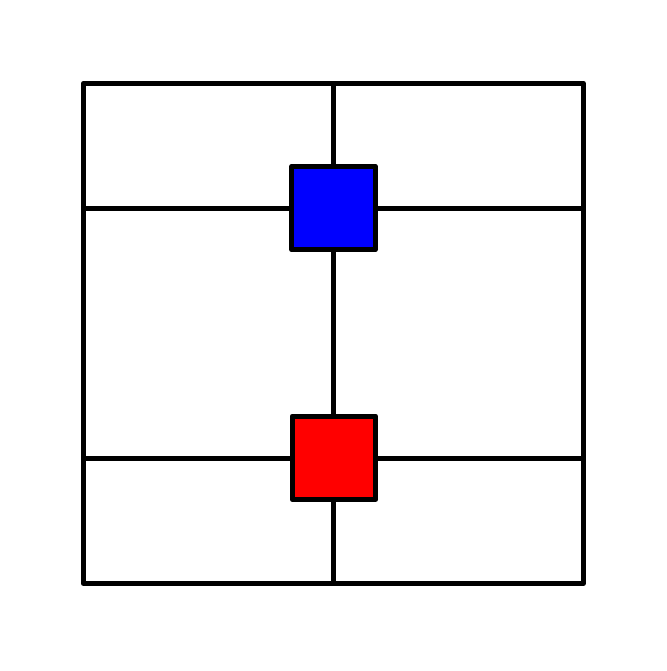}
	\label{figure:cross-states-c}
}
\subfigure[$H_1, V, H_1, V.$]{
	\figurecross{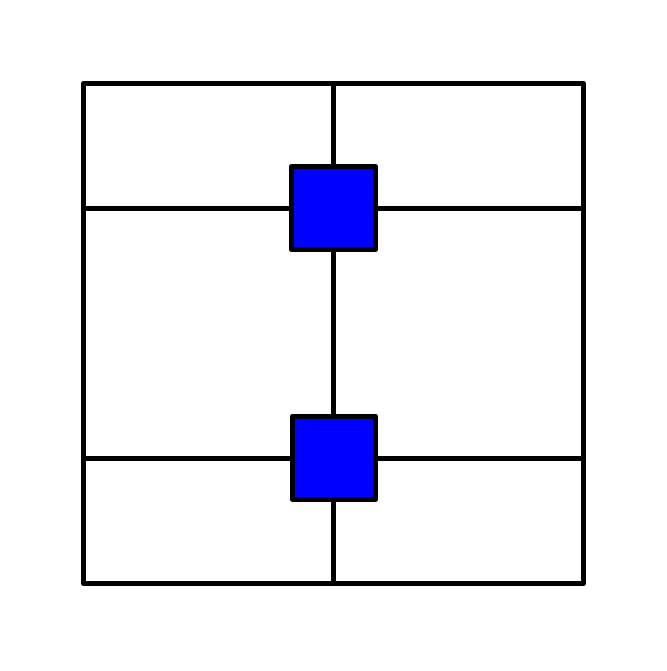}
	\label{figure:cross-states-d}
}
\caption{
The four reachable configurations for a cross cluster.
Two of them can be solved without affecting
the rest of the clusters.
}
\label{figure:cross-states}
\end{figure}

However, because we have already solved
all of the edge clusters,
we know that the set of possible configurations
for our cross cluster is more restricted.
Both horizontal moves
affecting our cross cluster
will cause one of the two cubies
to change color.
No matter what state the cross cluster is in,
the vertical move cannot change
the color of only one cubie.
Therefore, if the cross cluster
is in the configuration depicted
in Fig.~\ref{figure:cross-states-b}
or the configuration depicted in
Fig.~\ref{figure:cross-states-c},
then the rest of the solution must perform
either the move $H_1$ or the move $H_2$
an odd number of times.

Consider the effect of this on the edge cluster
affected by $H_1$ and $H_2$.
Each move in the edge cluster causes a swap of two cubies.
If an order is placed on the cubies in the cluster,
each swap is a permutation of this order,
and the set of permutations constructible using swaps
is equivalent to the permutation group on 4 elements, $S_{4}$.
By permutation group theory,
if a particular cluster configuration can be solved
using an even number of swaps,
then \emph{any} solution for that cluster configuration
has an even number of swaps.
%In addition,
%if it can be solved using an odd number of swaps,
%then any solution has an odd number of swaps.
We know that the edge cluster is already solved,
so it can be solved using an even number of swaps.
So if the rest of the solution
contains an odd number of moves $H_1$ and $H_2$,
then it must also contain an odd number
of edge moves $V_1$ and $V_2$.

\begin{figure}
\centering
\setlength{\unitlength}{0.0075in}
\begin{picture}(600, 270)
\put(4, 0){\includegraphics[width=4.5in]{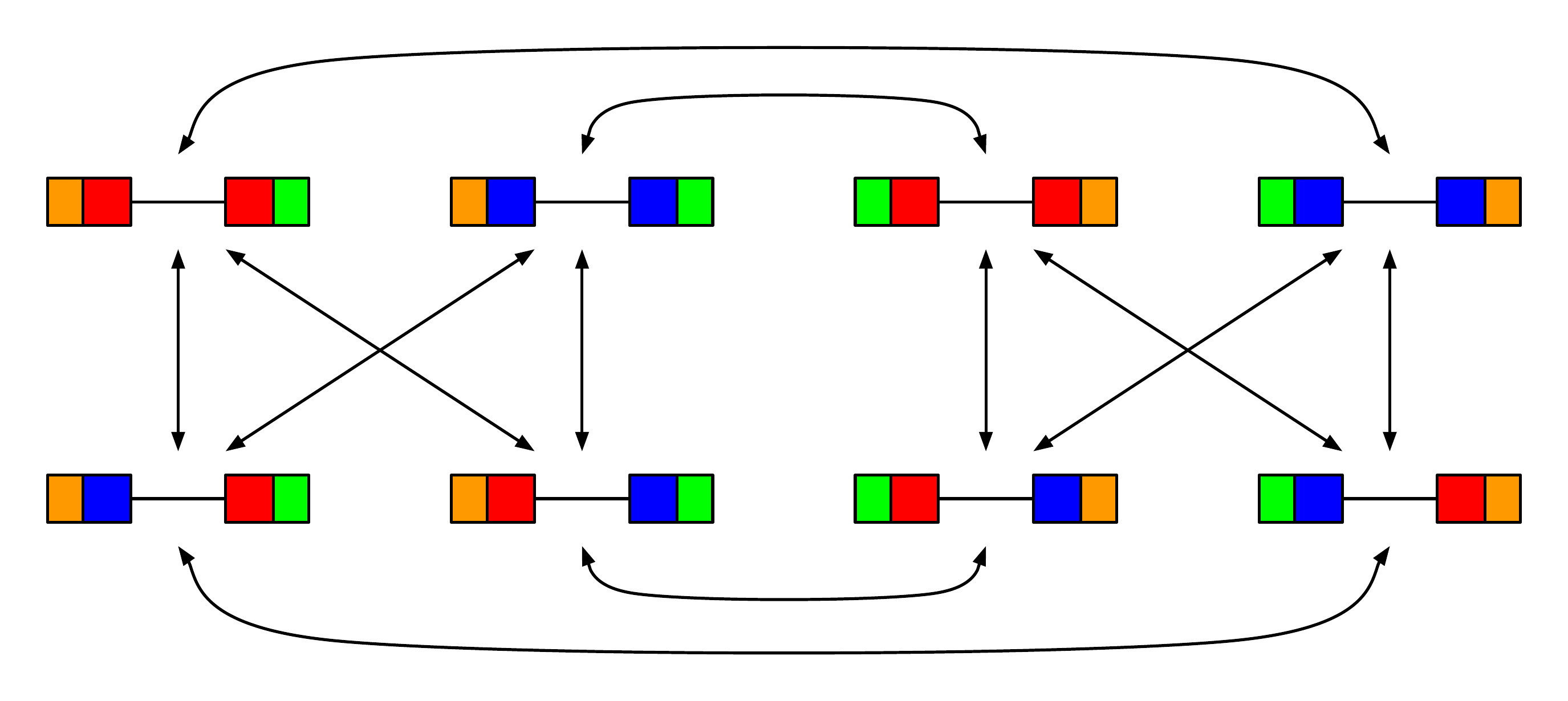}}
\put(-5, 188){(a)}
\put(150, 188){(b)}
\put(305, 188){(c)}
\put(460, 188){(d)}
\put(-5, 74){(e)}
\put(150, 74){(f)}
\put(305, 74){(g)}
\put(460, 74){(h)}
\put(295, 2){$H$}
\put(295, 47){$H$}
\put(295, 214){$H$}
\put(295, 258){$H$}
\put(52, 135){$V_1$}
\put(231, 135){$V_1$}
\put(361, 135){$V_2$}
\put(541, 135){$V_2$}
\put(117, 160){$V_2$}
\put(162, 160){$V_2$}
\put(426, 160){$V_1$}
\put(471, 160){$V_1$}
\end{picture}
\caption{
The possible configurations
of an edge cross cluster.
}
\label{figure:edge-cross-configs}
\end{figure}
Now consider the effect of an odd number of edge moves
on the affected edge cross cluster.
Figure~\ref{figure:edge-cross-configs}
gives the configuration space for that cluster.
We know that it is currently in the solved state labelled (a).
Any sequence of moves
which contains an odd number of edge moves $V_1$ and $V_2$
will cause the edge cross cluster
to leave the solved state.
So the rest of the solution cannot contain
an odd number of edge moves $V_1$ and $V_2$.
This means that the rest of the solution
cannot contain an odd number of horizontal moves
$H_1$ and $H_2$
affecting a single cross cluster.
So every cross cluster must be in one of the states
depicted in Figs.~\ref{figure:cross-states-a} and \ref{figure:cross-states-d},
each of which can be solved without affecting any other clusters
using the sequence of moves listed for each state.
\qed
\end{proof}
}

\subsection{$n \times n \times 1$ Upper Bound}

There are $n^2$ clusters in the $n \times n \times 1$ Rubik's Cube.
If we use the move sequences given in Fig.~\ref{figure:moves}
to solve each cluster individually,
we have a sequence of $O(n^2)$ moves for solving the entire cube.
In this section,
we take this sequence of moves
and take advantage of parallelism
to get a solution with $O(n^2 / \log n)$ moves.

Say that we are given columns $X$ and rows $Y$
such that all of the cubie clusters $(x, y) \in X \times Y$
are in the configuration depicted in Fig.~\ref{figure:blue-left}.
If we attempted to solve each of these clusters individually,
the number of moves required would be $O(|X| \cdot |Y|)$.

Consider instead what would happen
if we first flipped all of the columns $x \in X$,
then flipped all of the rows $y \in Y$,
then flipped all of the columns $x \in X$ again,
and finally flipped all of the rows $y \in Y$ again.
What would be the effect of this move sequence
on a particular $(x^*, y^*) \in X \times Y$?
The only moves affecting that cluster
are the column moves $x^*$ and $(n - 1 - x^*)$
and the row moves $y^*$ and $(n - 1 - y^*)$.
So the subsequence of moves affecting $(x^*, y^*)$
would consist of
the column move $x^*$,
followed by the row move $y^*$,
followed by the column move $x^*$ again,
and finally the row move $y^*$ again.
Those four moves
are exactly the moves needed to solve that cluster.

\iffull
This idea allows us to parallelize
the solutions for multiple clusters,
resulting in the following lemma.
\fi

\later{
\begin{lemma}
\label{lemma:nxnx1-bulk-same}
Given an $n \times n \times 1$ Rubik's Cube configuration
and sets $X, Y \subseteq \group{\floor{n / 2}}$,
if all \CCSs{} $(x, y) \in X \times Y$
are in the same \CCSconfigs{},
then all \CCSs{} $(x, y) \in X \times Y$
can be solved in $O(|X| + |Y|)$ moves
without affecting the rest of the cubies.
\end{lemma}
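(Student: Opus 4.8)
The plan is to take the common \CCSconfig{}~$c$ shared by all clusters in $X\times Y$, and to ``vectorize'' the constant‑length solving sequence for~$c$ supplied by Lemma~\ref{lemma:nxnx1-group-reachable} and Fig.~\ref{figure:moves}. Write that sequence as $S_c = s_1,\dots,s_k$ with $k\le 6$ and each $s_i\in\{H_1,H_2,V_1,V_2\}$; the structural fact we need, visible by inspection of Fig.~\ref{figure:moves}, is that each of the four symbols occurs an \emph{even} number of times in $S_c$ (this is exactly what made each individual cluster solve fix every other cluster). Now build a move sequence $S$ by replacing each letter of $S_c$ by a block of parallel flips: $V_1\mapsto$ ``flip every column whose index lies in $X$'', $V_2\mapsto$ ``flip every column whose index lies in $\{\,n-1-x : x\in X\,\}$'', and symmetrically $H_1\mapsto$ ``flip every row whose index lies in $Y$'' and $H_2\mapsto$ ``flip every row whose index lies in $\{\,n-1-y : y\in Y\,\}$''. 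Each block uses at most $\max(|X|,|Y|)$ moves, so $S$ has length at most $6\max(|X|,|Y|)=O(|X|+|Y|)$, as required.

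Because $X,Y\subseteq\group{\floor{n/2}}$, for indices $x,x'$ in that range we have $x+x'\le n-2<n-1$, so the maps $x\mapsto x$ and $x\mapsto n-1-x$ have disjoint images, and likewise for rows. Hence the block associated with $V_1$ touches a set of column indices disjoint from those touched by the blocks for $V_2$, $H_1$, $H_2$, and within any single block the individual flips act on pairwise disjoint slices and therefore commute. Using this, I would verify that every target cluster is solved: fix $(x^{*},y^{*})\in X\times Y$. The only flips in $S$ that move a cubie of \CCS{}~$(x^{*},y^{*})$ are the flips of columns $x^{*}, n-1-x^{*}$ and rows $y^{*}, n-1-y^{*}$; by the disjointness just noted, the flip of column $x^{*}$ appears in $S$ exactly at the positions where $V_1$ stood in $S_c$ and nowhere else, column $n-1-x^{*}$ tracks $V_2$, row $y^{*}$ tracks $H_1$, and row $n-1-y^{*}$ tracks $H_2$. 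Since same‑type flips inside a block commute, the subsequence of $S$ consisting of the moves affecting $(x^{*},y^{*})$ equals, as a sequence of operations on that cluster, the sequence $S_c$ with $V_1,V_2,H_1,H_2$ instantiated by that cluster's own two columns and two rows; so by Lemma~\ref{lemma:nxnx1-group-reachable} it is left solved.

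Finally I would check that $S$ affects nothing else. Let $(a,b)$ be any cluster other than the $(x^{*},y^{*})\in X\times Y$, taken with its canonical indices $a,b\in\group{\floor{n/2}}$; cross and center clusters involve the median index, which by the same inequality above is never flipped, so they are covered as well. Since $(a,b)\notin X\times Y$, either $a\notin X$ or $b\notin Y$; say $a\notin X$. Then, again because $X\subseteq\group{\floor{n/2}}$, neither column $a$ nor column $n-1-a$ is ever flipped, so no column move of $S$ affects $(a,b)$. The only moves that could affect $(a,b)$ are then flips of rows $b$ and $n-1-b$, each applied an even number of times (a multiple of the even multiplicity of $H_1$ resp.\ $H_2$ in $S_c$); these two rows act on disjoint subsets of the cluster's (at most four) cubies, and no column flip ever touches the cluster in between, so the net effect on $(a,b)$ is the identity in both position and orientation. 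The case $b\notin Y$ is symmetric.

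The step I expect to be the real obstacle is the bookkeeping in the middle paragraph: making rigorous the claim that restricting $S$ to the moves that touch a given cluster yields exactly $S_c$, rather than some rearrangement of it. This rests on (i) each cluster's two columns and two rows being hit by four \emph{distinct} blocks, which is exactly where the hypothesis $X,Y\subseteq\group{\floor{n/2}}$ is used, and (ii) the commutativity of equal‑type flips within one block, which lets us reorder each block so that the cluster's own column (or row) move sits in the position dictated by $S_c$. Once those two points are in place, both the length bound and the ``no side effects'' claim reduce to the evenness of the letter multiplicities in $S_c$, exactly as in the single‑cluster arguments of Lemmas~\ref{lemma:nxnx1-group-config}--\ref{lemma:nxnx1-group-reachable}.
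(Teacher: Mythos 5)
Your proposal is correct and follows essentially the same route as the paper's proof: substitute each generic symbol $H_1, H_2, V_1, V_2$ in the constant-length solving sequence of Lemma~\ref{lemma:nxnx1-group-reachable} by a block of parallel row/column flips, observe that the subsequence of moves touching a cluster in $X \times Y$ is exactly the instantiated solving sequence, and that any other cluster sees only row moves or only column moves, each an even number of times, hence the identity. Your explicit use of $X, Y \subseteq \group{\floor{n/2}}$ to get disjointness of the four blocks is just a more careful spelling-out of a step the paper leaves implicit.
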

}

\later{
\begin{proof}
By Lemma \ref{lemma:nxnx1-group-reachable},
we know that there exists a sequence of moves
of length at most six
which will solve the configuration
of a single \CCS{} $(x, y) \in X \times Y$.
We write this sequence of moves
in a general form as in Lemma \ref{lemma:nxnx1-group-reachable}.
We then replace each move $V_1$
with a sequence of moves
that flips each column $x \in X$.
Similarly, we replace each move $V_2$
with a sequence of moves
that flips each column $n - x - 1$, where $x \in X$.
We perform similar substitutions
for $H_1$ and $H_2$,
using the rows $y$ and $n - y - 1$ instead.
Because the original sequence of moves
had length at most six,
it is easy to see that the length
of the new move sequence
is $O(|X| + |Y|)$.

\begin{figure}[h]
\centering
\subfigure[]{
	\includegraphics[width=0.17\textwidth, trim=0.8cm 0.8cm 0.8cm 0.8cm, clip]{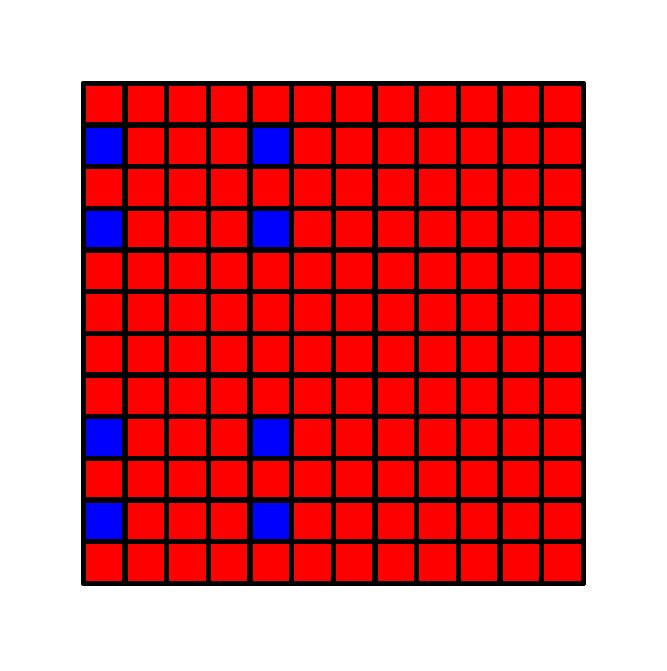}
}
\subfigure[]{
	\includegraphics[width=0.17\textwidth, trim=0.8cm 0.8cm 0.8cm 0.8cm, clip]{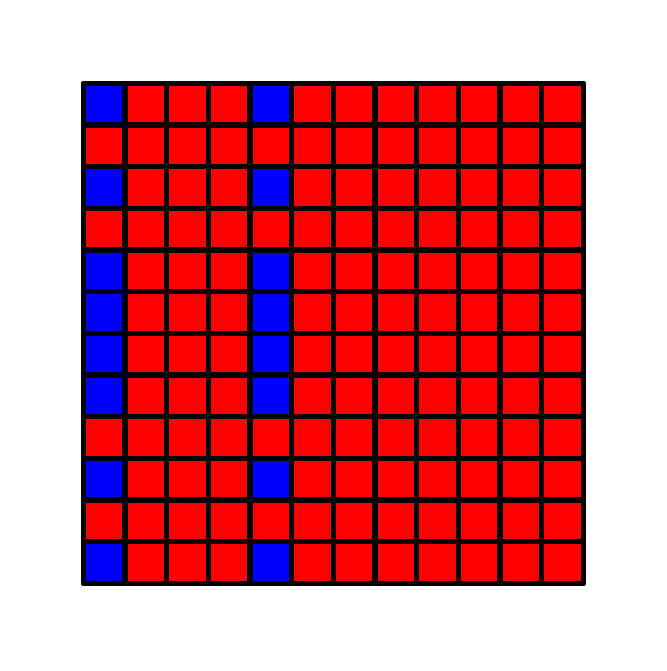}
}
\subfigure[]{
	\includegraphics[width=0.17\textwidth, trim=0.8cm 0.8cm 0.8cm 0.8cm, clip]{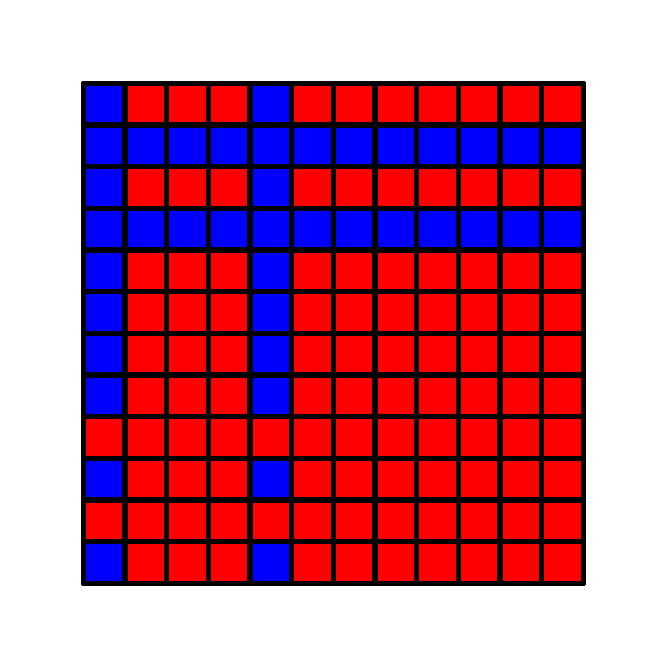}
}
\subfigure[]{
	\includegraphics[width=0.17\textwidth, trim=0.8cm 0.8cm 0.8cm 0.8cm, clip]{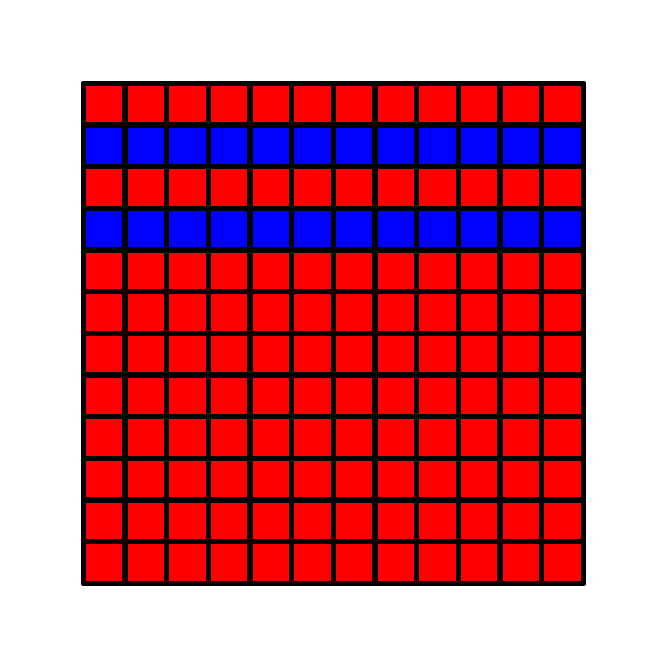}
}
\subfigure[]{
	\includegraphics[width=0.17\textwidth, trim=0.8cm 0.8cm 0.8cm 0.8cm, clip]{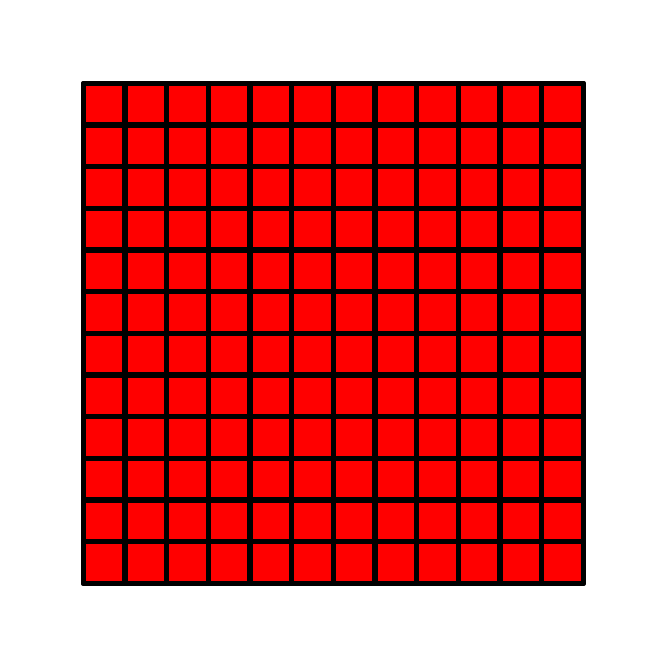}
}
\caption{Solving four \CCSs{} at the same time.}
\end{figure}

We claim that this new sequence of moves
will solve all \CCSs{} $(x, y) \in X \times Y$,
and that no other \CCSs{} will be affected.
To see that this is true,
consider how some \CCS{}
$(x^*, y^*) \in \group{\floor{n / 2}} \times \group{\floor{n / 2}}$
will be affected by the move sequence.
If $x^* \notin X$ and $y^* \notin Y$,
then none of the moves in the sequence
will affect the position of any cubies in the \CCS{},
and therefore the \CCS{} will be unaffected.
If $x^* \in X$ and $y^* \in Y$,
then the subsequence of moves which affect this \CCS{}
will be exactly the sequence of moves
necessary to solve this \CCS{}.
Otherwise,
either $x^* \in X$ or $y^* \in Y$,
but not both.
Therefore,
the subsequence of moves
which affect this \CCS{}
will be either all vertical,
or all horizontal,
and each move will occur exactly twice.
This means that the subsequence of moves
which affect this \CCS{}
will apply the identity permutation
to this \CCS{}.
\qed
\end{proof}
}

\ifabstract
A generalization of this idea
gives us a technique for solving
all cubie clusters $(x, y) \in X \times Y$
using only $O(|X| + |Y|)$ moves,
if each one of those clusters
is in the same configuration.
\else
This technique allows us to solve
all cubie clusters $(x, y) \in X \times Y$
using only $O(|X| + |Y|)$ moves,
if each one of those clusters
is in the same configuration.
\fi
Our goal is to use this technique
for a related problem:
solving all of the cubie clusters $(x, y) \in X \times Y$
that are in a particular cluster configuration $c$,
leaving the rest of the clusters alone.
\ifabstract

For each $y \in Y$,
we define $S_y = \{ x \in X \mid$ the \CCS{} $(x, y)$ is in configuration $c \}$.
For each $S \subseteq X$,
we let $Y_S = \{ y \in Y \mid S_y = S \}$.
There are $2^\ell$ possible values for $S$.
For each $Y_S$, we use a single sequence of moves
to solve all $(x, y) \in S \times Y_S$.
This sequence of moves has length $O(|S| + |Y_S|) = O(\ell + |Y_S|)$.
When we sum the lengths up for all $Y_S$,
we find that the number of moves is bounded by
\begin{align*}
O\left(\ell \cdot 2^\ell + \sum_S |Y_S|\right) = O\left(\ell 2^\ell + |Y|\right)
\end{align*}
\else
To that end, we divide up the columns $X$
according to the pattern of rows that are in configuration $c$,
and solve each subset of the rows
using the technique of Lemma~\ref{lemma:nxnx1-bulk-same}.
More formally:
\fi

\later{
\begin{lemma}
\label{lemma:nxnx1-bulk-subsets}
Suppose we are given
an $n \times n \times 1$ Rubik's Cube configuration,
a \CCSconfig{} $c$,
and sets $X, Y \subseteq \group{\floor{n / 2}}$
such that $|X| = \ell$.
Then all \CCSs{} $(x, y) \in X \times Y$
that are in configuration $c$
can be solved in $O(\ell 2^\ell + |Y|)$ moves
without affecting the rest of the cubies.
\end{lemma}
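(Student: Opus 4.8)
The plan is to reduce Lemma~\ref{lemma:nxnx1-bulk-subsets} to Lemma~\ref{lemma:nxnx1-bulk-same} by partitioning the clusters in $X \times Y$ that are in configuration $c$ into a small number of axis-aligned blocks, each of which is entirely in configuration $c$, and then handle each block with the technique of the previous lemma. First I would fix the \CCSconfig{} $c$, and for each row $y \in Y$ define the ``profile'' $S_y = \{\, x \in X \mid \text{cluster } (x,y) \text{ is in configuration } c \,\} \subseteq X$. Since $|X| = \ell$, there are at most $2^\ell$ distinct profiles. I then group the rows by profile: for each subset $S \subseteq X$, let $Y_S = \{\, y \in Y \mid S_y = S \,\}$. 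The key observation is that for a fixed $S$, every cluster $(x,y) \in S \times Y_S$ is in configuration $c$ (by definition of $Y_S$ and $S_y$), so Lemma~\ref{lemma:nxnx1-bulk-same} applies and solves all of $S \times Y_S$ in $O(|S| + |Y_S|) = O(\ell + |Y_S|)$ moves, without touching any other cubies.

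Next I would argue correctness and count the total. Correctness: the sets $Y_S$ partition $Y$, so as $S$ ranges over all subsets each row $y \in Y$ is handled in exactly one block, namely the one with $S = S_y$; within that block precisely the clusters $(x,y)$ with $x \in S_y$ — i.e. exactly the clusters in row $y$ that are in configuration $c$ — get solved, and no cubie outside $S \times Y_S$ is disturbed by Lemma~\ref{lemma:nxnx1-bulk-same}. Running these blocks in sequence (say, in order of $S$), a cluster in configuration $c$ is solved when its block is processed and is untouched by every other block, so at the end all clusters $(x,y) \in X \times Y$ that were in configuration $c$ are solved and nothing else changed. For the move count, summing the per-block bounds over the (at most $2^\ell$) nonempty blocks gives
\begin{align*}
\sum_{S} O\bigl(\ell + |Y_S|\bigr)
 = O\Bigl(\ell \cdot 2^\ell + \sum_{S} |Y_S|\Bigr)
 = O\bigl(\ell 2^\ell + |Y|\bigr),
\end{align*}
using $\sum_S |Y_S| = |Y|$ since the $Y_S$ partition $Y$. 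This is exactly the claimed bound.

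I do not expect a serious obstacle here; the only point that needs a little care is making sure the partition-by-profile argument is airtight — in particular that for $S' \ne S$ the block for $S'$ genuinely leaves every cubie of $S \times Y_S$ alone. This follows because Lemma~\ref{lemma:nxnx1-bulk-same}'s move sequence for parameters $(S', Y_{S'})$ affects only cubies whose cluster lies in $S' \times Y_{S'}$, and the rows $Y_{S'}$ are disjoint from $Y_S$; a cluster $(x,y)$ with $y \in Y_S$ is therefore untouched regardless of whether $x \in S'$. A secondary bookkeeping point is that we should only iterate over profiles $S$ that actually occur (equivalently, over nonempty $Y_S$), so that the $\ell 2^\ell$ term is an upper bound rather than something we pay unconditionally; but even the crude bound of iterating over all $2^\ell$ subsets gives the stated estimate, so nothing delicate is required.
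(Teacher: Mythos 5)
Your proposal is correct and follows essentially the same approach as the paper's proof: grouping the rows of $Y$ by their profile $S_y \subseteq X$, solving each block $S \times Y_S$ via Lemma~\ref{lemma:nxnx1-bulk-same}, and summing $O(\ell + |Y_S|)$ over at most $2^\ell$ profiles. The extra care you take about blocks not interfering with each other is already guaranteed by the ``without affecting the rest of the cubies'' clause of Lemma~\ref{lemma:nxnx1-bulk-same}, exactly as the paper uses it.
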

}

\later{
\begin{proof}
For each $y \in Y$,
we define $S_y = \{ x \in X \mid$ the \CCS{} $(x, y)$ is in configuration $c \}$.
For each $S \subseteq X$,
we let $Y_S = \{ y \in Y \mid S_y = S \}$.
There are $2^\ell$ possible values for $S$.
For each $Y_S$,
we use the sequence of moves
which is guaranteed to exist by Lemma \ref{lemma:nxnx1-bulk-same}
to solve all $(x, y) \in S \times Y_S$.
This sequence of moves has length $O(|S| + |Y_S|) = O(\ell + |Y_S|)$.
When we sum the lengths up
for all $Y_S$,
we find that the number of moves is bounded by
\begin{align*}
O\left(\ell \cdot 2^\ell + \sum_S |Y_S|\right) = O\left(\ell \cdot 2^\ell + |Y|\right).
\end{align*}
\qed
\end{proof}
}

\ifabstract

To make this technique cost-effective,
we partition all $\floor{n / 2}$ columns
into sets of size $\frac{1}{2} \log n$,
and solve each such group individually.
This means that we can solve all clusters
in a particular configuration $c$ using
\begin{align*}
O\left(\frac{\frac{n}{2}}{\frac{1}{2}\log n} \cdot \left( \frac{1}{2}\log n \cdot 2^{\frac{1}{2}\log n} +\frac{n}{2} \right)\right) = O\left(\frac{n^2}{\log n}\right)
\end{align*}
moves.
When we construct that move sequence for all $6$ cluster configurations,
we have the following result:

\else

Unfortunately, this result is not cost-effective.
We need to make sure that $\ell$ is small enough
to prevent an exponential blowup in the cost of solving the Rubik's Cube.
To that end, we divide up the columns
into small groups to get the following result:

\fi

\later{
\begin{lemma}
\label{lemma:nxnx1-bulk-et-al}
Suppose we are given
an $n \times n \times 1$ Rubik's Cube configuration,
a \CCSconfig{} $c$,
and sets $X, Y \subseteq \group{\floor{n / 2}}$.
Then all \CCSs{} $(x, y) \in X \times Y$
that are in configuration $c$
can be solved in $O(|X| \cdot |Y| / \log |Y|)$ moves
without affecting the rest of the cubies.
\end{lemma}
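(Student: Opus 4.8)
The plan is to deduce this from Lemma~\ref{lemma:nxnx1-bulk-subsets} by cutting the column set $X$ into blocks small enough that the exponential term $\ell 2^{\ell}$ in that lemma's bound never exceeds the linear term $|Y|$. Assume throughout that $|Y|$ exceeds a fixed constant: if $|Y| = O(1)$ there are only $O(|X|)$ \CCSs{} in $X \times Y$, each solvable in $O(1)$ moves by Lemma~\ref{lemma:nxnx1-group-reachable} without disturbing any other cubie, and $O(|X|) = O(|X|\cdot|Y|/\log|Y|)$ because $|Y|$ and $\log|Y|$ are then $\Theta(1)$; the cases $|Y| \le 1$ are vacuous.

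Set $\ell = \floor{\tfrac{1}{2}\log|Y|}$ (so $\ell \ge 1$) and partition $X$ arbitrarily into $t = \lceil |X|/\ell \rceil$ blocks $X_1, \dots, X_t$, each of size at most $\ell$. We process the blocks one at a time. When we reach $X_i$, we invoke Lemma~\ref{lemma:nxnx1-bulk-subsets} with column set $X_i$, row set $Y$, and the fixed target \CCSconfig{} $c$; this solves exactly those \CCSs{} $(x,y)\in X_i\times Y$ that are \emph{currently} in configuration $c$, using $O(\ell 2^{\ell} + |Y|)$ moves, and leaves every other cubie untouched. Since $2^{\ell} \le 2^{(\log|Y|)/2} = \sqrt{|Y|}$, the exponential term is $\ell 2^{\ell} = O(\sqrt{|Y|}\,\log|Y|) = O(|Y|)$, so each block costs only $O(|Y|)$ moves.

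Two points remain, both immediate from the ``without affecting the rest of the cubies'' guarantee. First, processing $X_i$ moves no cubie in a \CCS{} $(x,y)$ with $x \notin X_i$; in particular, for $j \ne i$ every \CCS{} of $X_j \times Y$ keeps its configuration, so a \CCS{} that started in configuration $c$ is still in configuration $c$ when its block's turn comes. Second, a \CCS{} that has already been solved is never touched again. Hence after all $t$ blocks have been processed, every \CCS{} of $X\times Y$ that was originally in configuration $c$ is solved and nothing else has changed. The move count is $t\cdot O(|Y|) = O\big((|X|/\ell)\,|Y|\big) = O\big(|X|\cdot|Y|/\log|Y|\big)$, using $|X| \ge \ell$ so that $t = O(|X|/\log|Y|)$. (When $|X| < \ell$ we have $t=1$ and a single application of Lemma~\ref{lemma:nxnx1-bulk-subsets} costs $O(|Y|)$ moves; this degenerate regime does not occur in the use of the lemma, where $|X| = |Y| = \floor{n/2}$.)

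The only delicate point is the choice of block size $\ell$: it must be about $\log|Y|$ so that the number of blocks $|X|/\ell$ carries the full logarithmic savings over the naive $O(|X|\cdot|Y|)$ bound, yet no larger, or the inherited $2^{\ell}$ blow-up from Lemma~\ref{lemma:nxnx1-bulk-subsets} would swamp the $\Theta(|Y|)$ cost that is in any case unavoidable per column class. Taking $\ell = \tfrac12\log|Y|$ makes $2^{\ell} = \sqrt{|Y|}$, which renders the exponential term $o(|Y|)$; everything else is bookkeeping. Together with the $O(n)$ moves needed for the edge and cross \CCSs{}, applying this lemma once for each of the six \CCSconfigs{} of Fig.~\ref{figure:moves} with $X = Y = \group{\floor{n/2}}$ then yields an $O(n^2/\log n)$-move solution for the whole $n \times n \times 1$ cube.
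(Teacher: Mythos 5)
Your proposal is correct and follows essentially the same route as the paper's own proof: partition $X$ into blocks of size about $\tfrac{1}{2}\log_2|Y|$ so that the $\ell 2^{\ell}$ term from Lemma~\ref{lemma:nxnx1-bulk-subsets} is dominated by $|Y|$, then sum $O(|Y|)$ over the $O(|X|/\log|Y|)$ blocks. Your extra attention to the degenerate cases ($|Y|=O(1)$, $|X|<\ell$) and to why untouched blocks retain configuration $c$ is a harmless refinement of the same argument.
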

}

\later{
\begin{proof}
Let $\ell = \frac{1}{2} \log_2 |Y|$,
so that $2^\ell = \sqrt{|Y|}$.
Let $k = \lceil |X| / \ell \rceil$.
Partition the set $X$ into
a series of sets $X_1, \ldots, X_{k}$
each of which has size $\le \ell$.
For each $X_i$,
we solve the \CCSs{} in $X_i \times Y$
using the sequence of moves that is guaranteed to exist
by Lemma \ref{lemma:nxnx1-bulk-subsets}.
The number of moves required to solve
a single $X_i$ is
\begin{align*}
O\left(\ell 2^\ell + |Y| \right) = O\left(\left(\frac{1}{2} \log_2 |Y|\right) \sqrt{|Y|} + |Y| \right) = O(|Y|).
\end{align*}
Therefore, if we wish to perform this for $k$ sets,
the total number of moves becomes:
\begin{align*}
O\left(k \cdot |Y|\right) = O\left( \frac{|X|}{\frac{1}{2} \log_2 |Y|} \cdot |Y|\right) = O\left( \frac{|X| \cdot |Y|}{ \log |Y|} \right)
\end{align*}
\qed
\end{proof}
}

\iffull
As a result of this parallelization,
we get the following upper bound
on the diameter of the configuration space:
\fi

\both{
\begin{theorem}
\label{theorem:nxnx1-upper-bound}
Given an $n \times n \times 1$ Rubik's Cube configuration,
all \CCSs{} can be solved in $O(n^2 / \log n)$ moves.
\end{theorem}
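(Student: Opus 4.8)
The plan is to assemble the bound from the lemmas already in hand, in two phases. First I would handle the edge, corner, cross, and center \CCSs{}: the earlier $O(n)$-move lemma solves all of these together, and we run that sequence first. It does not matter that this sequence may disturb some interior clusters (it can flip some interior columns a net-odd number of times); the cube stays in a solvable configuration, so by Lemma~\ref{lemma:nxnx1-group-config} every interior \CCS{} is still in one of the six \CCSconfigs{} of Fig.~\ref{figure:moves}, and the second phase will re-solve them.

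For the second phase, index each interior \CCS{} by the element of its four-location orbit with smallest coordinates; this identifies the interior clusters with a set $X \times Y$ where $X = Y \subseteq \group{\floor{n/2}}$ is an interval of size $\Theta(n)$ obtained by deleting $0$, $n-1$, and (for odd $n$) the median index. Now process the six \CCSconfigs{} $c$ one at a time: for each $c$, apply Lemma~\ref{lemma:nxnx1-bulk-et-al} to $X$, $Y$, and $c$, which solves every interior cluster currently in configuration $c$ in $O(|X|\,|Y|/\log|Y|) = O(n^2/\log n)$ moves without affecting any other cubie. Because solving configuration $c_1$ leaves untouched every cluster that was in configuration $c_2 \ne c_1$, the six applications compose, for a total of $6 \cdot O(n^2/\log n) = O(n^2/\log n)$ moves; together with the $O(n)$ moves of the first phase this gives the theorem.

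The step that needs the most care is checking that the second phase does not disturb the edge, corner, cross, and center clusters fixed in the first phase. Each such cluster lies outside $X \times Y$ (one of its two coordinates is an index excluded from $X = Y$), and every base sequence in Fig.~\ref{figure:moves} uses each of $H_1, H_2, V_1, V_2$ an even number of times; so, after the bulk substitution of Lemma~\ref{lemma:nxnx1-bulk-same}, any cluster not lying entirely in $X \times Y$ is acted on only by horizontal moves or only by vertical moves, each an even number of times --- the identity permutation, exactly as in the proof of Lemma~\ref{lemma:nxnx1-bulk-same}. The remaining details --- that the interior index interval has size $\Theta(n)$ for both parities of $n$, and that all four locations of an interior cluster genuinely avoid the first, last, and median rows and columns --- are routine.
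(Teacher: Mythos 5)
Your proposal is correct and follows essentially the same route as the paper: solve the edge, corner, cross, and center clusters first with the $O(n)$-move lemma, then sweep through the six interior cluster configurations using Lemma~\ref{lemma:nxnx1-bulk-et-al}, for $6 \cdot O(n^2/\log n) + O(n) = O(n^2/\log n)$ moves. Your explicit check that the second phase leaves the first-phase clusters untouched (each bulk sequence hits any cluster not wholly inside $X \times Y$ with only horizontal or only vertical moves, each an even number of times) is exactly the mechanism already guaranteed by Lemma~\ref{lemma:nxnx1-bulk-same}, so it is a careful restatement rather than a new ingredient.
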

}

\later{
\begin{proof}
In order to solve the Rubik's Cube,
we must solve all \CCSs{}
$(x, y) \in \group{\floor{n / 2}} \times \group{\floor{n / 2}}$.
To do so,
we loop through the six possible \CCSconfigs{},
then use the sequence of moves guaranteed to exist
by Lemma \ref{lemma:nxnx1-bulk-et-al}
to solve all of the \CCSs{} which are in a particular configuration.
For a single configuration,
the number of moves that this generates is
\begin{align*}
O\left( \frac{\floor{n/2} \cdot \floor{n / 2}}{\log(\floor{n / 2})}\right) = O\left(\frac{n^2}{\log n}\right).
\end{align*}
When we add this cost up
for the six different configurations,
the total number of moves is
$6 \cdot O(n^2 / \log n) = O(n^2 / \log n)$.
\qed
\end{proof}
}

\subsection{$n \times n \times 1$ Lower Bound}

\ifabstract
Using calculations
involving the maximum degree
of the graph of the configuration space
and the total number of reachable configurations,
we have the matching lower bound:
\else
In this section,
we establish the matching lower bound:
\fi

\both{
\begin{theorem}
\label{theorem:nxnx1-lower-bound}
Some configurations
of an $n \times n \times 1$ Rubik's Cube
are $\Omega(n^2 / \log n)$ moves away from being solved.
\end{theorem}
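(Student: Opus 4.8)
The plan is the ``easy counting argument'' promised in the introduction: compare the number of reachable configurations against the number of configurations that lie within a bounded number of moves of the solved cube.

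First I would show that there are $2^{\Omega(n^2)}$ reachable configurations. Restrict attention to the \CCSs{} $(x,y)$ with $1 \le x, y \le \floor{n/2} - 1$; none of these is an edge, corner, cross, or center cluster, so Lemma~\ref{lemma:nxnx1-group-reachable} supplies, for each of them, a move sequence of length at most six that solves it while fixing every cubie outside it. Running such a sequence backwards from the solved cube instead puts the chosen cluster into \CCSconfig{}~\ref{figure:all-blue} and leaves all other cubies solved. For an arbitrary subset $T$ of these ``interior'' clusters, concatenating the corresponding reversed sequences reaches (from solved) a configuration in which exactly the clusters of $T$ are in \CCSconfig{}~\ref{figure:all-blue} and everything else is solved; the sequences do not interfere because each touches only cubies of its own cluster. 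Distinct subsets give distinct configurations, so the number of reachable configurations is at least $2^{(\floor{n/2}-1)^2} = 2^{\Omega(n^2)}$.

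Next I would bound the number of short solutions. In the $n \times n \times 1$ cube the only legal moves are the $180^{\circ}$ flips of the $n$ columns and the $n$ rows (rotating the single $z=0$ slice is disallowed), so the configuration graph has maximum degree at most $2n$, and the number of configurations within $t$ moves of solved is at most $\sum_{i=0}^{t}(2n)^i < (2n)^{t+1}$. If every reachable configuration were within $t$ moves of solved, then $2^{(\floor{n/2}-1)^2} \le (2n)^{t+1}$, i.e.\ $(t+1)\log_2(2n) \ge (\floor{n/2}-1)^2$, which forces $t = \Omega(n^2/\log n)$. Hence some reachable configuration is $\Omega(n^2/\log n)$ moves from solved.

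The only step carrying any content is the mutual non-interference of the interior clusters used to lower-bound the configuration count, and that is exactly what Lemma~\ref{lemma:nxnx1-group-reachable} already gives; the remainder is arithmetic, so I expect no real obstacle. Combined with Theorem~\ref{theorem:nxnx1-upper-bound}, this shows the diameter of the $n \times n \times 1$ Rubik's Cube is $\Theta(n^2/\log n)$.
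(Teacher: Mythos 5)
Your proposal is correct and follows essentially the same counting argument as the paper: use Lemma~\ref{lemma:nxnx1-group-reachable} to place the $(\floor{n/2}-1)^2$ non-edge, non-cross clusters independently into distinct configurations (you use $2$ per cluster, the paper uses all $6$, which only changes the constant), then compare against the at most $(2n)^{k+1}$ configurations reachable in $k$ moves. No gaps; the argument matches the paper's proof of Theorem~\ref{theorem:nxnx1-lower-bound}.
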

}

\ifabstract
Omitted proofs are in the appendix.
\fi

\later{
\begin{proof}
Lemma \ref{lemma:nxnx1-group-reachable} shows that
for every possible configuration of a \CCS{},
there exists a sequence of moves to solve the \CCS{}
while leaving the rest of the cubies in the same location.
Hence,
the inverse of such a sequence
will transform a solved \CCS{}
to an arbitrary \CCSconfig{}
without affecting any other cubies.
Not counting the edge cubies and the cross cubies,
there are $\left(\lfloor n / 2 \rfloor - 1\right)^2$ \CCSs{},
each of which can be independently placed
into one of six different configurations.
This means that there are at least
$6^{\left(\lfloor n / 2 \rfloor - 1\right)^2}$ reachable configurations.

There are $2n$ possible moves.
Therefore,
the total number of states reachable
using at most $k$ moves
is at most
\begin{align*}
\frac{(2n)^{k + 1} - 1}{2n - 1} \le (2n)^{k + 1}.
\end{align*}
Therefore,
if $k$ is the number of moves
necessary to reach all states,
it must have the property that:
\begin{align*}
6^{\left(\lfloor n / 2 \rfloor - 1\right)^2} &\le (2n)^{k + 1}, \\
\left(\lfloor n / 2 \rfloor - 1\right)^2 &\le \log_6 \left((2n)^{k + 1}\right) = \frac{(k + 1) \log 2n}{\log 6}, \\
\frac{\left(\lfloor n / 2 \rfloor - 1\right)^2 \log 6}{\log 2n} - 1 &\le k.
\end{align*}
Hence, there must exist some configurations
which are $\Omega(n^2 / \log n)$ moves away from solved.
\qed
\end{proof}
}

%%%%%%%%%%%%%%%%%%%%%%%%%%%%%%%%%%%%%%%%%%%%%%%%%%%%%%%%%%%%%%%%%%%%%%%%

\section{Diameter of $n \times n \times n$ Rubik's Cube}
\label{nxnxn}

\ifabstract
\later{\section{Diameter Details --- $n \times n \times n$}}
\fi

\ifabstract
For simplicity, we again assume  that $n$ is even
and ignore all edge and corner cubies.
A more rigorous proof,
which handles these details,
is available in the appendix.
\fi

Because the only visible cubies
on the $n \times n \times n$ Rubik's Cube
are on the surface,
we use an alternative coordinate system.
Each cubie has a face coordinate
$(x, y) \in \group{n} \times \group{n}$.
Consider the set of reachable locations for
a cubie with coordinates $(x, y) \in \group{n} \times \group{n}$
on the front face.
A face rotation of the front face
will let it reach the coordinates
$(n - y - 1, x)$, $(n - x - 1, n - y - 1)$, and $(y, n - x - 1)$
on the front face.
Row or column moves will allow the cubie to move to another face,
where it still has to have one of those four coordinates.
Hence, it can reach 24 locations in total.
For this problem, we define the \CCS{} $(x, y)$
to be those 24 positions that are reachable
by the cubie $(x, y)$.

\later{
We define \concept{edge cubies}
to be cubies with more than one face visible.
If a cluster has an edge cubie,
then all of its cubies are edge cubies.
We call such clusters \emph{edge clusters}.
We define \concept{corner cubies}
to be cubies with more than two faces visible.
All corner cubies are in a single cluster
known as the \concept{corner cluster}.
If $n$ is odd,
we must also define several other types of cubies.
We first define \concept{cross cubies}
to be cubies with face coordinates of the form
$(x, (n - 1) / 2)$ or $((n - 1) / 2, y)$.
If a cluster contains a cross cubie,
then all of its cubies are cross cubies,
and the cluster is called a \emph{cross cluster}.
We define \concept{center cubies}
to be the six cubies with face coordinates
$((n - 1) / 2, (n - 1) / 2)$.
They form a special cluster
which we will call the \concept{center cluster}.
Our goal in solving the Rubik's Cube
will be to make each side match the color
of its center cluster.
Hence, there is no need to solve
the center cluster.

Given a particular \CCSconfig{},
this configuration can be converted to
the solved color configuration
by performing a sequence of pairwise cubie swaps.
If an order is placed on the cubies in the cluster,
as in Figure~\ref{fig:perm_order},
each pairwise cubie swap is a permutation of this order,
and the set of all cubies swaps
generates the permutation group on 24 elements, $S_{24}$.
By permutation group theory,
if an even (odd) number of swaps
can be applied to a color configuration
to transform it to the solved color configuration,
then \emph{any} sequence of swaps
transforming a configuration
to the solved configuration has an even (odd) number of swaps. 
We call this the \concept{parity} of a color configuration.
}

Just as it was for the $n \times n \times 1$ cube,
our goal is to prove that
for each cluster configuration,
there is a sequence of $O(1)$ moves
which can be used to solve the cluster,
while not affecting any other clusters.
For the $n \times n \times 1$ cube,
we wrote these solution sequences
using the symbols $H_1, H_2, V_1, V_2$
to represent a general class of moves,
each of which could be mapped to a specific move
once the cubie cluster coordinates were known.
Here we introduce more formal notation.

Because of the coordinate system we are using,
we distinguish two types of legal moves.
\concept{Face moves} involve taking a single face
and rotating it $90^\circ$ in either direction.
\concept{Row or column moves}
involve taking a slice of the cube
(not one of its faces)
and rotating the cubies in that slice
by $90^\circ$ in either direction.
Face moves come in twelve types,
two for each face.
For our purposes,
we will add a thirteenth type which applies the identity function.
If $a$ is the type of face move,
we write $F_{a}$ to denote the move itself.
Given a particular index $i \in \{1, 2, \ldots, \lfloor n/2 \rfloor - 1\}$,
there are twelve types of row and column moves that can be performed
--- three different axes for the slice,
two different indices ($i$ and $n - i - 1$) to pick from,
and two directions of rotation.
Again, we add a thirteenth type
which applies the identity function.
If $a$ is the type of row or column move,
and $i$ is the index,
then we write $RC_{a, i}$ to denote the move itself.

A \concept{cluster move sequence}
consists of three type sequences:
face types $a_1, \ldots, a_\ell$,
row and column types $b_1, \ldots, b_\ell$,
and row and column types $c_1, \ldots, c_\ell$.
For a cubie cluster $(x, y)$,
the sequence of actual moves
produced by the cluster move sequence
is $F_{a_1}, RC_{b_1, x}, RC_{c_1, y}, \ldots,
F_{a_\ell}, RC_{b_\ell, x}, RC_{c_\ell, y}$.
A \concept{cluster move solution} for a cluster configuration $d$
is a cluster move sequence with the following properties:
\begin{enumerate}

\item
\label{property:pseudo-solves}
For any
$(x, y) \in \{1, 2, \ldots, \lfloor n / 2 \rfloor - 1\} \times
\{1, 2, \ldots, \lfloor n / 2 \rfloor - 1\}$, if
cluster $(x, y)$ is in configuration $d$, then it can be solved using
the sequence of moves $F_{a_1}, RC_{b_1, x}, RC_{c_1, y}, \ldots,
F_{a_\ell}, RC_{b_\ell, x}, RC_{c_\ell, y}$.

\item
\label{property:pseudo-mirror}
The move sequence $F_{a_1}, RC_{b_1, x}, RC_{c_1, y}, \ldots, F_{a_\ell}, RC_{b_\ell, x}, RC_{c_\ell, y}$
does not affect cubie cluster $(y, x)$.

\item
\label{property:pseudo-other}
All three of the following sequences of moves do not affect the
configuration of any cubie clusters:
\begin{align*}
&F_{a_1}, RC_{b_1, x}, F_{a_2}, RC_{b_1, x}, \ldots, F_{a_\ell}, RC_{b_\ell, x}; \\
&F_{a_1}, RC_{c_1, y}, F_{a_2}, RC_{c_1, y}, \ldots, F_{a_\ell}, RC_{c_\ell, y}; \\
&F_{a_1}, F_{a_2}, \ldots, F_{a_\ell}.
\end{align*}
\end{enumerate}
Our goal is to construct a cluster move solution
for each possible cubie cluster configuration,
and then use the cluster move solution
to solve multiple cubie clusters in parallel.

In the speed cubing community,
there is a well-known technique for solving $n \times n \times n$ Rubik's Cubes
in $O(n^2)$ moves,
involving a family of constant-length cluster move sequences.
\ifabstract
These sequences are attributed to Ingo Sch\"{u}tze \cite{schutze},
but due to their popularity in the speed cubing community,
their exact origins are unclear.
\fi
These cluster move sequences
can be combined to construct
constant-length cluster move solutions
for all possible cluster configurations,
which is precisely what we wanted.
\ifabstract
A detailed explanation and proof of correctness for this method
can be found in the appendix.
\fi

\later{
Before dealing with the general solution, we address fixing the parity of the cubie clusters.
This allows us to assume that the parity of all clusters is even for the remainder of the paper.

\begin{figure}
\centering
\includegraphics[scale=0.8]{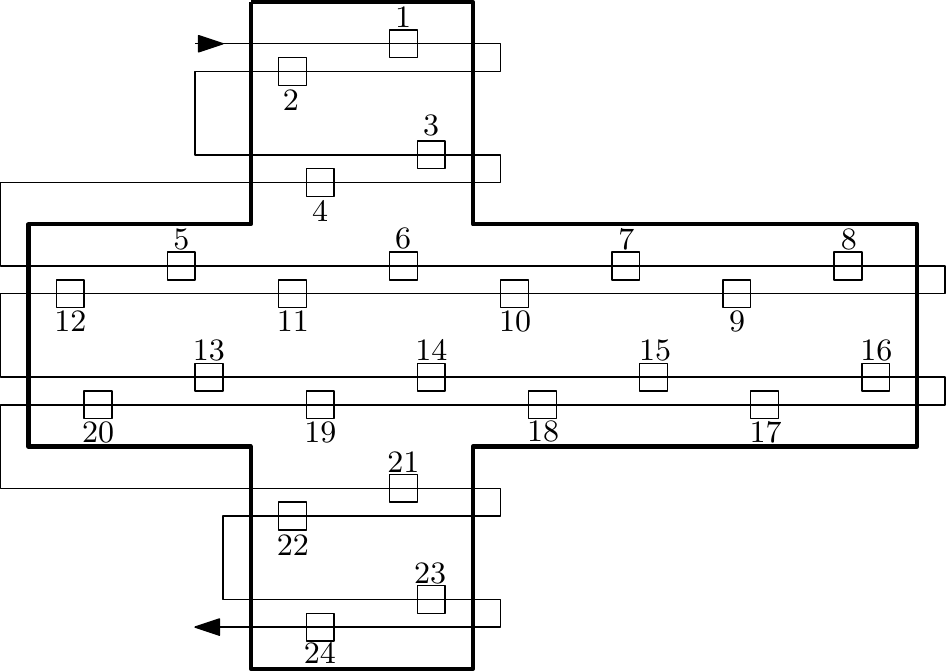}
\caption{An ordering of a 24-\CCS{}.}
\label{fig:perm_order}
\end{figure}

\begin{lemma}
\label{lemma:even-parity-linear}
Given a solvable $n \times n \times n$ Rubik's Cube configuration, the parity of all \CCSs{} can be made even in $O(n)$ moves.
\end{lemma}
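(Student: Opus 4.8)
The plan is to reduce this to a short linear-algebra argument over $\mathrm{GF}(2)$, the two-element field. The input I would take from the setup above is: every cubie cluster $C$ carries a well-defined parity $\pi_C \in \{0,1\}$, and applying a single legal move $\mu$ changes $\pi_C$ by an amount $t_C(\mu) \in \{0,1\}$ that depends only on $\mu$ and not on the current configuration — namely the sign (mod $2$) of the permutation that $\mu$ induces on the cubies of $C$, using that $\mathrm{sgn}$ is a homomorphism. Collecting these over all clusters, each move $\mu$ gets a parity-change vector $t(\mu) = (t_C(\mu))_C$, and performing any sequence of moves changes the global parity vector by the sum (mod $2$) of the vectors $t(\mu)$ of the moves performed, independently of the order in which they are performed.

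Now let $v = (\pi_C)_C$ be the parity vector of the given configuration. Since the configuration is solvable, there is a move sequence $\mu_{i_1},\dots,\mu_{i_L}$ carrying the solved cube (whose parity vector is $0$) to it, so $v = t(\mu_{i_1}) + \cdots + t(\mu_{i_L}) \pmod 2$; in particular $v$ lies in the $\mathrm{GF}(2)$-span of $\{\, t(\mu) : \mu \text{ a legal move}\,\}$. The key point is that there are only $O(n)$ legal moves on the $n \times n \times n$ cube: $12$ face moves, plus $12$ row/column moves for each of the $\lfloor n/2 \rfloor - 1$ slice indices. Over $\mathrm{GF}(2)$ the span of a set of vectors is exactly the set of its subset-sums, so there is a set $S$ of at most $O(n)$ legal moves with $v = \sum_{\mu \in S} t(\mu)$. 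Performing the moves of $S$, in any order, changes the parity vector by $v$, leaving it at $v + v = 0$; that is, every cluster now has even parity, and $|S| = O(n)$ moves were used. Note that these corrective moves are allowed to scramble the cube arbitrarily, since everything afterward only needs to assume that the parities are even.

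The part that needs real care — and which I expect to be the main obstacle — is justifying the first paragraph uniformly for \emph{every} type of cluster: not only the $24$-cubie clusters, for which the parity framework was set up above, but also the edge clusters, the corner cluster, and, for odd $n$, the cross and center clusters. For edge, cross, and center clusters the cubie colors do not by themselves determine a permutation, since several cubies may share a color pair, so one must fix a reference solved arrangement per cluster, define $\pi_C$ as the sign of the permutation to that reference, and then verify that this is still well-defined and still shifts by the move-dependent amount $t_C(\mu)$ under each move; for the corner cluster the corresponding statement is the classical sign analysis of corner permutations. One could, if desired, make the vectors $t(\mu)$ explicit (a face quarter-turn is an odd permutation on every $24$-cluster, while an inner-slice quarter-turn at index $i$ is a single $4$-cycle on, hence flips the parity of, precisely the clusters whose coordinates involve $i$), but the counting argument above does not require this: once the parity bookkeeping is in place for all cluster types, the $O(n)$ bound is immediate.
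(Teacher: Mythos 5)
Your argument is correct, but it takes a genuinely different route from the paper. The paper's proof is explicit and constructive: it first disposes of all clusters containing indistinguishable cubies (their parity can be \emph{declared} even by choice of labeling), solves the corner cluster and the edge cross cluster outright in $O(1)$ moves, and then computes the sign of each move type on the remaining edge clusters --- a face quarter-turn acts as two $4$-cycles (even) on an edge cluster, while an inner-slice quarter-turn acts as a single $4$-cycle on exactly one edge cluster --- so one corrective slice move per odd edge cluster gives the $O(n)$ bound, with the moves identified concretely. Your proof replaces all of that by the observation that parities shift additively and configuration-independently, so the parity vector of a solvable configuration is a subset-sum over $\mathrm{GF}(2)$ of the $O(n)$ vectors $t(\mu)$, and applying that subset once cancels it; this is shorter, needs no per-move sign computations, and would generalize to any additive parity invariant, at the cost of being existential as stated (it does not say which moves to apply, though one could extract them by solving the linear system) and of not leaving the corner and cross clusters solved, a side benefit of the paper's version that its Theorem later exploits. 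Two small points: the cleanest way to handle the well-definedness issue you flag is the paper's, namely to drop the repeated-color clusters from the parity vector altogether (for the $24$-cubie face clusters the coloring genuinely does \emph{not} determine the sign, so ``verify that this is still well-defined'' cannot succeed there; only the edge, corner, and middle-edge cross clusters carry a coloring-determined parity, and those are the only ones Lemma~\ref{lemma:even-parity-linear} actually needs); alternatively, tracking cubie identities along some scramble realizing the configuration also works, as you suggest. Also, your parenthetical explicit values of $t(\mu)$ are not quite right (a face quarter-turn is \emph{even} on edge clusters, and a slice turn can act by two $4$-cycles on a cluster both of whose coordinates involve the index), but as you note the argument never uses them.
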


\begin{proof}
By definition,
the center cluster is already solved,
and therefore we may assume that 
its parity is already even.
In addition, any cluster
containing at least two indistinguishable cubies
can be considered to have even parity or odd parity
depending on the chosen label for the indistinguishable cubies.
Therefore, we may assume that all such clusters
have even parity.
This means that all non-edge clusters,
including the non-edge cross clusters,
can be assumed to have the correct parity
no matter how many moves are performed.
So we need only fix
the parity of the edge clusters.

We begin by fixing the parity of the corner cluster
and the edge cross cluster (if it exists).
Because the cube is solvable,
we know that the corner cluster
and the edge cross cluster can be solved.
Because the corner cluster
has $O(1)$ reachable states
and the edge cross cluster has $O(1)$ reachable states,
we know that we can solve both in $O(1)$ moves.
Once those two clusters are solved,
we know that their parities must be correct.
Therefore, there is a sequence of $O(1)$ moves
which can be used to fix the parity of those clusters.

Consider the effect of a face move
on the parity of a non-cross edge cluster.
For a particular edge cluster,
a face move affects the location of eight cubies,
due to the fact that a face move
also acts like a row or column move
for edge cubie groups.
The color of each cubie is rotated $90^{\circ}$
in the direction of the face's rotation.
This means that the permutation applied
consists of two permutation cycles
each containing four elements.
Therefore,
if the elements whose colors are changed
are $1, 2, 3, \ldots, 8$,
then we can write the applied permutation as
$(1~3)(1~5)(1~7)(2~4)(2~6)(2~8)$,
or six swaps.
Hence face moves cannot be used to fix 
the parity of the edge clusters.

Now consider the effect of a row or column move
on the parity of a non-cross edge cluster.
A row or column move
affects the colors of four cubies,
one for each corner of the rotated slice.
The color of each cubie
is transferred to the adjacent cubie
in the direction of the move rotation.
So if the elements whose colors are changed are $1, 2, 3, 4$,
then the applied permutation is $(1~2~3~4) = (1~2)(1~3)(1~4)$.
Because the permutation can be written as an odd number of swaps,
the parity of the cluster has changed.
Note, however, that there is exactly one edge cluster
whose parity is affected by this movement.
Therefore,
we can correct the parity of each odd edge cluster
by performing a single row or column move
that affects the cluster in question.
The total number of moves required
is therefore proportional to the number of edge clusters,
or $O(n)$.
\qed
\end{proof}

\begin{lemma}
\label{lemma:invariant-atom}
For each permutation in Table~\ref{table:invariant-moves}, there exists a cluster move sequence of length $O(1)$ which can be used to apply the given permutation to the cluster while applying the identity permutation to every other cluster.
\end{lemma}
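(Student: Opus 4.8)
The plan is to exhibit, for each row of Table~\ref{table:invariant-moves}, an explicit cluster move sequence of constant length and to check the two required properties directly. By Lemma~\ref{lemma:even-parity-linear} we may assume every cluster already has even parity, so each permutation to be realized is even; the role of the table (used in the lemmas that follow) is to supply a generating set, but here we need only realize each entry in isolation.

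The ``identity on every other cluster'' requirement decomposes along exactly the lines already built into the notion of a cluster move solution. Instantiated at a cluster $(x,y)$, a face move $F_{a_i}$ cyclically rotates the four cubies that \emph{any} non-edge cluster has on the turned face, so $F_{a_i}$ touches every cluster; by contrast $RC_{b_i,x}$ touches only the clusters $(x^*,y^*)$ with $\{x^*,y^*\}\cap\{x,n-1-x\}\neq\emptyset$, and $RC_{c_i,y}$ only those with $\{x^*,y^*\}\cap\{y,n-1-y\}\neq\emptyset$. Hence, as long as $x\neq y$ and $x+y\neq n-1$, the only cluster besides $(x,y)$ touched by moves of \emph{both} the $b$-family and the $c$-family is the mirror cluster $(y,x)$; every remaining cluster sees, as its induced subsequence of the full sequence, exactly one of the three sequences $F_{a_1},\dots,F_{a_\ell}$, or $F_{a_1},RC_{b_1,x},F_{a_2},RC_{b_2,x},\dots$, or $F_{a_1},RC_{c_1,y},F_{a_2},RC_{c_2,y},\dots$. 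So it is enough to design each sequence so that those three subsequences each act as the identity on every cluster (this is Property~\ref{property:pseudo-other}), so that the full sequence fixes $(y,x)$ (Property~\ref{property:pseudo-mirror}), and so that the full sequence applies the prescribed permutation to $(x,y)$. The degenerate families $x=y$ and $x+y=n-1$ — where the mirror collapses onto $(x,y)$, or where extra clusters become doubly touched — are dealt with by minor variants of the same sequences, or folded into the separate edge/cross analysis.

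To meet these conditions simultaneously I would use the standard speed-cubing commutator and conjugation constructions: pick a face move $F$ and a localized move $RC_{\cdot,x}$ whose supports meet the footprint of cluster $(x,y)$ in only one or two cubies, and form the commutator $F\,(RC_{\cdot,x})\,F^{-1}\,(RC_{\cdot,x})^{-1}$, possibly pre- and post-conjugated by further face and $RC_{\cdot,y}$ moves. A commutator of two permutations with tiny overlapping support is a $3$-cycle or a product of two transpositions, which on cluster $(x,y)$ is precisely the kind of elementary permutation in the table; moreover the commutator shape is exactly what makes the face-only, $b$-, and $c$-projections telescope to the identity, while a mirror-symmetric choice of move types keeps $(y,x)$ fixed. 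For each permutation in Table~\ref{table:invariant-moves} one then records such a sequence, and, because it has constant length, its action on the $24$-cubie cluster and on each of the finitely many touched clusters is a finite computation checkable by inspection.

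I expect the main obstacle to be making the side conditions hold \emph{exactly} — the requirements matching Properties~\ref{property:pseudo-mirror} and~\ref{property:pseudo-other}. Unlike the $n\times n\times1$ case, where performing each move an even number of times automatically annihilated every side effect, here the face moves are common to all clusters and the $b$- and $c$-moves overlap inside the mirror cluster, so the cancellations must be engineered into the commutator structure rather than obtained for free. The bulk of the (routine but fiddly) work is thus to confirm that every permutation in the finite table admits a sequence all of whose projections collapse as required, and — for use in the subsequent lemmas — that the listed permutations really do generate the group of clusterwise-achievable even configurations.
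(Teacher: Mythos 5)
There is a genuine gap in your key construction. Your building block is a (possibly conjugated) commutator $F\,RC\,F^{-1}\,RC^{-1}$ of a face move with a single row/column move, and you claim this is a $3$-cycle or a pair of transpositions because the two moves have ``tiny overlapping support.'' But the supports of a face move and a slice move are not tiny in their overlap: the rotated slice passes through the rotated face in a full line of $n$ stickers (plus stickers of the adjacent ring), so the commutator displaces $\Theta(n)$ cubies spread over $\Theta(n)$ distinct clusters. Pre- and post-conjugation cannot repair this, since conjugation preserves cycle structure and hence the size of the support; so the sequence you describe is not the identity on all other clusters, and the condition you actually state (that the two moves meet the \emph{footprint of the cluster} $(x,y)$ in one or two cubies) is not the condition the commutator trick needs, which is that the two permutations' \emph{global} supports intersect in essentially one cubie. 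The paper avoids exactly this trap by using a nested commutator: the ten-move sequence $S$ (attributed to Sch\"utze) is of the form $A\,H\,A^{-1}\,H^{-1}$ where $A$ is itself a commutator of two perpendicular slice moves, so $A$ has $O(1)$ support meeting the rotated face in a single cubie; confinement to one cluster is then established by an explicit case analysis over the faces, and the $24$ entries of Table~\ref{table:invariant-moves} come from the $24$ choices of front/left/upper orientation rather than from a separate sequence per entry.

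A secondary weakness is that you defer the entire verification (``routine but fiddly'') of the analogues of Properties~\ref{property:pseudo-mirror} and~\ref{property:pseudo-other} and of the action on the target cluster; in this lemma that verification \emph{is} the content, and in the paper it occupies the whole case analysis. (Your closing remark that the table must generate the achievable even permutations belongs to the next lemma, not this one, and your appeal to Lemma~\ref{lemma:even-parity-linear} is not needed here, since the present lemma only realizes the listed $3$-cycles.) To fix your argument you would need to replace the face-by-slice commutator with a slice-by-slice commutator (or an equivalent constant-support word) before commutating with the face move, and then actually carry out the per-face check of which cubies move.
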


\begin{proof}
First, we introduce some more notation for Rubik's Cube moves.
It is more specific than
the formalism of the cluster move sequence introduced above,
making it easier to express a particular set of moves,
but is more difficult to analyze
when we parallelize this move sequence.

Consider facing the cube from in front (the front face is the face in the $xz$-plane with the more negative $y$-value).
From this view, there are horizontal moves that rotate a slice of the cube parallel to the $xy$-plane.
Rotating the $i$th slice from the top in the clockwise direction $90^{\circ}$ as viewed from above the cube is denoted by $H_i^{CW}$.
Rotating this same slice in the opposite direction is denoted by $H_i^{CCW}$.
Similarly, rotating a slice parallel to the $yz$-plane is a vertical move, and rotating the $j$th slice from the left side of the cube in the clockwise direction as viewed from left of the cube is denoted $V_j^{CW}$, while the counter-clockwise version is denote $V_j^{CCW}$.
Finally, we define a third type of move which rotates a slice parallel to the front face.
Counting inward from the front face, we denote rotating the $k$th slab $90^{\circ}$ clockwise as $D_k^{CW}$, while rotating it $90^{\circ}$ in the opposite direction is $D_k^{CCW}$.
See Figure~\ref{fig:nxnxn_moves}.

\begin{figure}
\centering
\includegraphics[scale=0.7]{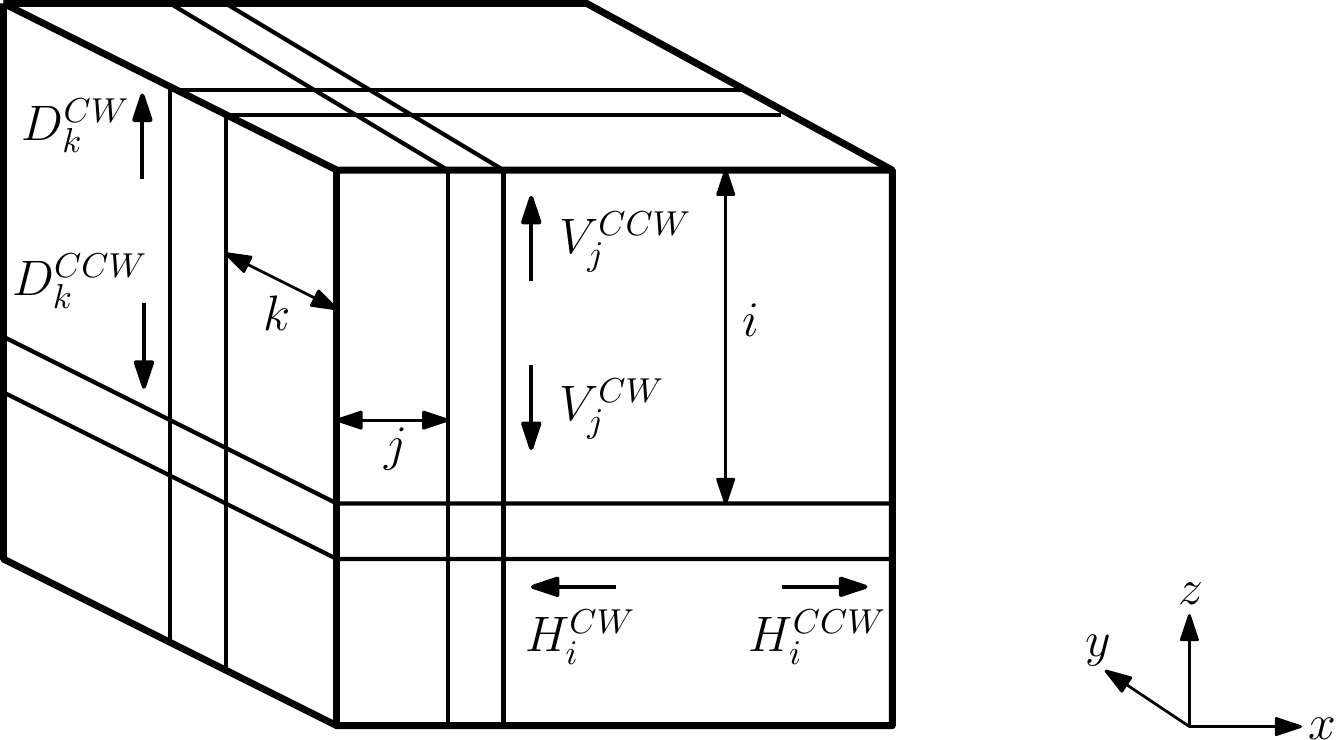}
\caption{The definitions of the various moves defined on an $n \times n \times n$ Rubik's Cube.}
\label{fig:nxnxn_moves}
\end{figure}

We claim that the move sequence $S = V_n^{CCW} \circ D_m^{CW} \circ V_n^{CW} \circ D_m^{CCW} \circ H_0^{CW} \circ D_m^{CW} \circ V_n^{CCW} \circ D_m^{CCW} \circ V_n^{CW} \circ H_0^{CCW}$ swaps the colors of three cubies in a single cluster, while leaving the color configurations of all other clusters the same.
This move sequence is attributed to Ingo Sch\"{u}tze \cite{schutze}, but is fairly well-known within the speed cubing community, so its origins are unclear.
The effect of applying $S$ is shown by case analysis of individual cubies.
When applying the sequence $S$, only cubies lying on the union of the slices rotated by the $V_{n}$, $D_n$, and $H_0$ moves are affected.

Blocks lying on the bottom face are unaffected, as they never reach the top face and thus only have the subsequence $V_n^{CCW} \circ D_m^{CW} \circ V_n^{CW} \circ D_m^{CCW} \circ D_m^{CW} \circ V_n^{CCW} \circ D_m^{CCW} \circ V_n^{CW}$ applied, which does not affect the final location of a cube. 
Blocks starting on the back and right faces never move to the upper face, as each move that could place these on the upper face (moves $V_n^{CW}$ and $D_m^{CCW}$) is preceded by a move rotating these cubies onto the bottom face (moves $V_n^{CCW}$ and $D_m^{CW}$).

Now consider the cubies on left face in the slice rotated by $D_m^{CW}$ and $D_m^{CCW}$.
Exactly one of these cubies is in the same cluster as the cubie on the upper face that lies in the slices rotated by \emph{both} $V_n^{CW}$ and $D_m^{CW}$.
All other such cubies cannot be affected by $D_m^{CW}$ and $D_m^{CCW}$ moves, so applying $S$ has the same effect as applying the sequence $D_m^{CW} \circ D_m^{CCW} \circ H_0^{CW} \circ D_m^{CW} \circ D_m^{CCW} \circ H_0^{CCW}$.
Canceling the $D_m^{CW}$ and $D_m^{CCW}$ moves yields $H_0^{CW}, H_0^{CCW}$ and thus these cubies are unaffected.
Now consider the single cubie on the left face in the same cluster as the cubie on the upper face that lies in the slices rotated by \emph{both} $V_n^{CW}$ and $D_m^{CW}$.
Tracing the locations visited by this cubie when $S$ is applied to it shows that the cubie travels to the upper face (via $D_m^{CW}$), the front face (via $L_n^{CW}$), the upper face (via $V_n^{CCW}$), and then the left face (via $D_m^{CCW}$).
So the cubie's location is unaffected by $S$.

Next consider the cubies on the front face.
Exactly one of these cubies is in the same cluster as the cubie on the upper face that lies in the slices rotated by both $V_n^{CW}$ and $D_m^{CW}$.
All other such cubies cannot be affected by $V_n^{CW}$ and $V_n^{CCW}$ moves, so applying $S$ has the same effect as applying the sequence $D_m^{CW} \circ D_m^{CCW} \circ H_0^{CW} \circ D_m^{CW} \circ D_m^{CCW} \circ H_0^{CCW}$. 
For the cubie in the same cluster as the cubie on the upper face that lies on both the $V_n^{CW}$ and $D_m^{CW}$ slices, applying $S$ to it results in moving it, in sequence, to the left side of the upper face ($V_n^{CCW}$), right face ($D_m^{CW}$), upper face ($D_m^{CCW}$), the back side of the upper face ($H_0^{CW}$), and the left side of the upper face ($H_0^{CCW}$).
So applying $S$ to this cubie moves it to the location of the cubie in its cluster in the left side of the upper face. 

Finally, consider the cubies on the upper face.
Divide the cubies into three sets based upon the cubies in their clusters.
Each cubie on the upper face either is in the cluster containing the cubie lying in slices rotated by both $V_n^{CW}$ and $D_m^{CW}$, or is in a cluster containing a cubie lying in exactly one of the slices rotated by $V_n^{CW}$ and $D_m^{CW}$, or is in a cluster that does not contain any elements in either of the slices rotated by $V_n^{CW}$ and $D_m^{CW}$.
If a cubie lies in a cluster that does not contain any elements in either of the slices rotated by $V_n^{CW}$ and $D_m^{CW}$, then it cannot be affected by $V_n^{CW}$, $V_n^{CCW}$, $D_m^{CW}$, or $D_m^{CCW}$ moves.
So applying $S$ to it is equivalent to applying $H_0^{CW}, H_0^{CCW}$ to it, and thus does not affect its position.
If a cubie lies in the $V_n^{CW}$ slice but is not in the cluster containing the cubie in both $V_n^{CW}$ and $D_m^{CW}$ slices, then applying $S$ to it is equivalent to applying $V_n^{CCW} \circ V_n^{CW} \circ H_0^{CW} \circ V_n^{CCW} \circ V_n^{CW} \circ H_0^{CCW}$ (the identity) as it can never lie in the $D_m^{CW}$ slice. 
Similarly, if a cubie lies in the $D_m^{CW}$ slice but is not in the cluster containing the cubie in both $V_n^{CW}$ and $D_m^{CW}$ slices, then it can never lie in the $V_n^{CW}$ slice, and so applying $S$ to it is equivalent to applying $D_m^{CW} \circ D_m^{CCW} \circ H_0^{CW} \circ D_m^{CW} \circ D_m^{CCW} \circ H_0^{CCW}$, the identity.

Now consider the four cubies on the upper face in the same cluster as the cubie lying in both $V_n^{CW}$ and $D_m^{CW}$ slices.
The cubie lying on the left side of the upper face \emph{is} the cubie lying in both the $V_n^{CW}$ and $D_m^{CW}$ slices, and applying $S$ to it results in moving it, in sequence, to the upper side of the back face ($V_n^{CCW}$), the left side of the upper face ($V_n^{CW}$), the bottom side of the left face ($D_m^{CCW}$), the left side of the upper face ($D_m^{CW}$), the upper side of the back face ($V_n^{CCW}$), the left side of the upper face ($V_n^{CW}$), and the front side of the upper face ($H_0^{CCW}$).
So the result is moving the cubie from the left side to the front side of the upper face.
The cubie lying on the front side of the upper face initially lies in neither the $V_n^{CW}$ nor the $D_m^{CW}$ slice.  
So the moves in $S$ before $H_0^{CW}$ do no affect it.
Applying the subsequence of moves starting at $H_0^{CW}$ move it, in sequence, to the left side of the upper face ($H_0^{CW}$), the upper side of the right face ($D_m^{CW}$), the left side of the upper face ($D_m^{CCW}$), and the left side of the front face ($V_n^{CW}$).
So the result is moving the cubie from the front side of the upper face to the left side of the front face.
The cubie lying on the back side of the upper face moves visits the back face ($V_n^{CCW}$, $V_n^{CW}$) and the right side of the upper face ($H_0^{CW}$, $H_0^{CCW}$), but is not affected by $S$, and the cubie lying on the right side of the upper face only visits the front side of the upper face ($H_0^{CW}$, $H_0^{CCW}$) and thus is not affected by $S$.

In summary, applying $S$ to the cube results in changing the locations of exactly three cubies of a single cluster, those lying on the left and front sides of the upper face, and the cubie lying on the left side of the front face.
All other cubies of the cube are left unchanged.
The three affected cubies each move into the location of another, with the cubie on the left side of the upper face moving to the location of the cubie on the front side of the upper face, the cubie on the front side of the upper face moving to the location of the cubie on the left side of the front face, and the cubie on the left side of the front face moving to the location of the cubie on the left side of the upper face.
As seen in Figure~\ref{fig:triple_rotate}.
the result of applying $S$ to a cluster is to ``rotate'' the locations of three cubies, and in effect rotate the colors of the cubies at these three locations.
Note that rotating three elements is equivalent to performing a pair of swaps, just as the permutation $(11~19~22) = (11~19)(11~22)$

The choices for which faces are front, left and upper are arbitrary, and there are 24 choices for such a set (six choices for the front face, and four choices for the upper face for each choice of front face).
For a specific cluster, each choice of front, left and upper faces implies a permutation resulting from applying $S$.
Using the ordering of the cubies in a cluster defined in Figure~\ref{fig:perm_order}, a resulting set of 24 permutations is generated (as seen in Table~\ref{table:invariant-moves}).
\qed
\end{proof}

\begin{figure}[t]
\centering
\includegraphics[scale=0.8, trim=0.77in 0in 0in 0in, clip]{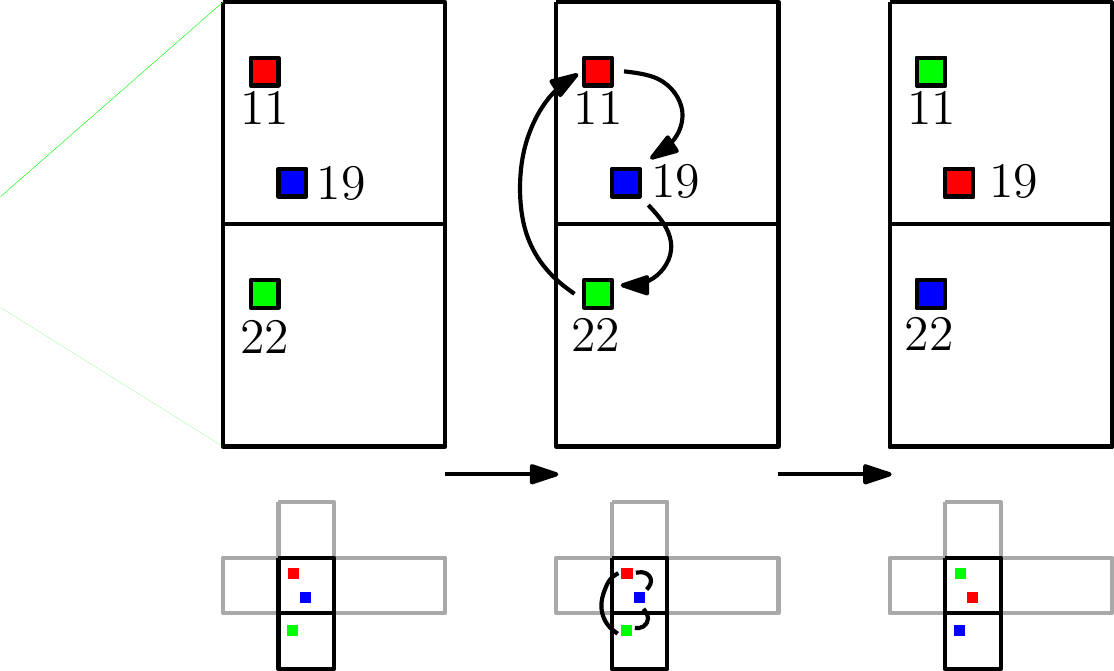}
\caption{The resulting movement of the three cubies of a cluster resulting from applying the move sequence $S = V_n^{CCW} \circ D_m^{CW} \circ V_n^{CW} \circ D_m^{CCW} \circ H_0^{CW} \circ D_m^{CW} \circ V_n^{CCW} \circ D_m^{CCW} \circ V_n^{CW} \circ H_0^{CCW}$.}
\label{fig:triple_rotate}
\end{figure}

\begin{table}[t]
\centering
\begin{tabular}{| c c c c c c |}
\hline
$(1~2~12)$ & $(4~3~10)$ & $(2~4~11)$ & $(3~1~9)$ & $(5~12~8)$ & $(20~13~19)$ \\ $(12~20~24)$ & $(13~5~4)$ & $(6~11~5)$ & $(19~14~18)$ & $(11~19~22)$ & $(14~6~3)$ \\
$(7~10~6)$ & $(18~15~17)$ & $(10~18~21)$ & $(15~7~1)$ & $(8~9~7)$ & $(17~16~20)$ \\
$(9~17~23)$ & $(16~8~2)$ & $(21~22~13)$ & $(24~23~15)$ & $(22~24~16)$ & $(23~21~14)$ \\
\hline
\end{tabular}
\caption{A set of 24 permutations that can be applied to a cluster while leaving all other clusters unchanged.}
\label{table:invariant-moves}
\end{table}

\begin{lemma}
\label{lemma:nxnxn-group-reachable}
Any cluster configuration with even parity
can be solved using a cluster move solution
of length $O(1)$.
\end{lemma}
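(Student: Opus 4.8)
The plan is to reduce the claim to a statement about $A_{24}$ and then invoke the atomic moves of Lemma~\ref{lemma:invariant-atom}. Fix an even-parity cluster configuration $d$. By definition of parity there is an even permutation $\rho$ of the $24$ cubie positions carrying $d$ to the solved configuration, and applying $\rho$ to a cluster in configuration $d$ solves that cluster. So it suffices to exhibit, for \emph{every} $\rho \in A_{24}$, a cluster move sequence of length $O(1)$ whose instantiation at an arbitrary cluster $(x,y)$ applies $\rho$ to $(x,y)$ and leaves every other cluster untouched.

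First I would show that the $24$ three-cycles in Table~\ref{table:invariant-moves} generate $A_{24}$. This is a finite check, and the clean way is via the support graph: put a vertex for each of the $24$ positions and an edge between two positions whenever they occur together in one of the listed $3$-cycles. If this graph is connected then the generated group $G \le A_{24}$ is transitive; moreover, in any nontrivial block system every generating $3$-cycle would have to keep its three moved points inside a single block (a $3$-cycle cannot shuffle blocks while fixing all but three points), so all edges of the support graph lie within blocks, and connectedness forces the block system to be trivial. Hence $G$ is primitive, and a primitive permutation group of degree $24$ containing a $3$-cycle is $A_{24}$ by Jordan's theorem. One then just verifies the table's support graph is connected. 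Consequently $\rho$ factors as $g_t \circ \cdots \circ g_1$ with each $g_i$ a table permutation and $t$ bounded by the diameter of the Cayley graph of $A_{24}$ on these generators, a constant independent of $n$.

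Next I would concatenate. For each $g_i$, Lemma~\ref{lemma:invariant-atom} supplies a cluster move sequence $M_i$ of length $O(1)$ that, instantiated anywhere, applies $g_i$ to that cluster and fixes every other cluster (positions and orientations). Splicing the type sequences of $M_1, \dots, M_t$ together — in the order that performs $M_1$ first — produces a single cluster move sequence $M$ of length $t \cdot O(1) = O(1)$ that, instantiated at $(x,y)$, applies $\rho = g_t \circ \cdots \circ g_1$ to cluster $(x,y)$ and does nothing to any other cluster. It remains to check $M$ is a cluster move solution for $d$, i.e.\ has Properties~\ref{property:pseudo-solves}--\ref{property:pseudo-other}. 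Property~\ref{property:pseudo-solves} is immediate from the choice of $\rho$. Property~\ref{property:pseudo-mirror} holds because $M$ instantiated at $(x,y)$ affects only cluster $(x,y)$, which is distinct from cluster $(y,x)$ (for $x \ne y$). For Property~\ref{property:pseudo-other}, I would observe that each $M_i$ already has the self-cancelling feature that deleting all its row/column moves of index $x$, or all of index $y$, or all of both, leaves the identity move sequence: this can be read directly off the explicit word $S$ in the proof of Lemma~\ref{lemma:invariant-atom}, where each such deletion leaves the surviving moves to cancel in inverse pairs, and the same holds for the $24$ orientation-conjugates of $S$; the feature is preserved under concatenation, so $M$ inherits it.

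The main obstacle is Property~\ref{property:pseudo-other}: one must set up the encoding of $S$, and of all $24$ of its front/up-face conjugates, into the triples $(F_{a_i}, RC_{b_i,x}, RC_{c_i,y})$ so that the two non-face axes of $S$ land consistently in the index-$x$ and index-$y$ slots, and then confirm that the pairwise cancellations really do survive each of the three deletions in every one of the $24$ cases. Establishing cleanly that the table's $24$ permutations generate $A_{24}$ (rather than leaning on a brute-force computation) is the other spot that needs a little care; both points are routine but should be spelled out.
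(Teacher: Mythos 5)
Your proposal is correct, and its skeleton is the same as the paper's: take the constant-length atomic sequences of Lemma~\ref{lemma:invariant-atom}, argue that the $24$ permutations of Table~\ref{table:invariant-moves} generate $A_{24}$, factor the (even) permutation needed for the given configuration into boundedly many generators, and concatenate. Where you genuinely diverge is in how the generation of $A_{24}$ is certified and in how much of the ``cluster move solution'' definition you actually verify. The paper simply asserts that a GAP computation shows the table generates $A_{24}$ and then concludes, leaving Properties~\ref{property:pseudo-mirror} and~\ref{property:pseudo-other} implicit; you replace the black-box computation with a structural argument (connectivity of the support graph gives transitivity; a $3$-cycle must keep its support inside a block, so connectivity also rules out nontrivial block systems, giving primitivity; Jordan's theorem then forces $A_{24}$, since all generators are even), reducing the computer check to the connectivity of the support graph, which indeed holds for Table~\ref{table:invariant-moves}. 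You also explicitly check that the word $S$ and its $24$ orientation-conjugates collapse to the identity under each of the three deletions demanded by Property~\ref{property:pseudo-other}, and that these cancellations survive concatenation --- a point the paper never spells out even though it is exactly what Lemma~\ref{lemma:nxnxn-bulk-same} later relies on. The paper's route buys brevity; yours buys a computer-free, hand-checkable generation argument and an actual verification that the constructed sequence satisfies the definition of a cluster move solution, at the cost of the bookkeeping you flag (mapping the two non-face axes of each conjugate of $S$ consistently into the index-$x$ and index-$y$ slots), which is routine but does need to be written out.
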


\begin{proof}
By Lemma~\ref{lemma:invariant-atom}, there exist a set of permutations that can be applied to any single cluster while applying the identity permutation to every other cluster.
It can be shown using the GAP software package \cite{GAP} that this set of permutations generates $A_{24}$, the set of even permutations on 24 elements.
Thus any even permutation can be written as a composition of these permutations and has an inverse that can also be written as the composition of these permutations.
Because each cluster has finite size, the inverse composition must have $O(1)$ length.
So there exists a $O(1)$-length sequence of moves that can be applied to the cube that results in one cluster having the solved color configuration, and all other \CCSs{} having unchanged color configurations.
\qed
\end{proof}
}

\subsection{$n \times n \times n$ Upper Bound}

As in the $n \times n \times 1$ case,
our goal here is to find a way to solve several different clusters in parallel,
so that the number of moves for the solution
is reduced from $O(n^2)$ to $O(n^2 / \log n)$.
\ifabstract
Say that we are given a set of
columns $X = \{x_1, \ldots, x_\ell\}$
and rows $Y = \{y_1, \ldots, y_k\}$
such that $X \cap Y = \emptyset$
and all cubie clusters $(x, y) \in X \times Y$
have the same cluster configuration $d$.
Solving each cluster individually would require
a total of $O(|X| \cdot |Y|)$ moves.

Instead, we will attempt to parallelize.
The cluster configuration $d$ must have a constant-length
cluster move solution with type sequences
$a_1, \ldots, a_m$,
$b_1, \ldots, b_m$,
and $c_1, \ldots, c_m$.
To construct a parallel move sequence,
we use the following sequence of moves
as a building block:
\begin{align*}
\textsc{Bulk}_i = F_{a_i},
RC_{b_i, x_1}, RC_{b_i, x_2}, \ldots, RC_{b_i, x_\ell},
RC_{c_i, y_1}, RC_{c_i, y_2}, \ldots, RC_{c_i, y_k}.
\end{align*}
The full sequence we use is
$\textsc{Bulk}_1, \textsc{Bulk}_2, \ldots, \textsc{Bulk}_m$.
By doing a careful case-by-case analysis
and using the three properties of cluster move solutions
given above, it can be seen that this sequence of $O(|X| + |Y|)$ moves
will solve all clusters $X \times Y$,
and the only other clusters it may affect
are the clusters $X \times X$ and $Y \times Y$.
\else
We use the same parallelization technique
as we did for the $n \times n \times 1$ Rubik's Cube,
although some modification is necessary
because of differences in the types of moves allowed.
\fi

\later{
\begin{lemma}
\label{lemma:nxnxn-bulk-same}
Suppose we are given
an $n \times n \times n$ Rubik's Cube configuration
and sets $X, Y \subseteq \group{\floor{n / 2}}$
such that $X \cap Y = \emptyset$.
If all \CCSs{} in $X \times Y$
have the same cluster configuration,
then they can all be solved in
a sequence of $O(|X| + |Y|)$ moves
that only affects \CCSs{} in $(X \times Y) \cup (X \times X) \cup (Y \times Y)$.
\end{lemma}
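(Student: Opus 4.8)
The plan is to transcribe the proof of Lemma~\ref{lemma:nxnx1-bulk-same} into the cluster-move-solution formalism introduced above. Let $d$ be the common cluster configuration of the \CCSs{} in $X \times Y$. Recalling that after the parity preprocessing of Lemma~\ref{lemma:even-parity-linear} we may assume every cluster, and hence $d$, has even parity, Lemma~\ref{lemma:nxnxn-group-reachable} supplies a cluster move solution for $d$ of constant length $m$, given by type sequences $a_1, \ldots, a_m$ (face types) and $b_1, \ldots, b_m$, $c_1, \ldots, c_m$ (row or column types). Writing $X = \{x_1, \ldots, x_{|X|}\}$ and $Y = \{y_1, \ldots, y_{|Y|}\}$, I would define, for each $i$,
\begin{align*}
\textsc{Bulk}_i \;=\; F_{a_i},\; RC_{b_i, x_1}, \ldots, RC_{b_i, x_{|X|}},\; RC_{c_i, y_1}, \ldots, RC_{c_i, y_{|Y|}}
\end{align*}
and output $\textsc{Bulk}_1, \textsc{Bulk}_2, \ldots, \textsc{Bulk}_m$, a sequence of $m\,(1 + |X| + |Y|) = O(|X| + |Y|)$ moves since $m = O(1)$.

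The heart of the argument is to fix an arbitrary cluster $(x^*, y^*)$ and extract the subsequence of moves that actually disturb it, using that a face move disturbs every cluster while a move $RC_{b_i, x_j}$ or $RC_{c_i, y_j}$ disturbs $(x^*, y^*)$ only when its index equals $x^*$ or $y^*$. Because $X \cap Y = \emptyset$, this splits into nine cases according to whether each of $x^*, y^*$ lies in $X$, in $Y$, or in neither. When $x^* \in X$ and $y^* \in Y$, the extracted subsequence is exactly $F_{a_1}, RC_{b_1, x^*}, RC_{c_1, y^*}, \ldots, F_{a_m}, RC_{b_m, x^*}, RC_{c_m, y^*}$ --- the cluster move solution for $d$ evaluated at $(x^*, y^*)$ --- so Property~\ref{property:pseudo-solves} solves the cluster. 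When $x^* \in Y$ and $y^* \in X$, the extracted subsequence is the cluster move solution evaluated at $(y^*, x^*)$, which by Property~\ref{property:pseudo-mirror} leaves $(x^*, y^*)$ untouched. When exactly one of $x^*, y^*$ lies in $X \cup Y$ the extracted subsequence is one of the first two sequences of Property~\ref{property:pseudo-other}, and when neither does it is the third one, $F_{a_1}, \ldots, F_{a_m}$; in all of these cases the cluster is untouched. The only remaining cases are $x^*, y^* \in X$ and $x^*, y^* \in Y$, i.e.\ $(x^*, y^*) \in (X \times X) \cup (Y \times Y)$, where two interleaved families of row or column moves reach the cluster and we claim nothing. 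Collecting these observations yields both conclusions of the lemma.

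The routine part is the bookkeeping inside a single $\textsc{Bulk}_i$: that the disturbing moves appear there in the order $F_{a_i}, RC_{b_i, \cdot}, RC_{c_i, \cdot}$, and that no $RC$ move whose index lies in $X$ (resp.\ $Y$) other than the relevant one reaches $(x^*, y^*)$ --- both immediate once one recalls which slices a $24$-cluster meets. The genuinely load-bearing point is that Properties~\ref{property:pseudo-solves}--\ref{property:pseudo-other} of a cluster move solution were tailored exactly so that the ``one active coordinate'' subsequences and the ``mirrored coordinate'' subsequence collapse to the identity; in particular Property~\ref{property:pseudo-mirror}, which has no counterpart in the $n \times n \times 1$ argument (there $X$ and $Y$ need not be disjoint), is precisely what shields the clusters in $Y \times X$. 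I therefore expect the real obstacle to sit upstream of this lemma --- namely, confirming in Lemma~\ref{lemma:nxnxn-group-reachable} that a constant-length cluster move solution satisfying all three properties exists for every even-parity $d$ --- rather than in the parallelization, which is a faithful copy of the $n \times n \times 1$ argument.
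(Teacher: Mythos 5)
Your proposal is correct and matches the paper's own proof essentially move for move: the same $\textsc{Bulk}_i$ construction, the same $O(|X|+|Y|)$ count, and the same case analysis resting on Properties~\ref{property:pseudo-solves}--\ref{property:pseudo-other} (with $X \cap Y = \emptyset$ isolating the $X \times X$ and $Y \times Y$ clusters, about which nothing is claimed). Your explicit remark that the even-parity assumption from Lemma~\ref{lemma:even-parity-linear} is what licenses invoking Lemma~\ref{lemma:nxnxn-group-reachable} is a minor point of added care, not a difference in approach.
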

}

\later{
\begin{proof}
Let $d$ be the cluster configuration
of all of the clusters in $X \times Y$.
By Lemma \ref{lemma:nxnxn-group-reachable},
we know that there is
a constant-length cluster move solution for $d$.
Let $a_1, \ldots, a_m$,
$b_1, \ldots, b_m$,
and $c_1, \ldots, c_m$
be the type sequences of that cluster move solution.
Let $x_1, \ldots, x_\ell$ be the elements of $X$,
and let $y_1, \ldots, y_k$ be the elements of $Y$.
To build a sequence of moves
to solve all the clusters in $X \times Y$,
we begin by defining:
\begin{align*}
\textsc{Bulk}_i = F_{a_i},
RC_{b_i, x_1}, RC_{b_i, x_2}, \ldots, RC_{b_i, x_\ell},
RC_{c_i, y_1}, RC_{c_i, y_2}, \ldots, RC_{c_i, y_k}.
\end{align*}
Note that this sequence
consists of $|X| + |Y| + 1$ moves.
We then construct the full sequence of moves
to be the following:
\begin{align*}
\textsc{Bulk}_1, \textsc{Bulk}_2, \ldots, \textsc{Bulk}_m.
\end{align*}
Because the original sequence of moves had length $O(1)$,
we know that $\ell = O(1)$,
and so the total length of this sequence
will be $O(|X| + |Y| + 1)$.

Consider the effect of this constructed move sequence
on a cubie cluster
$(x, y) \in \group{\floor{n / 2}} \times \group{\floor{n / 2}}$.
First, consider the effect
on $(x, y) \in X \times Y$.
The subsequence of moves which affect this \CCS{} will be
$F_{a_1}, RC_{b_1, x}, RC_{c_1, y}, \ldots,
F_{a_\ell}, RC_{b_\ell, x}, RC_{c_\ell, y}$.
This is precisely the set of moves
generated by the cluster move solution
for solving the cluster $(x, y)$
so this cluster will be solved.
The subsequence of moves
affecting the cluster
$(y, x) \in Y \times X$.
will be the same as the subsequence
for the cluster $(x, y)$.
By Property~\ref{property:pseudo-mirror}
of cluster move solutions,
this cluster will not be affected
by the sequence of moves.

We need not consider
the effect on clusters $X \times X$
or $Y \times Y$,
because our lemma places no restrictions
on what happens to those clusters.
So all of the remaining clusters
we must consider
have at most one coordinate
in $X \cup Y$.
Suppose we have some $x \in X$
and some $z \notin X \cup Y$.
Then the sequence of moves
affecting the clusters $(x, z)$ and $(z, x)$
will be 
$F_{a_1}, RC_{b_1, x}, F_{a_2}, RC_{b_2, x} \ldots,
F_{a_\ell}, RC_{b_\ell, x}$.
By Property~\ref{property:pseudo-other}
of cluster move solutions,
this sequence of moves
does not affect any clusters,
and so $(x, z)$ and $(z, x)$ will both remain unaffected.
Similarly, suppose we have some $y \in Y$
and some $z \notin X \cup Y$.
Then the sequence of moves
affecting the clusters $(y, z)$ and $(z, y)$ is
$F_{a_1}, RC_{c_1, y}, F_{a_2}, RC_{c_2, y} \ldots,
F_{a_\ell}, RC_{c_\ell, y}$.
According to Property~\ref{property:pseudo-other},
this move sequence does not affect
the configuration of clusters $(y, z)$ or $(z, y)$.
Finally, consider the effect
on some cluster
$(w, z) \in \overline{X \cup Y} \times \overline{X \cup Y}$.
Then the sequence of moves
affecting $(w, z)$ is
$F_{a_1}, F_{a_2}, \ldots, F_{a_\ell}$.
Once again, by Property~\ref{property:pseudo-other}
of cluster move solutions,
this move sequence will not affect
the configuration of cluster $(w, z)$.
\qed
\end{proof}
}

Now say that we are given a cluster configuration $d$
and a set of columns $X$ and rows $Y$
such that $X \cap Y = \emptyset$.
Using the same row-grouping technique
that we used for the $n \times n \times 1$ case,
\ifabstract
it is possible to show that there exists a sequence of moves
of length $O( |X| \cdot 2^{|X|} + |Y| )$ 
solving all of the clusters in $X \times Y$
which are in configuration $d$
and limiting the set of other clusters affected
to $(X \times X) \cup (Y \times Y)$.
By dividing up $X$
into groups of roughly size $\frac{1}{2} \log |Y|$,
just as we did for the $n \times n \times 1$ cube,
we may show that there exists a sequence of moves
with the same properties,
but with length $O(|X| \cdot |Y| / \log |Y|)$.
\else
we show the following lemma.
\fi

\later{
\begin{lemma}
\label{lemma:nxnxn-bulk-subsets}
Suppose we are given
an $n \times n \times n$ Rubik's Cube configuration,
a \CCSconfig{} $c$,
and sets $X, Y \subseteq \group{\floor{n / 2}}$,
such that $\ell = |X|$
and $X \cap Y = \emptyset$.
Then there exists a sequence of moves
of length $O(\ell 2^\ell + |Y|)$
such that:
\begin{itemize}

\item
All \CCSs{} $(x, y) \in X \times Y$ in configuration $c$
will be solved.

\item
All \CCSs{} $(x, y) \in (X \times X) \cup (Y \times Y)$
may or may not be affected.

\item
All other \CCSs{} will not be affected.

\end{itemize}
\end{lemma}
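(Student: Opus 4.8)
The plan is to transcribe the proof of Lemma~\ref{lemma:nxnx1-bulk-subsets} almost verbatim, using Lemma~\ref{lemma:nxnxn-bulk-same} in place of Lemma~\ref{lemma:nxnx1-bulk-same} as the parallelization primitive. For each $y \in Y$, define $S_y = \{x \in X \mid \text{cluster } (x,y) \text{ is in configuration } c\}$, and for each $S \subseteq X$ set $Y_S = \{y \in Y \mid S_y = S\}$. The family $\{Y_S\}_{S \subseteq X}$ partitions $Y$ into at most $2^\ell$ blocks. For each block with $Y_S \neq \emptyset$, every cluster in $S \times Y_S$ is in configuration $c$ by construction, and since $S \subseteq X$, $Y_S \subseteq Y$ and $X \cap Y = \emptyset$ we get $S \cap Y_S = \emptyset$; hence Lemma~\ref{lemma:nxnxn-bulk-same} supplies a sequence of $O(|S| + |Y_S|) = O(\ell + |Y_S|)$ moves that solves all of $S \times Y_S$ and touches only clusters in $(S \times Y_S) \cup (S \times S) \cup (Y_S \times Y_S) \subseteq (X \times Y) \cup (X \times X) \cup (Y \times Y)$. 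Concatenating these sequences over all $S$ gives total length $\sum_S O(\ell + |Y_S|) = O(\ell 2^\ell + |Y|)$, since the $Y_S$ sum to $|Y|$.

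The one point that genuinely differs from the $n \times n \times 1$ argument, and the part I would write most carefully, is checking non-interference between blocks. In the $n \times n \times 1$ case Lemma~\ref{lemma:nxnx1-bulk-same} had no collateral damage, so concatenation was trivially safe; here each block may additionally scramble clusters in $(S \times S) \cup (Y_S \times Y_S)$, so I must argue that no block un-solves a cluster solved by another. The key observation is that a fixed cluster $(x^*, y^*) \in X \times Y$ is affected by the block for $S$ only if $(x^*,y^*) \in (S \times Y_S) \cup (S \times S) \cup (Y_S \times Y_S)$; but $(x^*,y^*) \notin S \times S$ (that would force $y^* \in S \subseteq X$, contradicting $X \cap Y = \emptyset$) and $(x^*,y^*) \notin Y_S \times Y_S$ (that would force $x^* \in Y_S \subseteq Y$), so $(x^*,y^*)$ is affected only when $x^* \in S$ and $y^* \in Y_S$, i.e.\ only for the single block $S = S_{y^*}$. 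Thus each target cluster is disturbed by at most one block, namely the one that solves it (which exists precisely because $(x^*,y^*)$ is in configuration $c$, so $x^* \in S_{y^*}$); the order in which we concatenate the blocks is therefore irrelevant, and every cluster of $X \times Y$ in configuration $c$ ends solved.

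Finally, for the clusters outside $(X \times Y) \cup (X \times X) \cup (Y \times Y)$: each block's move sequence already leaves every such cluster fixed (by the containment displayed above), so the concatenation does too, which gives the third bullet. I expect no real obstacle beyond the bookkeeping in the non-interference check; everything else is a direct transcription of the $n \times n \times 1$ argument, with the hypothesis $X \cap Y = \emptyset$ doing exactly the work it already does inside Lemma~\ref{lemma:nxnxn-bulk-same}.
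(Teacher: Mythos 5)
Your proposal is correct and follows essentially the same route as the paper: partition $Y$ into blocks $Y_S$ by the fingerprint $S_y$, invoke Lemma~\ref{lemma:nxnxn-bulk-same} on each pair $(S, Y_S)$, and sum the $O(\ell + |Y_S|)$ costs over at most $2^\ell$ blocks. Your explicit non-interference check (each cluster of $X \times Y$ is touched by at most the one block that solves it, and all collateral effects stay inside $(X \times X) \cup (Y \times Y)$) is a point the paper's proof leaves implicit, but it is the same argument, just spelled out more carefully.
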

}

\later{
\begin{proof}
For each row $y \in Y$,
let $S_y = \{ x \in X \mid $ \CCS{} $(x, y)$
is in configuration $c\}$.
For each set $S \subseteq X$,
let $Y_S = \{ y \in Y \mid S_y = S \}$.
Because $S \subseteq X$,
there are at most $2^\ell$ different values for $S$.
For each $S$, 
we will use the results of Lemma \ref{lemma:nxnxn-bulk-same}
to construct a sequence of moves
to solve each \CCS{} $(x, y) \in S \times Y_S$.
This move sequence will have length
$O(|S| + |Y_S|) = O(\ell + |Y_S|)]$.
When we sum up this cost
over all sets $S \subseteq X$,
we get the following number of moves:
\begin{align*}
O\left(\ell \cdot 2^\ell + \sum_{S} |Y_S|\right) = O\left(\ell \cdot 2^\ell + |Y|\right).
\end{align*}
\qed
\end{proof}
}

\iffull
Just as we did for the $n \times n \times 1$ Rubik's Cube,
we avoid exponential blow-up by dividing the set of columns $X$
into smaller groups, solving each such group individually.
More formally:
\fi

\later{
\begin{lemma}
\label{lemma:nxnxn-bulk-et-al}
Suppose we are given
an $n \times n \times n$ Rubik's Cube configuration,
a \CCSconfig{} $c$,
and sets $X, Y \subseteq \group{\floor{n / 2}}$,
such that $X \cap Y = \emptyset$.
Then there exists a sequence of moves
of length $O(|X| \cdot |Y| / \log |Y|)$
such that:
\begin{itemize}

\item
All \CCSs{} $(x, y) \in X \times Y$ in configuration $c$
will be solved.

\item
All \CCSs{} $(x, y) \in (X \times X) \cup (Y \times Y)$
may or may not be affected.

\item
All other \CCSs{} will not be affected.

\end{itemize}
\end{lemma}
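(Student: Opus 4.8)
The plan is to reuse verbatim the divide-and-conquer argument behind Lemma~\ref{lemma:nxnx1-bulk-et-al}, substituting the $n \times n \times n$ building block Lemma~\ref{lemma:nxnxn-bulk-subsets} for its $n \times n \times 1$ analogue. First I would set $\ell = \tfrac{1}{2}\log_2 |Y|$, so that $2^\ell = \sqrt{|Y|}$, and let $k = \lceil |X| / \ell \rceil$. Then I would partition the column set $X$ into blocks $X_1, \ldots, X_k$, each of size at most $\ell$, and handle the blocks one at a time: for each $X_i$, apply the move sequence guaranteed by Lemma~\ref{lemma:nxnxn-bulk-subsets} to the rectangle $X_i \times Y$, which solves every \CCS{} of $X_i \times Y$ in configuration $c$ while affecting, apart from those, only \CCSs{} in $(X_i \times X_i) \cup (Y \times Y)$.

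Next I would check that concatenating these $k$ sequences behaves. Since $X \cap Y = \emptyset$, each $X_i$ is disjoint from $Y$, and since the $X_i$ partition $X$ they are pairwise disjoint; hence for $i \ne j$ every cluster $(x,y)$ with $x \in X_j$, $y \in Y$ lies neither in $X_i \times X_i$ (as $x \notin X_i$) nor in $Y \times Y$ (as $x \notin Y$) nor in $X_i \times Y$, so it is an ``other cluster'' for the $i$-th application and is left untouched. In particular, processing block $X_i$ never disturbs a cluster already solved in an earlier block, and the clusters of $X_i \times Y$ not in configuration $c$ are also untouched. Finally, the union of the collateral sets satisfies $\bigcup_i \bigl((X_i \times X_i) \cup (Y \times Y)\bigr) \subseteq (X \times X) \cup (Y \times Y)$, which is exactly the slack the lemma permits. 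So after all $k$ blocks the three bullet points hold: every \CCS{} of $X \times Y$ in configuration $c$ is solved, clusters in $(X\times X)\cup(Y\times Y)$ may or may not have changed, and nothing else is affected.

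For the move count, Lemma~\ref{lemma:nxnxn-bulk-subsets} resolves one block in $O(\ell 2^\ell + |Y|) = O\!\left(\bigl(\tfrac{1}{2}\log_2|Y|\bigr)\sqrt{|Y|} + |Y|\right) = O(|Y|)$ moves, and summing over the $k = \lceil |X|/\ell\rceil$ blocks gives $O(k|Y|) = O\!\left(\frac{|X|}{\frac{1}{2}\log_2|Y|}\cdot|Y|\right) = O(|X|\cdot|Y|/\log|Y|)$, as claimed. The only step that warrants care — and the one I would write out most explicitly — is the bookkeeping in the second paragraph: verifying that the per-block collateral sets stack up to something still contained in $(X\times X)\cup(Y\times Y)$ and that earlier blocks' solved clusters survive all later blocks. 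Everything else is a transcription of the $n \times n \times 1$ argument.
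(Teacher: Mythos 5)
Your proposal is correct and follows essentially the same route as the paper's proof: the same choice $\ell = \tfrac{1}{2}\log_2|Y|$, the same partition of $X$ into $k = \lceil |X|/\ell\rceil$ blocks, the same invocation of Lemma~\ref{lemma:nxnxn-bulk-subsets} per block, and the same cost computation. The only difference is that you spell out the collateral-set bookkeeping (that later blocks leave earlier blocks' solved clusters untouched and that the affected sets stay inside $(X\times X)\cup(Y\times Y)$), which the paper leaves implicit; this is a fine addition but not a different argument.
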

}

\later{
\begin{proof}
Let $\ell = \frac{1}{2} \log_2 |Y|$,
so that $2^\ell = \sqrt{|Y|}$.
Let $k = \lceil |X| / \ell \rceil$.
Partition the set $X$ into
a series of sets $X_1, \ldots, X_{k}$
each of which has size $\le \ell$.
For each $X_i$,
we solve the \CCSs{} in $X_i \times Y$
using the sequence of moves that is guaranteed to exist
by Lemma \ref{lemma:nxnxn-bulk-subsets}.
The number of moves required to solve
a single $X_i$ is
\begin{align*}
O\left(\ell 2^\ell + |Y| \right) = O\left(\left(\frac{1}{2} \log_2 |Y|\right) \sqrt{|Y|} + |Y| \right) = O(|Y|).
\end{align*}
Therefore, if we wish to perform this for $k$ sets,
the total number of moves becomes
\begin{align*}
O\left(k \cdot |Y|\right) = O\left( \frac{|X|}{\frac{1}{2} \log_2 |Y|} \cdot |Y|\right) = O\left( \frac{|X| \cdot |Y|}{ \log |Y|} \right).
\end{align*}
\qed
\end{proof}
}

To finish constructing the move sequence for the entire Rubik's Cube,
we need to account for two differences
between this case and the $n \times n \times 1$ case:
the requirement that $X \cap Y = \emptyset$
and the potential to affect clusters in
$(X \times X) \cup (Y \times Y)$.
\ifabstract
We handle both cases
by taking the initial set of columns $\{1, 2, \ldots, \lfloor n / 2 \rfloor - 1\}$
and dividing it into groups $X_1, \ldots, X_j$
of size $\sqrt{n / 2}$.
In a similar fashion,
we partition the initial set of rows $\{1, 2, \ldots, \lfloor n / 2 \rfloor - 1\}$
into sets $Y_1, \ldots, Y_j$
of size $\sqrt{n / 2}$.
We then loop through pairs $(X_i, Y_j)$, where $i \ne j$,
to solve all clusters in configuration $d$
for all but the clusters $(X_1 \times Y_1) \cup \ldots \cup (X_j \times Y_j)$.
Because $j = |X_i| = |Y_i| = \sqrt{n / 2}$,
the total number of moves required for this step
is $O(n^2 / \log n)$.
To finish it off and solve
the clusters $(X_1 \times Y_1) \cup \ldots \cup (X_j \times Y_j)$,
we simply solve each cluster individually,
which requires a total of $O(n^{3 / 2}) < O(n^2 / \log n)$ moves.
If we add up that cost
for each of the $O(1)$ different configurations,
the total number of moves is $O(n^2 / \log n)$.
\fi

\both{
\begin{theorem}
\label{theorem:nxnxn-upper-bound}
Given an $n \times n \times n$ Rubik's Cube configuration,
all \CCSs{} can be solved in $O(n^2 / \log n)$ moves.
\end{theorem}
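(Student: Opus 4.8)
The plan is to bootstrap from Lemma~\ref{lemma:nxnxn-bulk-et-al} by carving the clusters into $O(n)$ ``batches'', each of which that lemma can dispatch cheaply, and then cleaning up a lower-order collection of leftover clusters by brute force. As a preliminary step I would invoke Lemma~\ref{lemma:even-parity-linear} to spend $O(n)$ moves making every cluster have even parity, so that Lemmas~\ref{lemma:nxnxn-group-reachable} and~\ref{lemma:nxnxn-bulk-et-al} apply to every cluster; under the convention that edge and corner cubies are ignored this step is vacuous, since every remaining cluster already has even parity.

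Next, partition the index set $\{1, 2, \ldots, \floor{n/2} - 1\}$ into $j = \Theta(\sqrt{n})$ consecutive blocks $I_1, \ldots, I_j$, each of size $\Theta(\sqrt{n})$. Write $X_p$ for $I_p$ regarded as a set of column indices and $Y_q$ for $I_q$ regarded as a set of row indices, so that every non-edge cluster $(x,y)$ lies in a unique product block $X_p \times Y_q$; call the blocks $X_r \times X_r$ (equivalently $X_r \times Y_r$) the \emph{diagonal} blocks. For each ordered pair $(p,q)$ with $p \neq q$ we have $X_p \cap Y_q = \emptyset$, since the $I_r$ partition the common index set, so for each of the $O(1)$ cluster configurations $c$ we may apply Lemma~\ref{lemma:nxnxn-bulk-et-al} to $(X_p,Y_q)$ and $c$: this solves every cluster of $X_p \times Y_q$ that is in configuration $c$ using $O(|X_p|\cdot|Y_q|/\log|Y_q|) = O(n/\log n)$ moves, possibly disturbing only clusters of $X_p \times X_p$ and $Y_q \times Y_q$, each of which is one of the diagonal blocks. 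Looping over all $O(j^2) = O(n)$ ordered pairs and all $O(1)$ configurations costs $O(n)\cdot O(1)\cdot O(n/\log n) = O(n^2/\log n)$ moves.

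The step I expect to require the most care is verifying non-interference among these batch operations. A batch operation for $(X_p,Y_q)$ with $p\neq q$ affects only (i) the clusters of $X_p\times Y_q$ in the configuration being processed, which it solves, and (ii) clusters of the diagonal blocks $X_p\times X_p$ and $Y_q\times Y_q$; by the ``all other clusters are not affected'' clause of Lemma~\ref{lemma:nxnxn-bulk-et-al}, nothing else moves. Since the $X_r$ form a partition, an off-diagonal cluster $(x,y)$ with $x\in X_p$, $y\in X_q$, $p\neq q$ cannot lie in any diagonal block, so (a) before its own pair $(p,q)$ is processed it has remained untouched and is therefore still in its original configuration, and (b) once the loop over configurations for pair $(p,q)$ solves it, no subsequent batch operation---for pair $(p,q)$ with another configuration, or for any other pair---can alter it. Hence after the whole loop every off-diagonal cluster is solved, and the only clusters possibly still unsolved lie in $\bigcup_r (X_r \times X_r)$.

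Finally, there are $\sum_r |X_r|^2 = \Theta(j)\cdot\Theta(n) = O(n^{3/2})$ clusters in the diagonal blocks; using Lemma~\ref{lemma:nxnxn-group-reachable} (available because parity is even) each of them is solved individually by an $O(1)$-length cluster move solution which, by Properties~\ref{property:pseudo-mirror} and~\ref{property:pseudo-other}, affects no cluster other than the one it solves, so these solutions may be concatenated freely. This contributes $O(n^{3/2})$ further moves. Summing the $O(n)$ parity step, the $O(n^2/\log n)$ batch step, and the $O(n^{3/2}) = o(n^2/\log n)$ cleanup yields $O(n^2/\log n)$ moves overall; in the full treatment one additionally spends $O(n)$ moves on the $O(n)$ edge, cross, corner, and center clusters exactly as in the $n\times n\times 1$ case, which leaves the bound unchanged. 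The substantive content is entirely in the already-established Lemmas~\ref{lemma:nxnxn-group-reachable}--\ref{lemma:nxnxn-bulk-et-al}; the delicate point here is choosing a single common partition of row and column indices, and block size $\Theta(\sqrt n)$, so that the unavoidable side effects of Lemma~\ref{lemma:nxnxn-bulk-et-al} on the ``$X\times X$'' and ``$Y\times Y$'' clusters are confined to the diagonal blocks and hence cannot undo earlier work, while both the batch cost and the diagonal-block cleanup stay within $O(n^2/\log n)$.
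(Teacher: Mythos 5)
Your proposal is correct and follows essentially the same route as the paper's proof: fix parity (and the $O(n)$ edge clusters) first, partition the row/column indices into $\Theta(\sqrt{n})$ groups of size $\Theta(\sqrt{n})$, apply Lemma~\ref{lemma:nxnxn-bulk-et-al} to each off-diagonal pair of groups for each of the $O(1)$ configurations, and then solve the $O(n^{3/2})$ diagonal-block clusters individually via Lemma~\ref{lemma:nxnxn-group-reachable}, giving $O(n^2/\log n)$ moves in total. The accounting and the confinement of side effects to the diagonal blocks match the paper's argument.
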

}

\later{
\begin{proof}
In order to solve the Rubik's Cube,
we must first fix the parity.
Using the techniques of Lemma~\ref{lemma:even-parity-linear},
we can perform this step in $O(n)$ moves.
Then we solve each edge cluster individually.
Each edge cluster requires $O(1)$ moves to solve,
and there are $O(n)$ edge clusters,
so this preliminary step
takes time $O(n)$.

Once the edges have been solved,
we want to solve the non-edge clusters.
Let $k = \sqrt{n / 2}$.
Partition $\group{\floor{n / 2}}$ into
a series of sets $G_1, \ldots, G_k$
each of which has size $\le k$.
For each pair $i, j$ such that $i \ne j$
and each \CCSconfig{} $c$,
we use the sequence of moves
guaranteed to exist by Lemma \ref{lemma:nxnxn-bulk-et-al}
to solve all $(x, y) \in G_i \times G_j$
with the configuration $c$.
This ensures that all \CCSs{} $(x, y) \in G_i \times G_j$
will be solved.
For each $i$,
we must also solve all \CCSs{}
$(x, y) \in G_i \times G_i$.
There are $k^2(k - 1) / 2$ such \CCSs{},
so we can afford to solve each such \CCS{} individually.

What is the total number of moves required?
For a single pair $i \ne j$
and a single configuration $c$,
the number of moves required
will be $O(k^2 / \log k) = O(n / \log n)$.
There are a constant number of possible configurations,
so solving a single pair $i, j$ for all configurations
will also require $O(n / \log n)$ moves.
There are $k^2 - k$ such pairs,
for a total of $O(n^2 / \log n)$.
If we then add in the extra $O(k^3)$
from the diagonals,
then the total number of moves
will be $O(n^2 / \log n + n^{3/2}) = O(n^2 / \log n)$.
\qed
\end{proof}
}

\subsection{$n \times n \times n$ Lower Bound}

\ifabstract
Just as we did for the $n \times n \times 1$ lower bound,
we can calculate a matching lower bound
using the maximum degree
of the graph of the configuration space
and the total number of reachable configurations:
\else
We derive a matching lower bound
using a technique identical to the one used
for the $n \times n \times 1$ lower bound:
\fi

\both{
\begin{theorem}
\label{theorem:nxnxn-lower-bound}
Some configurations
of an $n \times n \times n$ Rubik's Cube
are $\Omega(n^2 / \log n)$ moves away from being solved.
\end{theorem}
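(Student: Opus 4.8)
The plan is to mirror the counting argument used for the $n \times n \times 1$ lower bound (the sibling of Theorem~\ref{theorem:nxnxn-lower-bound}): bound from above the number of configurations reachable within $k$ moves by an easy branching estimate, bound from below the total number of reachable configurations, and compare the two.

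First I would count the moves available from any configuration. Face moves come in only a constant number of types ($12$, plus the identity which we may ignore), while row and column moves are parametrized by one of three axes, one of $\Theta(n)$ slice indices, and two rotation directions, for a total of $\Theta(n)$ distinct moves; call this bound $cn$. Consequently the number of configurations reachable from the solved state using at most $k$ moves is at most $\sum_{i=0}^{k}(cn)^i \le (cn)^{k+1}$.

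Next I would lower-bound the number of reachable configurations. By Lemma~\ref{lemma:nxnxn-group-reachable}, every even-parity cluster configuration has an $O(1)$-length cluster move solution, so the inverse of such a sequence transforms a solved cluster into an arbitrary even-parity configuration of that cluster while leaving all other clusters untouched. In particular, each of the $(\floor{n / 2} - 1)^2$ non-edge, non-cross, non-center clusters can be independently placed into at least two distinct (even-parity) configurations — indeed many, since the relevant group is $A_{24}$, but two per cluster is all we need. This yields at least $2^{(\floor{n / 2} - 1)^2}$ reachable configurations. Combining the two bounds, if $k$ moves suffice to reach every configuration then $(cn)^{k+1} \ge 2^{(\floor{n / 2} - 1)^2}$; taking logarithms gives $k + 1 \ge (\floor{n / 2} - 1)^2 \log 2 / \log(cn)$, hence $k = \Omega(n^2 / \log n)$, so some configuration is that far from solved.

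I do not expect a real obstacle here; the only points requiring care are that the lower bound on the number of reachable configurations genuinely relies on the non-interference guarantee of Lemma~\ref{lemma:nxnxn-group-reachable} (so that the cluster configurations can be chosen independently), and that the move count is only $\Theta(n)$ rather than larger. Everything else is the same arithmetic as in the $n \times n \times 1$ case.
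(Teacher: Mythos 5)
Your proposal is correct and follows essentially the same route as the paper: invert the per-cluster solution sequences of Lemma~\ref{lemma:nxnxn-group-reachable} to get exponentially many independently-settable cluster configurations, bound the number of configurations reachable in $k$ moves by $(\Theta(n))^{k+1}$, and compare logarithms. The only differences are in constants (the paper uses $6n$ moves and $(24!)/(4!)^6$ configurations per cluster rather than your $cn$ and ``at least two''), which do not affect the $\Omega(n^2/\log n)$ bound.
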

}

\later{
\begin{proof}
Lemma \ref{lemma:nxnxn-group-reachable} shows that
for every possible configuration of a \CCS{},
there exists a sequence of moves to solve the \CCS{}
while leaving the rest of the cubies in the same location.
Hence,
the inverse of such a sequence
will transform a solved \CCS{} to an arbitrary configuration
without affecting any other cubies.
Not counting the edges,
there are $\left(\lfloor n / 2 \rfloor - 1\right)^2$ \CCSs{},
each of which can be independently placed
into one of $(24!) / (4!)^6$ different color configurations.
This means that there are at least
\begin{align*}
\left(\frac{24!}{(4!)^6}\right)^{\left(\lfloor n / 2 \rfloor - 1\right)^2}
\end{align*}
reachable configurations.

There are $6n$ possible moves, so
the total number of states reachable
using at most $k$ moves
is at most
\begin{align*}
\frac{(6n)^{k + 1} - 1}{6n - 1} \le (6n)^{k + 1}.
\end{align*}
Therefore,
if $k$ is the number of moves
necessary to reach all states,
it must have the property that
\begin{align*}
\left(\frac{24!}{(4!)^6}\right)^{\left(\lfloor n / 2 \rfloor - 1\right)^2} &\le (6n)^{k + 1}, \\
\left(\lfloor n / 2 \rfloor - 1\right)^2 \cdot  \log \left(\frac{24!}{(4!)^6}\right) &\le \log \left((6n)^{k + 1}\right) = (k + 1) \log (6n), \\
\frac{\left(\lfloor n / 2 \rfloor - 1\right)^2 \log \left(\frac{24!}{(4!)^6}\right)}{\log (6n)} - 1 &\le k.
\end{align*}
Hence, there must exist some configurations
which are $\Omega(n^2 / \log n)$ moves away from solved.
\qed
\end{proof}
}

%%%%%%%%%%%%%%%%%%%%%%%%%%%%%%%%%%%%%%%%%%%%%%%%%%%%%%%%%%%%%%%%%%%%%%%%

\section{Optimally Solving a Subset of the $n \times n \times 1$ Rubik's Cube is NP-Hard}
\label{hardness}

\ifabstract
\later{\section{NP-hardness Details}}
\fi

In this section, we consider a problem
which generalizes the problem
of computing the optimal sequence of moves
to solve a Rubik's Cube.
Say that we are given a configuration
of an $n \times n \times 1$ Rubik's Cube
and a list of \concept{important} cubies.
We wish to find
the shortest sequence of moves 
that solves the important cubies.
Note that the solution for the important cubies
may cause other cubies
to leave the solved state,
so this problem is only equivalent to solving 
an $n \times n \times 1$ Rubik's Cube
when all cubies are marked important.

In this section, we prove the NP-hardness of computing the length of this shortest sequence.
More precisely, we prove that the following decision problem is NP-hard:
is there a sequence of $k$ moves
that solves the important cubies
of the $n \times n \times 1$ Rubik's Cube?
Our reduction ensures that the cubies within a single \CCS{}
are either all important or all unimportant, and thus it does not matter
whether we aim to solve cubies (which move) or specific cubie positions
(which do not move).
Therefore the problem remains NP-hard if we
aim to solve the puzzle in the sense of unifying the side colors,
when we ignore the colors of all unimportant cubies.

Certain properties of the Rubik's Cube configuration
can affect the set of potential solutions.
For the rest of this section,
we will consider only Rubik's Cubes
where $n$ is odd
and where all edge cubies and cross cubies
are both solved and marked important.
This restriction ensures that for any cluster,
the number of horizontal moves 
and vertical moves affecting it
must both be even.
In addition,
we will only consider Rubik's Cubes
in which all \CCSs{} are in the \CCSconfigs{}
depicted in Figures
\ref{figure:all-red},
\ref{figure:blue-left},
and \ref{figure:blue-up}.
This restriction means
that the puzzle can always be solved
using moves only of types $H_1$ and $V_1$.
This combination of restrictions 
ensures that each unsolved cluster
must be affected by both vertical and horizontal moves.

Suppose that we are given a configuration
and a list of important cubies.
Let $u_r$ be the number of rows
of index $\le \lfloor n / 2 \rfloor$
that contain at least one important unsolved cubie.
Let $u_c$ be the number of columns
of index $\le \lfloor n / 2 \rfloor$
that contain at least one important unsolved cubie.
Then we say that the \concept{ideal number of moves}
for solving the given configuration
is $2(u_r + u_c)$.
In other words, the ideal number of moves
is equal to the smallest possible number of moves
that could solve all the important cubies.
An \concept{ideal solution}
for a subset of the cubies
in a particular $n \times n \times 1$ puzzle
is a solution for that set of cubies
which uses the ideal number of moves.
For the types of configurations that we are considering,
the ideal solution will contain
exactly two of each move,
and the only moves that occur
will be moves of type $H_1$ or $V_1$.

\begin{definition}
Let $I_k(m)$ denote the index in the solution
of the $k$th occurrence of move $m$.
\end{definition}

\ifabstract
For our hardness reduction, we develop a gadget
(depicted in Fig.~\ref{figure:gadget-between})
which forces a betweenness constraint on
the ordering of three different row moves:
\else
For our hardness reduction, we develop two main gadgets.
The first gadget forces an ordering on the second occurrences of row moves,
and is used in the construction of the second gadget.
The second gadget forces a betweenness constraint on the ordering of
the first occurrences of row moves.
\fi

\later{
\begin{lemma}
\label{lemma:gadget-before}
Given two sets of columns
$X_1, X_2 \subseteq \group{\lfloor n / 2 \rfloor}$,
there is a gadget
using three extra rows and two extra columns
ensuring that, for all $x_1 \in X_1$
and for all $x_2 \in X_2$,
$I_2(x_1) < I_2(x_2)$.
As a side effect, this gadget also forces
\begin{align*}
\max_{x_1 \in X_1} I_1(x_1) < \min_{x_1 \in X_1} I_2(x_1)
&& \textrm{ and } &&
\max_{x_2 \in X_2} I_1(x_2) < \min_{x_2 \in X_1} I_2(x_2).
\end{align*}
\end{lemma}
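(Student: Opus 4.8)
The plan is to realize the gadget as a short ``chain'' of unsolved clusters that ties the two sets of columns $X_1$ and $X_2$ together through three fresh rows $r_1, r_2, r_3$ and two fresh columns $c_1, c_2$ (all interior --- neither edge nor cross nor center --- and disjoint from everything used elsewhere; the applications only involve interior rows and columns, so nothing is lost). The one fact I will lean on is the per-cluster behavior: in the restricted configurations we consider, an ideal solution uses only moves of type $H_1$ and $V_1$ and uses each used row/column move exactly twice, so --- writing $I_k(a)$ for the index of the $k$-th column move on column $a$ and $I_k(b)$ for the $k$-th row move on row $b$ --- a cluster at $(a,b)$ in the \emph{blue-left} configuration (Fig.~\ref{figure:blue-left}) forces the four moves affecting it to occur as $V_1,H_1,V_1,H_1$, i.e.\ $I_1(a)<I_1(b)<I_2(a)<I_2(b)$, while the \emph{blue-up} configuration (Fig.~\ref{figure:blue-up}) forces $I_1(b)<I_1(a)<I_2(b)<I_2(a)$. (A short check shows $V_1H_1V_1H_1$ is the only length-four word over $\{V_1,H_1\}$ taking blue-left to solved, and symmetrically for blue-up.)

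The gadget is then the following collection of \emph{important} unsolved clusters, with every other cluster meeting $\{r_1,r_2,r_3\}$ or $\{c_1,c_2\}$ left solved and declared \emph{unimportant} --- legitimate, since those clusters are interior and hence exempt from the global requirement that edge and cross clusters be solved and important: put $(x_1,r_1)$ in blue-left for every $x_1\in X_1$; put $(x_2,r_2)$ in blue-up for every $x_2\in X_2$; and put the four ``link'' clusters $(c_1,r_1),(c_1,r_3),(c_2,r_3),(c_2,r_2)$ in blue-up, blue-left, blue-up, blue-left respectively.

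For the forward direction I would just concatenate the orders forced by these clusters. The four link clusters force $I_2(r_1)<I_2(c_1)<I_2(r_3)<I_2(c_2)<I_2(r_2)$; each $(x_1,r_1)$ forces $I_2(x_1)<I_2(r_1)$ and each $(x_2,r_2)$ forces $I_2(r_2)<I_2(x_2)$; splicing yields $I_2(x_1)<I_2(x_2)$ for all $x_1\in X_1$, $x_2\in X_2$. The side effects come out of the same two families: each $(x_1,r_1)$ in blue-left gives both $I_1(x_1)<I_1(r_1)$ and $I_1(r_1)<I_2(x_1)$, hence $\max_{x_1\in X_1}I_1(x_1)<I_1(r_1)<\min_{x_1\in X_1}I_2(x_1)$; and each $(x_2,r_2)$ in blue-up gives both $I_1(x_2)<I_2(r_2)$ and $I_2(r_2)<I_2(x_2)$, hence $\max_{x_2\in X_2}I_1(x_2)<I_2(r_2)<\min_{x_2\in X_2}I_2(x_2)$.

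What remains is the converse --- that inserting the gadget never turns an otherwise ideally solvable instance into an unsolvable one. For this I would exhibit an explicit ideal solution of the gadget's clusters on top of any solution of the rest, placing the ten new move occurrences (two occurrences each of the row moves $r_1,r_2,r_3$ and the column moves $c_1,c_2$) so that their first occurrences come in the order $X_1, r_1, c_1, r_3, c_2, r_2, X_2$ (early, but respecting whatever constraints the rest of the reduction puts on the moves in $X_1\cup X_2$) and their second occurrences in the matching order, then checking cluster by cluster that this solves every gadget cluster while leaving each pre-existing cluster untouched (none of $r_1,r_2,r_3,c_1,c_2$ is one of its coordinates) and each incidental cluster safely ignorable (it is unimportant). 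I expect the main obstacle to be exactly this bookkeeping: a \emph{solved but important} cluster whose row \emph{and} column are both used is not free --- it forces the interval $[I_1,I_2]$ of its column move and that of its row move to be non-crossing --- so the delicate point is to confirm that declaring the incidental clusters on the gadget's fresh lines unimportant really suppresses every constraint beyond the stated ones, and that the prescribed interleaving of the ten new occurrences is simultaneously compatible with all of them. The forward direction, by contrast, is just the concatenation above.
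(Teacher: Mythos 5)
Your forward direction is fine: the chain of blue-left/blue-up clusters through $r_1,c_1,r_3,c_2,r_2$ does force $I_2(x_1)<I_2(r_1)<I_2(c_1)<I_2(r_3)<I_2(c_2)<I_2(r_2)<I_2(x_2)$, and the stated side effects follow, within the same budget of three rows and two columns. The genuine gap is in the half you deferred, and it is not mere bookkeeping: your wiring also chains the \emph{first} occurrences. Blue-left at $(x_1,r_1)$ forces $I_1(x_1)<I_1(r_1)$, the four link clusters force $I_1(r_1)<I_1(c_1)<I_1(r_3)<I_1(c_2)<I_1(r_2)$, and blue-up at $(x_2,r_2)$ forces $I_1(r_2)<I_1(x_2)$; hence your gadget additionally enforces $\max_{x_1\in X_1}I_1(x_1)<\min_{x_2\in X_2}I_1(x_2)$, a constraint not in the lemma. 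The proposed insertion order for the converse (first occurrences in the order $X_1,r_1,c_1,r_3,c_2,r_2,X_2$) silently assumes exactly this, so no interleaving argument can rescue it: an external ideal solution satisfying only the stated constraints may interleave the first moves of $X_1$ and $X_2$ arbitrarily, and then your gadget's clusters cannot all be solved. This matters downstream: Lemma~\ref{lemma:gadget-between} invokes this gadget with $X_1=\{x_2\}$ and $X_2=\{x_1,x_3\}$ and then needs $I_1(x_2)$ to lie strictly \emph{between} $I_1(x_1)$ and $I_1(x_3)$, i.e.\ after one of them, which your extra constraint forbids; with your gadget the combined construction would never admit an ideal solution and the reduction would collapse.

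The paper's wiring is chosen precisely to avoid this coupling: both $X_1$ and $X_2$ attach by blue-left clusters to two \emph{different} fresh rows $\tilde y_1$ and $\tilde y_3$, so only the second occurrences $I_2(x_1)$ and $I_2(x_2)$ get sandwiched inside the intervals $[I_1(\tilde y_1),I_2(\tilde y_1)]$ and $[I_1(\tilde y_3),I_2(\tilde y_3)]$, while the first occurrences of $X_1$ and $X_2$ are only required to precede $I_1(\tilde y_1)$ and $I_1(\tilde y_3)$ respectively, leaving their relative order free. The ordering of the two row intervals ($I_2(\tilde y_1)<I_1(\tilde y_3)$) is then forced by a small linkage using the middle row $\tilde y_2$, the two fresh columns, and, crucially, one \emph{solved but important} cluster $(\tilde x_2,\tilde y_1)$, exactly the device you flagged as delicate; the paper then completes the proof by explicitly exhibiting, for any external order satisfying only the stated constraints, an interleaving of the ten new moves solving all gadget clusters. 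To repair your proposal you would need to redesign the linkage along these lines rather than patch the insertion order.
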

}

\later{
\begin{proof}
Let $\tilde{y}_1, \tilde{y}_2, \tilde{y}_3 \le \lfloor n / 2 \rfloor$ be three rows
not used elsewhere in the construction.
Let $\tilde{x}_1, \tilde{x}_2 \le \lfloor n / 2 \rfloor$ be two columns
not used elsewhere in the construction.
Make \CCSs{} $(\tilde{x}_1, \tilde{y}_2)$ and $(\tilde{x}_2, \tilde{y}_3)$
have the configuration
depicted in Fig.~\ref{figure:blue-left};
make \CCSs{} $(\tilde{x}_1, \tilde{y}_1)$ and $(\tilde{x}_2, \tilde{y}_2)$
have the configuration
depicted in Fig.~\ref{figure:blue-up};
and make \CCS{} $(\tilde{x}_2, \tilde{y}_1)$
remain in the solved configuration.
Mark all of these \CCSs{} as important.

These \CCSconfigs{} enforce the following constraints:
\begin{align*}
I_1(\tilde{x}_1) < I_1(\tilde{y}_2) < I_2(\tilde{x}_1) < I_2(\tilde{y}_2), &&
I_1(\tilde{x}_2) < I_1(\tilde{y}_3) < I_2(\tilde{x}_2) < I_2(\tilde{y}_3), \\
I_1(\tilde{y}_1) < I_1(\tilde{x}_1) < I_2(\tilde{y}_1) < I_2(\tilde{x}_1), &&
I_1(\tilde{y}_2) < I_1(\tilde{x}_2) < I_2(\tilde{y}_2) < I_2(\tilde{x}_2).
\end{align*}
We can use these inequalities
to construct the following chains:
\begin{align*}
I_1(\tilde{y}_1) < I_1(\tilde{x}_1) < I_1(\tilde{y}_2) < I_1(\tilde{x}_2)
&& \textrm{and} &&
I_2(\tilde{y}_1) < I_2(\tilde{x}_1) < I_2(\tilde{y}_2) < I_2(\tilde{x}_2).
\end{align*}
Because $(\tilde{x}_2, \tilde{y}_1)$ must remain solved,
and because of the above constraints,
there is only one possible ordering
for the pairs of moves involving $\tilde{y}_1$ and $\tilde{x}_2$:
$I_1(\tilde{y}_1) < I_2(\tilde{y}_1) < I_1(\tilde{x}_2) < I_2(\tilde{x}_2)$.
If we combine this with the constraint
$I_1(\tilde{x}_2) < I_1(\tilde{y}_3) < I_2(\tilde{x}_2) < I_2(\tilde{y}_3)$,
we know that $I_1(\tilde{y}_1) < I_2(\tilde{y}_1) < I_1(\tilde{y}_3) < I_2(\tilde{y}_3)$.

Now, for each $x_1 \in X_1$
and $x_2 \in X_2$,
make \CCSs{} $(x_1, \tilde{y}_1)$ and $(x_2, \tilde{y}_3)$
have the configuration 
depicted in Fig.~\ref{figure:blue-left}, and mark them important.
No other \CCSs{} involving
$\tilde{y}_1, \tilde{y}_2, \tilde{y}_3$
or $\tilde{x}_1, \tilde{x}_2$
should be marked important.
This constraint ensures that for all $x_1 \in X_1$,
$I_2(x_1)$ must lie between $I_1(\tilde{y}_1)$ and $I_2(\tilde{y}_1)$.
In addition, for all choices of $x_2 \in X_2$,
$I_2(x_2)$ must lie between $I_1(\tilde{y}_3)$ and $I_2(\tilde{y}_3)$.
Therefore, $I_2(x_1) < I_2(x_2)$.

As a side effect,
these constraints ensure that
for all $x_1 \in X_1$,
$I_1(x_1)$ must lie before $I_1(\tilde{y}_1)$,
while $I_2(x_1)$ lies after $I_1(\tilde{y}_1)$.
Therefore, 
\begin{align*}
\max_{x_1 \in X_1} I_1(x_1) < \min_{x_1 \in X_1} I_2(x_1).
\end{align*}
In addition, the constraints ensure that
 for all choices of $x_2 \in X_2$,
$I_1(x_2)$ must lie before $I_1(\tilde{y}_3)$,
while $I_2(x_2)$ lies after $I_1(\tilde{y}_3)$.
This ensures that 
\begin{align*}
\max_{x_2 \in X_2} I_1(x_2) < \min_{x_2 \in X_1} I_2(x_2).
\end{align*}

We have shown that these gadgets can enforce a constraint.
We must also show that these gadgets
do not enforce any constraints
other than the ones expressed in the lemma.
In other words, given any solution
which satisfies the requirements given in the lemma,
we must be able to insert the moves 
for our new rows and columns
in such a way that all important clusters
will be solved.
In order to make sure that clusters
$(\tilde{x}_1, \tilde{y}_2)$,
$(\tilde{x}_2, \tilde{y}_3)$,
$(\tilde{x}_1, \tilde{y}_1)$,
$(\tilde{x}_2, \tilde{y}_2)$,
and $(\tilde{x}_2, \tilde{y}_1)$ are all solved,
it is sufficient to ensure that
the moves $\tilde{x}_1, \tilde{x}_2, \tilde{y}_1, \tilde{y}_2, \tilde{y}_3$
occur in the following order:
\begin{align*}
I_1(\tilde{y}_1), I_1(\tilde{x}_1), I_2(\tilde{y}_1), I_1(\tilde{y}_2), I_2(\tilde{x}_1), I_1(\tilde{x}_2), I_2(\tilde{y}_2), I_1(\tilde{y}_3), I_2(\tilde{x}_2), I_2(\tilde{y}_3).
\end{align*}
So if we can find the correct way to interleave this sequence
with the existing move sequence,
we will have a sequence that solves all clusters.

First, we observe that
the only important clusters in row $\tilde{y}_2$
and in columns $\tilde{x}_1$ and $\tilde{x}_2$
are the ones which the above sequence will solve.
So we need only determine how to correctly interleave
the moves for rows $\tilde{y}_1$ and $\tilde{y}_3$
with the existing move sequence.
We know from the statement of the lemma
that the existing move sequence
satisfies the following constraints:
\begin{align*}
\max_{x_1 \in X_1} I_1(x_1) < \min_{x_1 \in X_1} I_2(x_1) \le \max_{x_1 \in X_1} I_2(x_1) < \min_{x_2 \in X_2} I_2(x_2).
\end{align*}
So to ensure that each cluster $(x_1, \tilde{y}_1)$ is solved,
we insert the two copies of the move $\tilde{y}_1$ to satisfy:
\begin{align*}
\max_{x_1 \in X_1} I_1(x_1) < I_1(\tilde{y}_1) < \min_{x_1 \in X_1} I_2(x_1) \le \max_{x_1 \in X_1} I_2(x_1) < I_2(\tilde{y}_1) < \min_{x_2 \in X_2} I_2(x_2).
\end{align*}
Similarly, we know from the statement of the lemma
that the existing move sequence
satisfies these constraints:
\begin{align*}
\max_{x_2 \in X_2} I_1(x_2) < \min_{x_2 \in X_2} I_2(x_2) \le \max_{x_2 \in X_2} I_2(x_2).
\end{align*}
So we insert the two copies of the move $\tilde{y}_3$
as follows, to ensure that each cluster $(x_2, \tilde{y}_3)$ is solved:
\begin{align*}
\max_{x_2 \in X_2} I_1(x_2) < I_1( \tilde{y}_3) <  \min_{x_2 \in X_2} I_2(x_2) \le \max_{x_2 \in X_2} I_2(x_2) < I_2( \tilde{y}_3).
\end{align*}
To ensure that $I_1(\tilde y_1) < I_2(\tilde y_1) < I_1(\tilde y_3) < I_2(\tilde y_3)$,
we note that the above two constraints do not actually determine
the ordering of $I_2(\tilde y_1)$ and $I_1(\tilde y_3)$.
So we can pick an ordering where $I_2(\tilde y_1) < I_1(\tilde y_3)$,
which will ensure that all clusters are solved.
\qed
\end{proof}
}

\begin{figure}[t]
\centering
\setlength{\unitlength}{0.01in}
\begin{picture}(320, 307)
\put(10, 0){\includegraphics[width=3.0in]{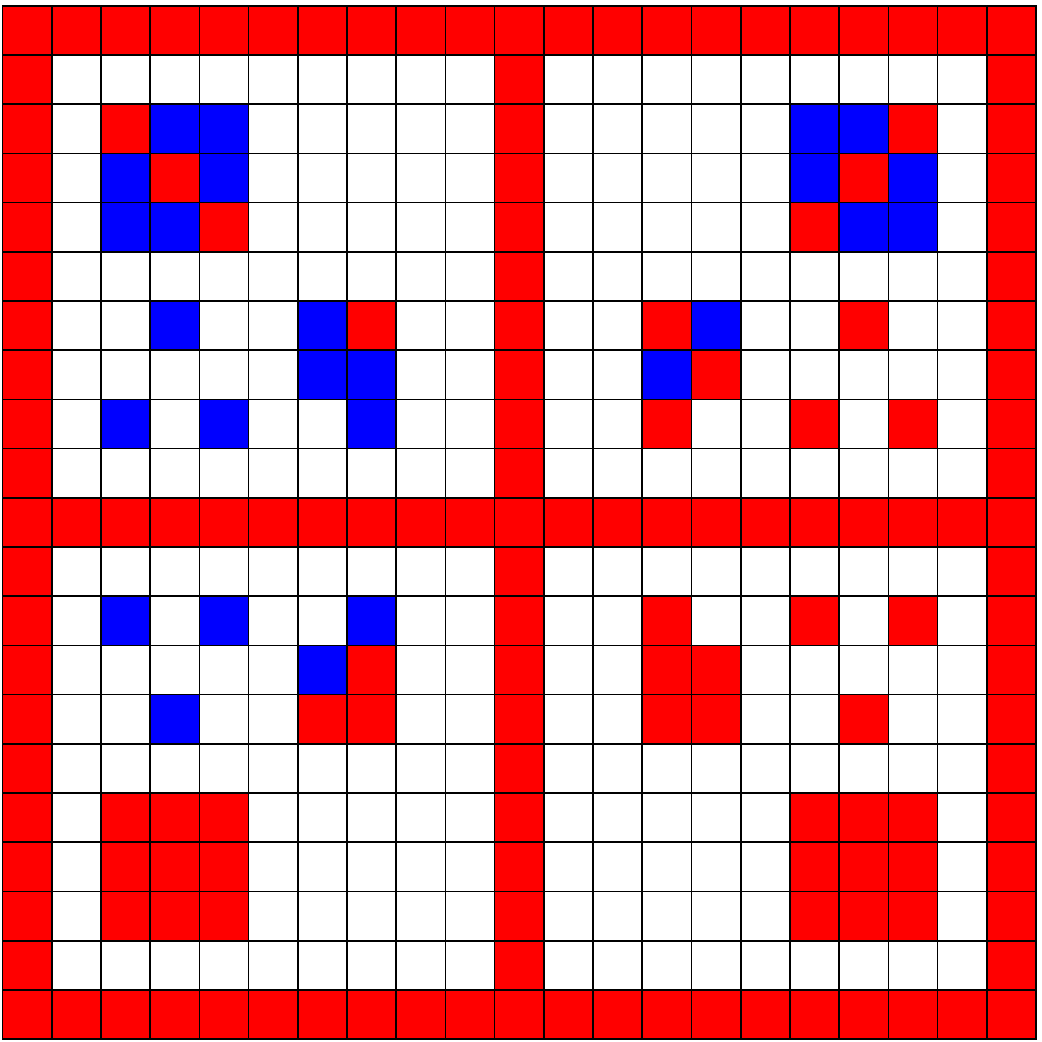}}
\put(41, 303){$x_1$}
\put(55, 303){$x_2$}
\put(69, 303){$x_3$}
\put(97, 303){$\tilde{x}_1$}
\put(111, 303){$\tilde{x}_2$}
\put(-3, 261){$y_1$}
\put(-3, 247){$y_2$}
\put(-3, 233){$y_3$}
\put(-3, 204){$\tilde{y}_1$}
\put(-3, 190){$\tilde{y}_2$}
\put(-3, 176){$\tilde{y}_3$}
\end{picture}
\caption{
A sample of the betweenness gadget
from Lemma~\ref{lemma:gadget-between}.
Important cubies are red (solved) and blue (unsolved).
Unimportant cubies are white.
Any ideal solution 
must either have $I_1(x_1) < I_1(x_2) < I_1(x_3)$
or $I_1(x_3) < I_1(x_2) < I_1(x_1)$.
}
\label{figure:gadget-between}
\end{figure}

\both{
\begin{lemma}
\label{lemma:gadget-between}
Given three columns $x_1, x_2, x_3 \le \lfloor n / 2 \rfloor$,
there is a gadget
using six extra rows and two extra columns
ensuring that $I_1(x_2)$
lies between $I_1(x_1)$ and $I_1(x_3)$.
As a side effect, this gadget also forces
$I_2(x_2) < I_2(x_1)$, $I_2(x_2) < I_2(x_3)$,
and
\begin{align*}
\max_{x \in \{x_1, x_2, x_3\}} I_1(x) < \min_{x \in \{x_1, x_2, x_3\}} I_2(x).
\end{align*}
\end{lemma}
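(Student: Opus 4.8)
The plan is to assemble the betweenness gadget from one copy of the ``before'' gadget of Lemma~\ref{lemma:gadget-before} --- which supplies the two extra columns $\tilde x_1,\tilde x_2$ and three of the six extra rows, say $\tilde y_1,\tilde y_2,\tilde y_3$ --- together with three further ``linking'' rows $y_1,y_2,y_3$ carrying a short chain of clusters on the input columns $x_1,x_2,x_3$. I work throughout inside ideal solutions, so that every $H_1$- and $V_1$-type move occurs exactly twice, and I use only the elementary facts about single clusters: a cluster in column $c$ and row $r$ in the state of Fig.~\ref{figure:blue-left} forces $I_1(c)<I_1(r)<I_2(c)<I_2(r)$; one in the state of Fig.~\ref{figure:blue-up} forces $I_1(r)<I_1(c)<I_2(r)<I_2(c)$; and a solved \emph{important} cluster forces the two occurrences of $c$ and the two of $r$ to be non-crossing as intervals (nested or disjoint), since by Lemma~\ref{lemma:nxnx1-group-config} the alternating interleaving of two $c$-moves and two $r$-moves is the unique one that leaves the solved state.

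First I would instantiate Lemma~\ref{lemma:gadget-before} with $X_1=\{x_2\}$ and $X_2=\{x_1,x_3\}$. This immediately produces the two advertised side effects $I_2(x_2)<I_2(x_1)$ and $I_2(x_2)<I_2(x_3)$, and, as one checks from the proof of that lemma, it leaves the first-occurrence order of $x_1,x_2,x_3$ entirely unconstrained --- which is what allows the final betweenness relation to stay symmetric in $x_1$ and $x_3$. I would then add the three linking rows. On two of them I place a blue-up cluster $(x_2,y_1)$ and a blue-left cluster $(x_2,y_2)$, which sandwich $x_2$: $I_1(y_1)<I_1(x_2)<I_1(y_2)$ and $I_2(y_1)<I_2(x_2)<I_2(y_2)$. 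On the same two rows I put one solved important cluster for each of $x_1$ and $x_3$; because we already know $I_2(x_2)<I_2(x_1)$ and $I_2(x_2)<I_2(x_3)$, the non-crossing constraint forces the interval of $x_1$ (and likewise of $x_3$) to lie either entirely before $I_1(x_2)$ or entirely after $I_1(x_2)$. Finally, the clusters on the third linking row $y_3$ are chosen to rule out the two possibilities in which $x_1$ and $x_3$ land on the \emph{same} side of $x_2$'s first occurrence; with those gone, exactly one of $I_1(x_1),I_1(x_3)$ precedes $I_1(x_2)$ and the other follows it --- the betweenness conclusion --- and the same clusters certify $\max\{I_1(x_1),I_1(x_3)\}<I_2(x_2)$, which is the remaining side effect $\max_{x}I_1(x)<\min_{x}I_2(x)$.

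Given the right cluster layout, the forcing direction is simply a chase through the above inequalities. The main obstacle is the converse --- showing the gadget forbids nothing beyond the stated constraints --- i.e.\ that any ideal solution of the rest of the puzzle whose restriction to $\{x_1,x_2,x_3\}$ already has $I_1(x_2)$ between $I_1(x_1)$ and $I_1(x_3)$, has $I_2(x_2)$ before both $I_2(x_1)$ and $I_2(x_3)$, and has each $I_1(x_i)$ before each $I_2(x_j)$, can be extended by interleaving the eight new lines (each appearing twice) so that every important cluster of the gadget is solved. As in the final part of the proof of Lemma~\ref{lemma:gadget-before}, I would handle the two orders of $x_1,x_3$ by two symmetric template interleavings of the auxiliary moves relative to the fixed positions of $I_1(x_i),I_2(x_i)$, check cluster by cluster that each template solves the gadget, and verify that the template can always be inserted --- which should reduce exactly to the three assumed inequalities. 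The genuinely delicate design step is choosing the linking clusters, and in particular the clusters on $y_3$, so that such a template exists simultaneously for both orders of $x_1,x_3$ while still excluding the two extreme positions of $I_1(x_2)$; I expect this to require some experimentation with small instances, along the lines of the sample configuration shown in Fig.~\ref{figure:gadget-between}, before the layout is pinned down.
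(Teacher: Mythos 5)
Your overall plan matches the paper's at the top level: you invoke Lemma~\ref{lemma:gadget-before} with $X_1=\{x_2\}$, $X_2=\{x_1,x_3\}$ to get $I_2(x_2)<I_2(x_1)$ and $I_2(x_2)<I_2(x_3)$ (exactly what the paper does), and you correctly identify the elementary per-cluster facts (blue-left forces $I_1(c)<I_1(r)<I_2(c)<I_2(r)$, blue-up forces $I_1(r)<I_1(c)<I_2(r)<I_2(c)$, and a solved important cluster forbids the crossing interleaving). But there is a genuine gap: the entire content of the lemma is the concrete layout of clusters on the three new rows that simultaneously (i) rules out both ``same-side'' orders of $I_1(x_1),I_1(x_3)$ relative to $I_1(x_2)$, (ii) yields the side effect $\max_x I_1(x)<\min_x I_2(x)$, and (iii) admits the converse (realizability) argument --- and you leave precisely this unspecified, saying the clusters on $y_3$ would be ``chosen'' and that pinning them down ``requires experimentation.'' Your partial layout (blue-up at $(x_2,y_1)$, blue-left at $(x_2,y_2)$, one solved cluster each for $x_1,x_3$) gives only the dichotomy that each of $I_1(x_1),I_1(x_3)$ falls on one side of $I_1(x_2)$; nothing written excludes both falling on the same side, and it is far from clear that a single additional row can do so while remaining satisfiable for both mixed orders. (Also, a small imprecision: the ``interval of $x_1$ entirely before $I_1(x_2)$'' is impossible since $I_2(x_1)>I_2(x_2)>I_1(x_2)$; you presumably mean only that $I_1(x_1)$ is forced to one side.)

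For comparison, the paper's construction is symmetric and uses all three rows at once: blue-up clusters at the six off-diagonal positions $(x_1,y_2),(x_1,y_3),(x_2,y_1),(x_2,y_3),(x_3,y_1),(x_3,y_2)$ and solved important clusters on the diagonal $(x_1,y_1),(x_2,y_2),(x_3,y_3)$. Each forbidden case ($I_1(x_2)$ first, or $I_1(x_2)$ last) is killed by a four-term cyclic chain of strict inequalities, e.g.\ $I_1(x_1)<I_2(y_3)<I_1(x_3)<I_2(y_1)<I_1(x_1)$, which genuinely needs the interplay of two rows per case --- this is the mechanism your sketch still has to invent. The converse direction is then handled, as you anticipate, by exhibiting explicit interleavings of $y_1,y_2,y_3$ for each of the four orderings of the $x_i$-moves allowed by the stated constraints; you would need to do the analogous verification for whatever layout you settle on. As it stands, the proposal is a plausible plan plus correct local lemmas, but the betweenness-forcing gadget itself --- the heart of the proof --- is missing.
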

}

\later{
\begin{proof}
Use a copy of the gadget
from Lemma~\ref{lemma:gadget-before}
to force $I_2(x_2) < I_2(x_1)$ and $I_2(x_2) < I_2(x_3)$.
Let $y_1, y_2, y_3$ be three rows
not used elsewhere in the construction.
Make each \CCS{} in the set
\begin{align*}
\{(x_1, y_2), (x_1, y_3), (x_2, y_1),
(x_2, y_3), (x_3, y_1), (x_3, y_2)\}
\end{align*}
have the configuration depicted in Fig.~\ref{figure:blue-up}.
Make \CCSs{} $(x_1, y_1)$,
$(x_2, y_2)$, and $(x_3, y_3)$
remain solved.
Mark all of these \CCSs{} important.
Now we consider what form
an ideal solution could take,
given these restrictions.

First, consider the case where
$I_1(x_2) < I_1(x_1)$ and $I_1(x_2) < I_1(x_3)$.
Because of the configurations
of \CCSs{} $(x_2, y_1)$ and $(x_2, y_3)$,
we know that
\begin{align*}
I_1(y_1) < I_1(x_2) < I_1(x_1), &&
I_1(y_3) < I_1(x_2) < I_1(x_3), \\
I_2(y_1) < I_2(x_2) < I_2(x_1), &&
I_2(y_3) < I_2(x_2) < I_2(x_3).
\end{align*}
Given these constraints,
and the requirement that
the \CCS{} $(x_1, y_1)$  remain solved,
it must be that $I_2(y_1) < I_1(x_1)$.
Similarly, because the \CCS{} $(x_3, y_3)$
must remain solved,
it must be that $I_2(y_3) < I_1(x_3)$.
In order to solve the \CCSs{}
$(x_1, y_3)$ and $(x_3, y_1)$,
it must be that $I_1(x_1) < I_2(y_3)$
and $I_1(x_3) < I_2(y_1)$.
Therefore
\begin{align*}
I_1(x_1) < I_2(y_3) < I_1(x_3) < I_2(y_1) < I_1(x_1),
\end{align*}
a contradiction.
Hence this case cannot happen.

Now consider the case where
$I_1(x_1) < I_1(x_2)$ and $I_1(x_3) < I_1(x_2)$.
Then we know that the following inequalities hold:
\begin{align*}
I_1(x_1) < I_1(x_2) < I_2(x_2) < I_2(x_1)
&& \textrm{and} &&
I_1(x_3) < I_1(x_2) < I_2(x_2) < I_2(x_3).
\end{align*}
Because of the configurations
of \CCSs{} $(x_2, y_1)$ and $(x_2, y_3)$,
this sandwiching implies that
\begin{align*}
I_1(x_1) < I_2(y_1) < I_2(x_1), &&
I_1(x_1) < I_2(y_3) < I_2(x_1), \\
I_1(x_3) < I_2(y_1) < I_2(x_3), &&
I_1(x_3) < I_2(y_3) < I_2(x_3).
\end{align*}
To ensure that \CCS{} $(x_1, y_1)$ still remains solved,
we need $I_1(x_1) < I_1(y_1)$.
Given the configuration of \CCS{} $(x_1, y_3)$,
we need $I_1(y_3) < I_1(x_1)$.
To ensure that \CCS{} $(x_3, y_3)$ still remains solved,
we need $I_1(x_3) < I_1(y_3)$.
Given the configuration of \CCS{} $(x_3, y_1)$,
we need $I_1(y_1) < I_1(x_3)$.
Thus
\begin{align*}
I_1(y_3) < I_1(x_1) < I_1(y_1) < I_1(x_3) < I_1(y_3),
\end{align*}
a contradiction.
Hence this case is also impossible.

Because neither of the two cases are possible,
$I_1(x_2)$ must lie between $I_1(x_1)$ and $I_1(x_3)$,
which is precisely what we wanted
this gadget to enforce.
We must also show that this gadget
does not enforce any constraints
other than the ones expressed in the lemma.
Given any solution
which satisfies the requirements given in the lemma,
we must be able to insert the moves 
for the new rows and columns
in such a way that all important clusters
will be solved.
The extra constraints on the existing move sequence
ensure the following:
\begin{align*}
\max_{x \in \{x_1, x_3\}} I_1(x) < \min_{x \in \{x_1, x_3\}} I_2(x).
\end{align*}
So we know that we can place the moves for
the extra rows and columns
used by the gadget from Lemma~\ref{lemma:gadget-before}.
We need only determine
where to insert the moves
for the three extra rows added by this gadget.

The constraints given in the statement of the lemma
allow for four different possible orderings
of all of the $x_1, x_2, x_3$ moves.
We consider each case separately.
\begin{enumerate}
\item $I_1(x_1) < I_1(x_2) < I_1(x_3) < I_2(x_2) < I_2(x_1) < I_2(x_3)$.
Then we insert the moves $y_1, y_2, y_3$
so that the following is a subsequence of the move sequence:
\begin{align*}
y_2, y_3, x_1, y_1, x_2, y_3, x_3, y_1, x_2, y_2, x_1, x_3.
\end{align*}
\item $I_1(x_1) < I_1(x_2) < I_1(x_3) < I_2(x_2) < I_2(x_3) < I_2(x_1)$.
Then we insert the moves $y_1, y_2, y_3$
so that the following is a subsequence of the move sequence:
\begin{align*}
y_2, y_3, x_1, y_1, x_2, y_3, x_3, y_1, x_2, y_2, x_3, x_1
\end{align*}
\item $I_1(x_3) < I_1(x_2) < I_1(x_1) < I_2(x_2) < I_2(x_1) < I_2(x_3)$.
Then we insert the moves $y_1, y_2, y_3$
so that the following is a subsequence of the move sequence:
\begin{align*}
y_1, y_2, x_3, y_3, x_2, y_1, x_1, y_3, x_2, y_2, x_1, x_3
\end{align*}
\item $I_1(x_3) < I_1(x_2) < I_1(x_1) < I_2(x_2) < I_2(x_3) < I_2(x_1)$.
Then we insert the moves $y_1, y_2, y_3$
so that the following is a subsequence of the move sequence:
\begin{align*}
y_1, y_2, x_3, y_3, x_2, y_1, x_1, y_3, x_2, y_2, x_3, x_1
\end{align*}
\end{enumerate}
\qed
\end{proof}
}

The \concept{betweenness problem} is a known
NP-hard problem \cite{GareyJohnson79,Opatrny79}.
In this problem,
we are given a set of triples $(a, b, c)$,
and wish to find an ordering on all items
such that, for each triple,
either $a < b < c$ or $c < b < a$.
In other words, for each triple,
$b$ should lie between $a$ and $c$
in the overall ordering.
Lemma~\ref{lemma:gadget-between}
gives us a gadget
which would at first seem
to be perfectly suited
to a reduction from the betweenness problem.
However, because the lemma places
additional restrictions on the order of all moves,
we cannot reduce directly from betweenness.

Instead, we provide a reduction
from another known NP-hard problem,
Not-All-Equal 3-SAT~\cite{GareyJohnson79,Schaefer78}.
In this problem, sometimes known as $\ne$-SAT,
the input is a 3-CNF formula $\phi$
and the goal is to determine whether
there exists an assignment to the variables of $\phi$
such that there is at least one true literal
and one false literal in every clause.
Our reduction from $\ne$-SAT to ideal Rubik solutions
closely follows the reduction from hypergraph 2-coloring
to betweenness \cite{Opatrny79}.

\both{
\begin{theorem}
Given a $\ne$-SAT instance $\phi$,
there exists an $n \times n \times 1$ configuration
and a subset of the cubies
that has an ideal solution
if and only if
$\phi$ has a solution, i.e., belongs to $\ne$-SAT.
\end{theorem}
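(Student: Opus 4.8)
The plan is to reduce from $\ne$-SAT, using the betweenness gadget of Lemma~\ref{lemma:gadget-between} as the basic building block and closely following Opatrny's reduction from hypergraph $2$-coloring to the betweenness problem~\cite{Opatrny79}. Throughout I would keep the restricted setting of this section: $n$ odd, all edge and cross cubies solved and important, every other cluster in one of the configurations of Figs.~\ref{figure:all-red}, \ref{figure:blue-left}, \ref{figure:blue-up}, and every cluster not used by a gadget left unimportant, so that an ideal solution exists only if it uses exactly two copies of each relevant move, all of type $H_1$ or $V_1$.

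\emph{Construction.} Given $\phi$ with variables $v_1,\dots,v_m$ and clauses $C_1,\dots,C_k$, I reserve a \emph{literal column} $c_\ell$ for every literal $\ell$ over the $v_i$, a single global \emph{pivot column} $p$, and, for each gadget instance, private auxiliary rows and columns, all pairwise disjoint; since only $O(m+k)$ rows and columns are used, $n$ stays polynomial in $|\phi|$. For each variable $v$ I install a copy of the gadget of Lemma~\ref{lemma:gadget-between} on the triple $(c_v, p, c_{\bar v})$, forcing $I_1(p)$ to lie between $I_1(c_v)$ and $I_1(c_{\bar v})$; declaring $\ell$ \textsc{true} iff $I_1(c_\ell) < I_1(p)$ then makes $v$ and $\bar v$ opposite, so any ideal solution induces a consistent assignment. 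For each clause $(\ell_1,\ell_2,\ell_3)$ I install a constant number of Lemma~\ref{lemma:gadget-between} gadgets on $c_{\ell_1},c_{\ell_2},c_{\ell_3}$, $p$, and a few private auxiliary literal columns, copying Opatrny's hyperedge gadget so that the orderings of $I_1(c_{\ell_1}),I_1(c_{\ell_2}),I_1(c_{\ell_3})$ relative to $I_1(p)$ that can be completed to a legal ideal solution are exactly those in which the three literals are not all on the same side of $p$.

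\emph{Correctness.} For the ($\Leftarrow$) direction, from a $\ne$-satisfying assignment I first order the non-auxiliary columns: $p$ in the middle, all true literals before it and all false literals after it, breaking the few intra-clause ties arbitrarily, which is possible precisely because every clause is $\ne$-satisfied. This order meets all betweenness constraints; invoking the ``no further constraints are enforced'' clauses of Lemmas~\ref{lemma:gadget-before} and~\ref{lemma:gadget-between}, I extend it to every auxiliary row and column, and then realize the resulting $I_1$/$I_2$ order as a concrete move sequence --- two copies of each move, interleaved so that each important cluster (which, by Figs.~\ref{figure:blue-left} and~\ref{figure:blue-up}, only constrains the relative order of its single row move and its single column move) is solved. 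That sequence is an ideal solution. For ($\Rightarrow$), an ideal solution uses two copies of each relevant move, so $I_1(c)<I_2(c)$ is well defined for every column $c$; all gadget clusters are important and get solved, so Lemma~\ref{lemma:gadget-between} certifies every betweenness constraint, the variable gadgets give a consistent assignment, and the clause gadgets force each clause $\ne$-satisfied, hence $\phi$ belongs to $\ne$-SAT.

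\emph{Main obstacle.} The delicate point is engineering the per-clause gadget so that the \emph{unavoidable side effects} of Lemma~\ref{lemma:gadget-between} --- the forced inequalities $I_2(\text{middle})<I_2(\text{end})$ for each triple and $\max I_1 < \min I_2$ over that triple --- never produce a cycle across the linearly many gadgets that share $p$ or a literal column. This is exactly what blocks a direct reduction from plain betweenness and is why the reduction routes through $\ne$-SAT and copies Opatrny's construction, in which every literal column appears only as an ``end'' (never as the ``middle''), so the induced $I_2$-inequalities extend to a consistent partial order. A secondary check is that distinct gadget instances, built on disjoint auxiliary rows and columns, compose without spurious interaction; this follows from the locality of Lemmas~\ref{lemma:gadget-before} and~\ref{lemma:gadget-between} together with their converse (``insertability'') parts, which guarantee that any solution satisfying the stated constraints can be completed by inserting the auxiliary moves.
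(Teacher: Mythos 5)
Your proposal is correct and follows essentially the same route as the paper: a reduction from $\ne$-SAT via the betweenness gadget of Lemma~\ref{lemma:gadget-between}, with a global pivot column (the paper's $r_{center}$), two literal columns per variable, a per-clause auxiliary column (the paper's $t_j$) arranged exactly as in Opatrny's hyperedge gadget, the assignment read off from which side of the pivot each literal's first move falls, and the same resolution of the gadget side effects (literal columns occur only as ``ends,'' so the forced $I_2$ inequalities order as pivot, then clause columns, then literal columns). The paper merely spells out the case analysis for inserting the clause column's first move, which your appeal to the gadgets' insertability clauses covers in outline.
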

}

\later{
\begin{proof}
We start with a single column $r_{center} \le \lfloor n / 2 \rfloor$.
For each variable $x_i$ in $\phi$,
we construct two columns
$s_{x_i}, s_{\overline{x_i}} \le \lfloor n / 2 \rfloor$.
We then add a copy of the gadget
from Lemma~\ref{lemma:gadget-between}
to ensure that $I_1(r_{center})$ lies
between $I_1(s_{x_i})$ and $I_1(s_{\overline{x_i}})$,
for each value of $i$.
For each clause $c_j = y_1 \vee y_2 \vee y_3$,
we add a new column $t_j$.
Then we add a copy of the gadget
from Lemma~\ref{lemma:gadget-between}
to ensure that $I_1(t_j)$ lies
between $I_1(s_{y_1})$ and $I_1(s_{y_2})$.
Then we add one more copy of the gadget
from Lemma~\ref{lemma:gadget-between}
to ensure that $I_1(r_{center})$ lies
between $I_1(t_j)$ and $I_1(s_{y_3})$.

Note that the additional constraints
forced by the gadget
from Lemma~\ref{lemma:gadget-between}
mean that all of the following inequalities must hold:
\begin{align*}
I_2(r_{center}) &< I_2(s_{x_i}),
&
I_2(r_{center}) &< I_2(s_{\overline{x_i}}), \\
I_2(t_j) &< I_2(s_{y_1}),
&
I_2(t_j) &< I_2(s_{y_2}), \\
I_2(r_{center}) &< I_2(t_j),
&
I_2(r_{center}) &< I_2(s_{y_3}).
\end{align*}
We can satisfy all of these constraints
by first performing the second move of $r_{center}$,
then performing the second moves of all of the $t$ columns,
and finally by performing the second moves of all of the $s$ columns.
The other constraint imposed by the gadget
from Lemma~\ref{lemma:gadget-between}
can be satisfied by
dividing up the moves into two sequential stages
such that all of the variable, clause, and center moves
are performed exactly once per stage.
Hence, the additional constraints enforced by our gadget
do not affect our ability to construct an ideal solution,
as long as $\phi$ is a member of $\ne$-SAT.

To see why this reduction works,
suppose that $\phi$ is a member of $\ne$-SAT.
We must convert an assignment to the variables of $\phi$
to an ideal solution to the subset of the Rubik's Cube
constructed above.
As noted in the previous paragraph,
we can choose an ordering of all of the second moves
that satisfies the gadgets we have constructed.
To arrange the first moves of all of the columns,
we pick an ordering of
the columns $s$ corresponding to literals
so that $I_1(s_y) < I_1(r_{center})$ for all true literals $y$
and $I_1(r_{center}) < I_1(s_z)$ for all false literals $z$.
The ordering of the literals themselves does not matter,
and can be picked arbitrarily.
This arrangement ensures
that for each $x_i$,
we have either
$I_1(s_{x_i}) < I_1(r_{center}) < I_1(s_{\overline{x_i}})$,
or $ I_1(s_{\overline{x_i}}) < I_1(r_{center}) < I_1(s_{x_i})$;
either way, there will be a way
to correctly arrange the extra columns and rows
used by the gadget from Lemma~\ref{lemma:gadget-between}.

We then must pick times for the first move
for each column $t_j$.
Let $c_j = y_1 \vee y_2 \vee y_3$ be the clause
corresponding to the column we are considering.
\begin{enumerate}
\item Consider the case where $y_1$ and $y_2$
are both true.
Then we pick an arbitrary location for $t_j$
between $s_{y_1}$ and $s_{y_2}$,
so that there is a way to
correctly arrange the extra columns and rows
used by the copy of the gadget
that ensures that $I_1(t_j)$ lies between
$I_1(s_{y_1})$ and $I_1(s_{y_2})$.
This means that $I_1(t_j) < I_1(r_{center})$.
Because $y_1$ and $y_2$ are both true,
then $y_3$ must be false,
and so $I_1(r_{center}) < I_1(y_3)$.
Hence, there will be a way
to correctly arrange the extra columns and rows
used by the copy of the gadget
from Lemma~\ref{lemma:gadget-between}.

\item Now consider the case
where $y_1$ and $y_2$ are both false.
Then we pick an arbitrary location for $t_j$
between $s_{y_1}$ and $s_{y_2}$.
Just as before,
this satisfies the gadget
from Lemma~\ref{lemma:gadget-between}
which forces $I_1(t_j)$ to lie between
$I_1(s_{y_1})$ and $I_1(s_{y_2})$.
This also means that $I_1(r_{center}) < I_1(t_j)$.
Because both $y_1$ and $y_2$ are false,
$y_3$ must be true,
and therefore $I_1(y_3) < I_1(r_{center}) < I_1(t_j)$.
So once again, the gadget is satisfied.

\item Now consider the case
where $y_1$ is true and $y_2$ is false,
or vice versa.
Then $y_3$ is either true or false.
If $y_3$ is true,
then $I_1(y_3) < I_1(r_{center})$,
and so we pick the location for $I_1(t_j)$
to be just slightly larger than $I_1(r_{center})$,
so that $I_1(y_3) < I_1(r_{center}) < I_1(t_j)$
and $I_1(t_j)$ lies between
$I_1(s_{y_1})$ and $I_1(s_{y_2})$.
Similarly, if $y_3$ is false,
then $I_1(r_{center}) < I_1(y_3)$,
and so we pick the location for $I_1(t_j)$
to be just slightly smaller than $I_1(r_{center})$,
so that $I_1(t_j) < I_1(r_{center}) < I_1(y_3)$
and $I_1(t_j)$ lies between
$I_1(s_{y_1})$ and $I_1(s_{y_2})$.
In either case,
the location for the first move $t_j$
will satisfy both of the gadgets created
using Lemma~\ref{lemma:gadget-between}.
\end{enumerate}
So if we have a solution to $\phi$,
then we also have an ideal solution.

Now we wish to prove the converse.
Suppose that we have an ideal solution.
Then we construct a solution to $\phi$
by setting $x_i$ to be true
if and only if $I_1(s_{x_i}) < I_1(r_{center})$.
Because of the gadgets that we constructed,
we know that all true literals $y$
have the property that $I_1(s_y) < I_1(r_{center})$,
while all false literals $z$
have the property that $I_1(r_{center}) < I_1(s_z)$.
To see why this works,
consider a clause $c_j = y_1 \vee y_2 \vee y_3$.
Assume, for the sake of contradiction,
that all three literals in the clause are true.
Then $I_1(s_{y_1}), I_1(s_{y_2}), I_1(s_{y_3}) < I_1(r_{center})$.
Because our betweenness gadget is working correctly,
we know that $I_1(t_j) < I_1(r_{center})$, as well.
This means that $I_1(r_{center})$ does not lie between
$I_1(t_j)$ and $I_1(s_{y_3})$,
which means that our solution could not have been ideal.
So our assumption must be wrong,
and not all of the literals in the clause are true.
A similar argument shows
that not all of the literals in the clause are false.
This means that the clause has
at least one true literal and at least one false literal,
and so $\phi$ is a member of $\ne$-SAT.
\qed
\end{proof}
}

%%%%%%%%%%%%%%%%%%%%%%%%%%%%%%%%%%%%%%%%%%%%%%%%%%%%%%%%%%%%%%%%%%%%%%%%

\section{Optimally Solving an $O(1) \times O(1) \times n$ Rubik's Cube}
\label{DP}

\ifabstract
\later{\section{Dynamic Program Details}}
\fi

For the $c_1 \times c_2 \times n$ Rubik's Cube with $c_1 \neq n \neq c_2$,
the asymmetry of the puzzle
leads to a few additional definitions.
We will call a slice \concept{short}
if the matching coordinate is $z$;
otherwise, a slice is \concept{long}.
A \concept{short move} involves rotating a short slice;
a \concept{long move} involves rotating a long slice.
We define \CCS{} $i$
to be the pair of slices $z = i$ and $z = (n - 1) - i$.
If $n$ is odd,
then \CCS{} $(n - 1) / 2$
will consist of a single slice.
This definition
means that any short move affects
the position and orientation of cubies
in exactly one \CCS{}.

\both{
\begin{lemma}
\label{lemma-short-moves}
Given any $c_1 \times c_2 \times n$ Rubik's Cube configuration,
and any \CCS{} $t$,
the number of short moves affecting that \CCS{}
in the optimal solution
is at most $\shortmoves$.
\end{lemma}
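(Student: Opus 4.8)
The plan is to prove the bound by a cut-and-paste (exchange) argument together with a pigeonhole over the \emph{local state} of \CCS{} $t$. The starting observation is that the short moves affecting \CCS{} $t$ are exactly the rotations of the two short slices $z=t$ and $z=(n-1)-t$, and that such a rotation moves \emph{only} cubies of \CCS{} $t$: no cubie of any other cluster, and indeed no other cubie at all, is touched. Hence deleting (or replacing) one of these moves inside a move sequence leaves the effect of the sequence on the rest of the cube completely unchanged; the only possible damage is that \CCS{} $t$ itself becomes unsolved.

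With this in hand I would fix the notion of local state: the position and orientation of the cubies of \CCS{} $t$ (no cubie ever leaves its cluster, and interior cubies of the cluster are never visible, so they may be ignored). A direct count of the reachable such states -- $2c_1c_2$ cubies, each with only a constant number of allowed positions and a constant amount of orientation/colour data -- is finite, and the stated quantity $\shortmoves$ is (one less than) such a count. Then I would take an optimal solution, write its short moves affecting \CCS{} $t$ in order as $M_1,\dots,M_k$, and examine the state of \CCS{} $t$ immediately after each, together with the state right before $M_1$. If $k$ exceeds $\shortmoves$, then among these $k+1$ states some two coincide, say the states after $M_a$ and after $M_b$ with $a<b$ (the case involving the initial state is identical).

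The pruning step is then: delete $M_{a+1},\dots,M_b$ from the solution but keep every other move, so that the whole cube outside \CCS{} $t$ still ends solved, and append a constant-length ``repair'' sequence that re-solves \CCS{} $t$ alone, i.e.\ a move sequence affecting no cluster other than $t$ -- exactly the kind of isolated-cluster solution obtained for the $n\times n\times n$ cube by commutator constructions, which applies here since \CCS{} $t$ with its available moves is a puzzle of constant size. The outcome is a solution with $b-a$ fewer short moves on $t$ and only $O(1)$ extra moves, so once $k$ is large enough this strictly shortens an optimal solution -- a contradiction. Pushing the pigeonhole to the point where no such deletion is available is what pins $k$ down to at most $\shortmoves$.

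The step I expect to be the main obstacle is making this pruning airtight. The danger is that the block of moves between the two repeated states contains long (row/column) moves, which genuinely couple \CCS{} $t$ to the rest of the cube and cannot simply be discarded; the point that rescues the argument is precisely the opening observation -- all of that coupling lives in the long moves, which we \emph{keep}, so after the deletion the only residual discrepancy is an element of \CCS{} $t$'s constant-size configuration group, which we correct in isolation. Two details need care: a parity bookkeeping step, analogous to the parity-fixing lemma used for the $n\times n\times n$ cube, ensuring the leftover configuration lies in the subgroup that can be corrected without disturbing other clusters; and a careful accounting of the repair cost against the number $b-a$ of deleted moves, which is exactly where the (deliberately loose) constant $2^{2c_1c_2+8(c_1+c_2)}$ enters.
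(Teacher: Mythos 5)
Your overall plan---pigeonhole on a constant-size record attached to \CCS{} $t$, plus an exchange argument against optimality---is the right family of ideas, but the invariant you chose creates a genuine gap that the paper's proof is engineered to avoid. You track the state of \CCS{} $t$ immediately after each short move $M_i$. When the states after $M_a$ and $M_b$ coincide and you delete $M_{a+1},\dots,M_b$, the long moves interleaved with the deleted block are kept, and since short and long moves acting on $t$ do not commute, the final configuration of $t$ is in general no longer solved; this is exactly why you need a repair. But then the exchange only shortens the solution when $b-a$ exceeds the repair length $C$, so a single coincidence yields no contradiction, and the pigeonhole pins the number of short moves down only to roughly $C$ times the number of local states---not to $\shortmoves$ as the lemma asserts. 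Worse, the repair itself is an unproven and nontrivial sub-claim for this puzzle: a discrepancy confined to cluster $t$ can in general be corrected only by using long moves whose net effect on the other clusters cancels, and the existence of such a correcting sequence of length bounded independently of $n$ (together with the parity issues you flag but do not resolve) is established nowhere for the $c_1\times c_2\times n$ cube; the $n\times n\times n$ commutator/GAP machinery is not shown to transfer, and proving an analogue would be comparable in difficulty to the lemma itself.

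The paper sidesteps both problems by pigeonholing on a different quantity: for each index $i$ it records $\SEQ{0,i}\COMP\LS{i+1,k}$, i.e.\ the configuration of $t$ that results from the prefix through $s_i\COMP L_i$ followed by all remaining long moves with the remaining short moves omitted. If two of these records coincide for $i<j$, then deleting $s_{i+1},\dots,s_j$ (and nothing else) keeps every long move, hence leaves every other cluster exactly as in the optimal solution, and leaves $t$ in exactly its original final, solved configuration---no repair, no parity bookkeeping, and a strictly shorter solution, contradicting optimality outright. Distinctness of the $k+1$ records, which all agree outside $t$, then bounds their number by the number of configurations of the end cluster, giving exactly the stated bound $\shortmoves$. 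If you want to salvage your variant, replace ``state of $t$ after $M_i$'' by ``state of $t$ after $M_i$ followed by the remaining long moves''; the repair step, and with it the loss in the bound, disappears.
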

}

\later{
\begin{proof}
Consider the subsequence of moves in the optimal solution
which affect $t$.
This should contain all of the long moves,
and only those short moves which rotate
one of the two slices in $t$.
For notation purposes,
we merge all consecutive long moves together
into compound long moves $L_0, \ldots, L_k$,
so that the sequence of moves is
$L_0 \COMP s_1 \COMP L_1 \COMP s_2 \COMP \ldots \COMP L_{k - 1} \COMP s_{k} \COMP L_k$.
For convenience,
we define $s_0$ to be the identity function,
so that we can write the sequence of moves as
$s_0 \COMP L_0 \COMP s_1 \COMP L_1 \COMP s_2 \COMP \ldots \COMP s_{k} \COMP L_k$.

We define the following to be the results of performing certain moves:
\begin{align*}
\SEQ{i, j} &= s_i \COMP L_i \COMP s_{i + 1} \COMP L_{i + 1} \COMP \ldots \COMP s_{j - 1} \COMP L_{j - 1} \COMP s_j \COMP L_j, \\
\LS{i, j} &= L_i \COMP L_{i + 1} \COMP \ldots \COMP L_{j - 1} \COMP L_j.
\end{align*}
Assume for the sake of contradiction
that there exist $i < j$ such that
$\SEQ{0, i} \COMP \LS{i + 1, k} = \SEQ{0, j} \COMP \LS{j + 1, k}$.
Composing by the inverse of $\LS{j + 1, k}$, we obtain
$\SEQ{0, i} \COMP \LS{i + 1, j} = \SEQ{0, j}$.
Therefore
\begin{align*}
\SEQ{0, i} \COMP \LS{i + 1, j} \COMP \SEQ{j + 1, k}  &= \SEQ{0, j} \COMP \SEQ{j + 1, k} \\
&= \SEQ{0, k}.
\end{align*}
Hence we can omit the moves $s_{i + 1}, \ldots, s_j$
while still ending up with the correct configuration
for $t$.
This means that there exists a sequence of moves,
shorter than the original,
which brings the Rubik's Cube into the same configuration.
But the original sequence was optimal.
Therefore, our assumption must be wrong,
and $\forall i < j$, $\SEQ{0, i} \COMP \LS{i + 1, k} \ne \SEQ{0, j} \COMP \LS{j + 1, k}$.

In both $\SEQ{0, i} \COMP \LS{i + 1, k}$
and $\SEQ{0, j} \COMP \LS{j + 1, k}$,
the set of long moves is the same,
so the configuration of all \CCSs{} other than $t$
must also be the same.
Therefore, the results of those moves
must differ in the configuration
of the \CCS{} $t$.
The \CCS{}
with the greatest number of configurations
is the \CCS{} which contains the ends of the Rubik's Cube
(i.e., $z = 0$ and $z = n - 1$),
because of the additional information given
by the exposed sides of the cube.
For that \CCS{},
the total number of different configurations is
$\le 2^{2c_1c_2} \cdot 4^{4(c_1 + c_2)} = 2^{2c_1c_2 + 8(c_1 + c_2)}$.
Each short move
affecting $t$
in the optimal solution
must lead to a new configuration of $t$,
and so the number of short moves must be $\le \shortmoves$.
\qed
\end{proof}
}

\both{
\begin{lemma}
\label{lemma-tour}
There exists a sequence of long moves
$\ell_1 \COMP \ell_2 \COMP \ldots \COMP \ell_m$,
where $m \le \tourmoves$,
such that:
\begin{enumerate}

\item
$\ell_1 \COMP \ell_2 \COMP \ldots \COMP \ell_m$
is the identity function; and

\item
for every long sequence $L$,
there exists some $i$ such that
$\ell_i \COMP \ell_{i + 1} \COMP \ldots \COMP \ell_m = L$.

\end{enumerate}
\end{lemma}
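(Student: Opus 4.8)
The plan is to realize the sequence $\ell_1, \ell_2, \ldots, \ell_m$ as (the reverse of) a closed walk in the Cayley graph of the group $G$ generated by the long moves, and then to bound $|G|$ tightly enough that a spanning-tree traversal of this graph uses at most $\tourmoves$ edges. Since $c_1 \neq n \neq c_2$, every long slice (of the form $x = j$ or $y = k$) is a non-square rectangle, so its only nontrivial move is the $180^\circ$ rotation, an involution; writing $a_j$ for the rotation of the slice $x=j$ and $b_k$ for the rotation of the slice $y=k$, we have $G = \langle a_0, \ldots, a_{c_1-1}, b_0, \ldots, b_{c_2-1}\rangle$, generated by involutions. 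Observe that the partial composites $P_i = \ell_i \circ \ell_{i+1} \circ \cdots \circ \ell_m$ (with the empty composite $P_{m+1}$ equal to the identity) satisfy $P_i = \ell_i \circ P_{i+1}$, so consecutive ones are related by a single long move; thus $P_1, P_2, \ldots, P_{m+1}$ trace out a walk in the Cayley graph of $G$, and properties~1 and~2 say exactly that this walk starts (at $P_1$) and ends at the identity and visits every element of $G$.

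First I would pin down the structure of $G$. The key is to look at the $c_1 c_2$ \emph{pillars} $C_{x,y} = \{(x,y,z) : z \in \group{n}\}$ of the cube: every long move either fixes a pillar setwise or carries it to another pillar, and on each pillar it touches its net effect along that pillar is the reversal $z \mapsto n-1-z$. This gives a homomorphism $\phi : G \to \mathrm{Sym}(c_1 c_2)$ recording the induced permutation of pillars, so $|\phi(G)| \le (c_1 c_2)!$. An element of $\ker\phi$ fixes every pillar setwise; tracking a single pillar as we apply, one at a time, the long moves of a word representing such an element shows that the element acts on that pillar either as the identity or as the reversal (according to the parity of the number of moves that touch it along the way). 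Since an element of $\ker\phi$ is determined by its restrictions to the pillars, $|\ker\phi| \le 2^{c_1 c_2}$, and hence $|G| = |\ker\phi| \cdot |\phi(G)| \le (c_1 c_2)! \cdot 2^{c_1 c_2}$; in particular $G$ is finite.

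Next I would build the tour. Form the Cayley graph of $G$ with vertex set $G$ and an edge $\{g, \ell g\}$ for each long move $\ell$; it is connected because the long moves generate $G$. Take a spanning tree, double each of its $|G|-1$ edges, and take an Euler circuit of the resulting connected, even-degree multigraph starting and ending at the identity: this is a closed walk $v_0 = \mathrm{id}, v_1, \ldots, v_M = \mathrm{id}$ with $M = 2(|G|-1)$ that visits every vertex, consecutive vertices being related by a single long move. Reading this walk in reverse and letting $\ell_i$ be the long move carrying $v_{M-i}$ to $v_{M-i+1}$ (using that long moves are their own inverses), a short induction gives $P_i = v_{M-i+1}$ for all $i$; in particular $P_1 = v_0 = \mathrm{id}$, which is property~1, and $\{P_1, \ldots, P_M\} = \{v_0, \ldots, v_{M-1}\} = G$, which is property~2 since every long sequence represents an element of $G$. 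Taking $m = M$, we get $m = 2(|G|-1) \le 2\,(c_1 c_2)! \cdot 2^{c_1 c_2} - 2 < \tourmoves$.

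I expect the bound on $|G|$ to be the only genuinely delicate step: one has to hit on the pillar partition as the right invariant and then see that, once the permutation of pillars is fixed, the only remaining freedom is a single reversal bit per pillar. The case of odd $n$ (where one \CCS{} degenerates to a single slice) needs no special handling, since the entire argument is phrased in terms of pillars and never refers to how pillars are grouped into \CCSs. The graph-traversal half is completely standard.
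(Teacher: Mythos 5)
Your proposal is correct and takes essentially the same route as the paper: the paper also bounds the number of distinct long-move effects by $(c_1c_2)!\cdot 2^{c_1c_2}$ via the $1\times 1\times n$ pillars (``boxes'') with a position permutation plus one flip bit each, and then obtains the tour by doubling the edges of a spanning tree of the graph on these states and following an Euler circuit from the identity. Your Cayley-graph/homomorphism packaging and the explicit suffix-product bookkeeping are just a more formal restatement of that same argument (and your counting of $\ker\phi$ is exactly the paper's ``each box can be oriented in two ways'' step), so there is nothing substantively different to report.
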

}

\later{
\begin{proof}
Every sequence of long moves
causes a rearrangement of the cubies in the Rubik's Cube.
However,
there are no long moves
which break up the boxes of cubies
of size $1 \times 1 \times n$
--- each box can be moved or rotated,
but the relative positions
of cubies within the box
are always the same.
There are $c_1c_2$ such boxes,
and each box can be oriented in two ways.
% explain why only two orientations
This means that there can be no more than
$(c_1c_2)! \cdot 2^{c_1c_2}$
reachable long configurations.
% emphasize that we're only interested in reachable nodes
We treat these long configurations as a graph,
with edges between configurations
that are reachable using a single long move.
If we take a spanning tree of the graph
and duplicate all edges of the tree,
then we can find an Eulerian cycle
which visits all of the nodes in the graph.
If we start at the identity configuration
and move along the cycle,
then we have a path of
$\le \tourmoves$ long moves
satisfying both of the above properties.
\qed
\end{proof}
}

\both{
\begin{lemma}
\label{lemma-long-moves}
Given any $c_1 \times c_2 \times n$ Rubik's Cube configuration,
the number of long moves in the optimal solution is at most $\longmoves$.
\end{lemma}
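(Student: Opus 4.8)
The plan is to argue by contradiction. Suppose some optimal solution $M$ contains more than $\longmoves$ long moves; I will build from it a strictly shorter solution, keeping all of $M$'s short moves (and, within each \CCS{}, their relative order) while rerouting the long moves through repeated copies of the universal tour of Lemma~\ref{lemma-tour}.

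First I would fix notation. Let $\ell_1 \COMP \cdots \COMP \ell_p$ be the subsequence of long moves of $M$, in the order performed, and for $0 \le k \le p$ let $\lambda_k = \ell_1 \COMP \cdots \COMP \ell_k$ (so $\lambda_0$ is the identity); each $\lambda_k$ lies in the group $G$ generated by long moves, and $|G| \le (c_1c_2)!\cdot 2^{c_1c_2}$ as noted in the proof of Lemma~\ref{lemma-tour}. Two facts drive the argument. (i) Long moves never break apart the $1\times 1\times n$ boxes, so the arrangement of those boxes determines the configuration of every \CCS{}; hence two sequences of long moves with the same composition have the same effect on every \CCS{}, and in particular any block of long moves composing to the identity of $G$ acts as the identity on the entire cube. (ii) Two short moves acting on different \CCSs{} rotate disjoint slices and therefore commute, so the only ordering constraint among the short moves of $M$ is that, for each \CCS{} $t$, its short moves occur in a fixed relative order --- and by Lemma~\ref{lemma-short-moves} there are at most $\shortmoves$ of them.

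For the construction, write the tour of Lemma~\ref{lemma-tour} as $\ell^*_1 \COMP \cdots \COMP \ell^*_m$ with $m \le \tourmoves$. Since the whole tour composes to the identity and every element of $G$ occurs as a suffix composition, the successive partial compositions of the tour, performed in order, pass through \emph{every} element of $G$ and return to the identity. I would take $M'$ to consist of $\shortmoves$ back-to-back copies of the tour, followed by one further partial copy chosen so that all long moves of $M'$ together compose to $\lambda_p$; this uses at most $(\shortmoves+1)\cdot\tourmoves = \longmoves$ long moves. The short moves of $M$ are threaded in afterwards: if $w_t \le \shortmoves$ is the number of short moves of $M$ affecting \CCS{} $t$, then the $k$th of them is inserted into the $k$th copy of the tour at a point where the running composition of $M'$'s long moves equals the composition $\alpha^t_k$ of the long moves preceding that short move in $M$. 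Such a point exists because every copy of the tour sweeps through all of $G$ from the identity; the $k$th short move of $t$ sits in the $k$th copy, so each \CCS{}'s internal order is preserved; and any short moves landing at the same spot lie in different \CCSs{} and hence commute.

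Finally I would verify that $M'$ solves the same configuration. Fixing a \CCS{} $t$: in both $M$ and $M'$ the moves touching $t$ are the same short moves in the same order, separated by blocks of long moves, and by construction the long moves preceding the $k$th short move of $t$ compose to $\alpha^t_k$ in both solutions, while all long moves together compose to $\lambda_p$ in both. Hence each intervening long-move block has the same composition in $M$ as in $M'$, so by fact (i) the two sequences have the same effect on $t$; since the \CCSs{} partition the cubies, $M'$ solves the cube. As $M'$ has the same short moves as $M$ but at most $\longmoves < p$ long moves, this contradicts optimality. The hard part will be precisely this bookkeeping: recording for each short move of $M$ the composition $\alpha^t_k$ it has seen, checking that a matching point can always be found inside the designated tour copy (including the degenerate case $\alpha^t_k = \lambda_0$, which pins it to a copy boundary), and confirming that the final partial copy realizing $\lambda_p$ costs at most $\tourmoves$ more moves so that the total long-move count really stays at $\longmoves$.
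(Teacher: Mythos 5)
Your proposal is correct and follows essentially the same argument as the paper: replace the long moves of the optimal solution by $\shortmoves$ copies of the tour from Lemma~\ref{lemma-tour} plus one partial copy realizing the total long composition (total at most $\longmoves$), then thread each cluster's $k$th short move into the $k$th copy at the point where the long-move composition matches, using Lemma~\ref{lemma-short-moves} to bound $k$. The only cosmetic difference is that you match prefix compositions with the partial copy at the end, whereas the paper places the partial tour first and matches suffixes working backwards from the last short move.
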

}

\later{
\begin{proof}
Assume,
for the sake of contradiction,
that the number of long moves in the optimal solution
is greater than $\longmoves$.
Then we can construct another solution
with the same number of short moves as the optimal,
and fewer long moves.
Let $\ell_1 \COMP \ell_2 \COMP \ldots \COMP \ell_m$ be
the sequence of long moves satisfying Lemma \ref{lemma-tour}.
Let $L$ be the sequence of long moves in the optimal solution,
and let $i$ be the index such that
$\ell_i \COMP \ell_{i + 1} \COMP \ldots \COMP \ell_m = L$.
We choose the sequence of long moves in our constructed solution to be
\begin{align*}
(\ell_i \COMP \ell_{i + 1} \COMP \ldots \COMP \ell_m) \COMP
\underbrace{
(\ell_1 \COMP \ell_2 \COMP \ldots \COMP \ell_m) \COMP
%(\ell_1 \COMP \ell_2 \COMP \ldots \COMP \ell_m) \COMP
\ldots \COMP
(\ell_1 \COMP \ell_2 \COMP \ldots \COMP \ell_m)
}_{\textrm{$\shortmoves$ times}}.
\end{align*}
This long move sequence has the same result as $L$.
We must also specify how to interleave the short moves
with the long move sequence.
For a fixed long move sequence,
the arrangement of short moves for one \CCS{}
does not affect any other \CCS{}.
Consequently, if we can correctly interleave
the short moves for one \CCS{},
we can correctly interleave the short moves
for all \CCSs{}.

Pick an arbitrary \CCS{}.
Consider the subsequence of moves in the optimal solution
which affect said \CCS{}.
For notation purposes,
we merge all consecutive long moves together
into compound long moves $L_0, \ldots, L_k$,
so that the sequence of moves is
$L_0 \COMP s_1 \COMP L_1 \COMP s_2 \COMP \ldots \COMP L_{k - 1} \COMP s_k \COMP L_k$.
We place the short moves
into the above sequence of long moves
starting with $s_k$.
Let $a$ be the index such that
$\ell_a \COMP \ldots \COMP \ell_m = L_k$.
We insert $s_k$ into the $k$th repetition of
$(\ell_1 \COMP \ldots \COMP \ell_m)$
between $\ell_{a - 1}$ and $\ell_{a}$.
This ensures that the sequence of long moves
occurring after $k$ will be equivalent to $L_k$.

In general,
say that we have placed $s_{i + 1}, \ldots, s_k$
in the $(i + 1)$st through $k$th repetitions of
$(\ell_1 \COMP \ldots \COMP \ell_m)$.
Say we want to place $s_i$ in the $i$th repetition.
Let $b$ be the index such that
$s_{i + 1}$ was placed
between $\ell_{b - 1}$ and $\ell_b$.
Let $a$ be the index such that:
\begin{align*}
\ell_a \COMP \ldots \COMP \ell_m
= L_i \COMP (\ell_1 \COMP \ldots \COMP \ell_{b - 1})^{-1}
= L_i \COMP (\ell_{b - 1}^{-1} \COMP \ell_{b - 2}^{-1} \COMP \ldots \COMP \ell_{2}^{-1} \COMP \ell_1^{-1})
\end{align*}
This ensures that the sequence of long moves between
$s_i$ and $s_{i + 1}$ will be:
\begin{align*}
\ell_a \COMP \ldots \COMP \ell_m \COMP \ell_1 \COMP \ldots \COMP \ell_{b - 1}
= L_i \COMP (\ell_1 \COMP \ldots \COMP \ell_{b - 1})^{-1} \COMP (\ell_1 \COMP \ldots \COMP \ell_{b - 1}) = L_i
\end{align*}
Hence,
if the number of long moves
in the optimal solution
is greater than $\longmoves$,
we can create a solution
with the same number of short moves
and fewer long moves.
This means that we have
a contradiction.
\qed
\end{proof}
}

\both{
\begin{theorem}
Given any $c_1 \times c_2 \times n$ Rubik's Cube configuration,
it is possible to find the optimal solution in time polynomial in $n$.
\end{theorem}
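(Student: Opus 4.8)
The plan is to assemble Lemmas~\ref{lemma-short-moves} and~\ref{lemma-long-moves} into an algorithm that enumerates a constant number of candidate long-move sequences and, for each one, solves the \CCSs{} independently by a constant-size shortest-path computation; since there are only $O(n)$ \CCSs{}, the whole procedure runs in time $O(n)$ for fixed $c_1,c_2$, which is polynomial in $n$.

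First I would establish a normal form for an optimal solution. By Lemma~\ref{lemma-long-moves} it contains at most $\longmoves$ long moves, a quantity depending only on $c_1$ and $c_2$; write it as $\sigma_0, \ell_1, \sigma_1, \ell_2, \ldots, \ell_p, \sigma_p$ with $p \le \longmoves$, where the $\ell_j$ are its long moves and each $\sigma_i$ is a (possibly empty) block of short moves. Every short move affects exactly one \CCS{}, and two short moves affecting different \CCSs{} rotate disjoint $z$-layers and hence act on disjoint sets of cubies, so they commute; thus inside each block $\sigma_i$ I may reorder the short moves to group together all those affecting a fixed \CCS{}, changing neither the resulting configuration nor the length. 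Each long move permutes the cubies of every \CCS{} (mapping each \CCS{} to itself) in a manner determined solely by the name of the move. Consequently, once the long-move sequence $L=(\ell_1,\dots,\ell_p)$ is fixed, the final configuration of a \CCS{} $t$ is determined by $L$ together with the short moves affecting $t$ in each of the $p+1$ blocks, and these per-\CCS{} short-move data may be chosen independently across \CCSs{}.

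Next I would enumerate and decompose. There are only $O(c_1+c_2)$ distinct long moves (long slices are never square, so each admits a single $180\degree$ rotation), and $p \le \longmoves = O(1)$, so there are only constantly many candidate sequences $L$ of length at most $\longmoves$, and by Lemma~\ref{lemma-long-moves} the true optimum uses one of them. For a fixed $L$ and a fixed \CCS{} $t$, let $\mu_L(t)$ be the minimum number of short moves, summed over all blocks, that brings $t$ from its given configuration to its solved configuration while $t$ is additionally acted on by $\ell_1,\dots,\ell_p$ between consecutive blocks. I would compute $\mu_L(t)$ by a shortest path in the graph whose vertices are pairs (configuration of $t$, block index $i\in\{0,\dots,p\}$): a short move on $t$ is a unit-cost edge keeping $i$ fixed, and ``advance block $i\to i+1$'' is a zero-cost edge applying $\ell_{i+1}$ to $t$; the target vertex is (solved, $p$). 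A single \CCS{} has only $O(1)$ configurations (Lemma~\ref{lemma-short-moves} even caps the relevant short-move count at $\shortmoves$) and $p=O(1)$, so this graph has $O(1)$ vertices and each $\mu_L(t)$ is found in $O(1)$ time. The best solution using long sequence $L$ then has length $p+\sum_t \mu_L(t)$, summed over the $\lceil n/2\rceil$ \CCSs{} (if some $\mu_L(t)$ is infinite, $L$ is infeasible and is discarded). Minimizing $p+\sum_t \mu_L(t)$ over the constantly many candidates $L$ yields the optimal length, by the normal-form argument, and the optimal move sequence itself is recovered by interleaving: in block $i$ perform, in any order, the block-$i$ short moves returned by every \CCS{}'s shortest path, then perform $\ell_{i+1}$. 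The running time is $(\text{constant})\times O(n)\times(\text{constant})=O(n)$.

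The hard part is the normal-form and independence argument rather than the bookkeeping: one must verify carefully that grouping each block's short moves by \CCS{} is a legal rewriting (short moves of distinct \CCSs{} commute and keep cubies within their layers), that a long move's effect on every \CCS{} is completely fixed once $L$ is named so the \CCSs{} genuinely decouple, and --- crucially --- that the individually optimal choices $\mu_L(t)$ are simultaneously realizable by one interleaved sequence whose total length is exactly $p+\sum_t\mu_L(t)$. Everything else is supplied by the earlier lemmas: Lemma~\ref{lemma-long-moves} (via Lemma~\ref{lemma-tour}) bounds $p$ by a constant, and Lemma~\ref{lemma-short-moves} bounds the short moves per \CCS{} by a constant, so the outer enumeration is of constant size and each per-\CCS{} subproblem is a constant-size shortest path.
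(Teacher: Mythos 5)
Your proposal is correct and follows essentially the same route as the paper: bound the long moves by a constant via Lemma~\ref{lemma-long-moves}, enumerate the constantly many long-move sequences, and for each one solve the \CCSs{} independently with constant work per cluster, summing and minimizing. The only difference is cosmetic---the paper brute-forces the at most $16^{1+\longmoves}$ per-cluster short-move combinations between consecutive long moves, while you phrase the same constant-size search as a shortest path over (cluster configuration, block index) pairs, and you spell out the commutation/decoupling argument that the paper leaves implicit.
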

}

\later{
\begin{proof}
By Lemma \ref{lemma-long-moves},
we know that the total number of long moves
in the optimal solution is at most $\longmoves$,
which is constant.
We also know that there are a total of
$c_1 + c_2$ possible long moves.
Hence, the total number of possible sequences of long moves is constant,
so we can enumerate all of these in time $O(1)$.

For each of the sequences of long moves,
we want to find the optimal solution using that sequence of long moves.
Because the long moves are fixed,
we can calculate the short moves
for each \CCS{}
independently.
To calculate the short moves for some fixed \CCS{},
we note that between two sequential long moves,
there are at most four different ways to rotate
each of the two slices in the \CCS{},
for a total of at most sixteen possible combinations of short moves.

For a given \CCS{},
we have to consider
$\le 16^{1 + \longmoves}$
possible combinations.
This is constant,
so we can try all possibilities
to see if they solve the \CCS{}.
We can pick the shortest of those.
If we perform this operation
for all \CCSs{},
we will have an optimal solution
for this particular sequence of long moves.
If we calculate this for all sequences of long moves,
then we can pick the overall optimal solution
by taking the sequence of minimum length.
\qed
\end{proof}
}

%%%%%%%%%%%%%%%%%%%%%%%%%%%%%%%%%%%%%%%%%%%%%%%%%%%%%%%%%%%%%%%%%%%%%%%%

\iffull
\section{Conclusion and Open Problems}

In this paper, we presented several new results.
First, we introduced a technique
for parallelizing the solution
to two types of generalized Rubik's Cubes.
As a result,
we showed that the diameter of the configuration space
for these two types of Rubik's Cubes
is $\Theta(n^2 / \log n)$.
In addition, we showed that it is NP-hard
to find the shortest sequence of moves
which solves a given subset of the cubies
in an $n \times n \times 1$ Rubik's Cube.
Finally, we showed that there exists a polynomial-time algorithm
for solving Rubik's Cubes with dimensions
$c_1 \times c_2 \times n$, where $c_1 \ne n \ne c_2$.

Our results leave several questions open.
The most obvious questions
concern the NP-hardness result:
whether it can be modified
to show the NP-hardness of
optimally solving the whole $n \times n \times 1$ Rubik's Cube,
and whether it can be further modified
to show the NP-hardness of
optimally solving the whole $n \times n \times n$ Rubik's Cube.
The other questions concern approximation algorithms.
In particular, is there a constant-factor polynomial-time approximation
algorithm for finding an approximately optimal solution sequence
from a given configuration?
The analogous question for the $n^2-1$ puzzle has a positive answer
\cite{Ratner-Warmuth-1990}.
The parallelism techniques we introduced for the diameter results
seem to be central to developing such an approximation algorithm.

\fi

%%%%%%%%%%%%%%%%%%%%%%%%%%%%%%%%%%%%%%%%%%%%%%%%%%%%%%%%%%%%%%%%%%%%%%%%

\bibliographystyle{plain}
\bibliography{rubiks}

%%%%%%%%%%%%%%%%%%%%%%%%%%%%%%%%%%%%%%%%%%%%%%%%%%%%%%%%%%%%%%%%%%%%%%%%

\appendix
\magicappendix

\end{document}